\def\JMPBUILD{0}

\ifnum\JMPBUILD=1
  \documentclass[aip,jmp,amsmath,amssymb,preprint,nofootinbib]{revtex4-2}
\else
  \documentclass[11pt,a4paper]{article}
\fi

\usepackage[T1]{fontenc}
\usepackage[utf8]{inputenc}
\usepackage{amsmath,amssymb,amsthm,mathtools,mathrsfs}
\usepackage{enumitem}
\usepackage{xcolor}
\usepackage{tikz}
\usetikzlibrary{arrows.meta,patterns}

\ifnum\JMPBUILD=0
  \usepackage{lmodern}
  \usepackage[margin=2.6cm]{geometry}
  \usepackage{caption}
  \usepackage{microtype}
  \usepackage[numbers,sort&compress]{natbib}
  \usepackage[raggedright]{titlesec}
  \usepackage{needspace}
  \usepackage{placeins}
\else
  \providecommand{\needspace}[1]{}
  \providecommand{\FloatBarrier}{\relax}
\fi

\usepackage[hidelinks]{hyperref}
\usepackage{orcidlink}
\usepackage[capitalize,nameinlink]{cleveref}

\ifnum\JMPBUILD=0
  \usepackage{lmodern}
  \usepackage[margin=2.6cm]{geometry}
  \usepackage{caption}
  \usepackage{microtype}
  \usepackage[numbers,sort&compress]{natbib}
  \usepackage[raggedright]{titlesec}
  \usepackage{needspace}
  \usepackage{placeins}
\else
  \providecommand{\needspace}[1]{}
  \providecommand{\FloatBarrier}{\relax}
\fi

\usepackage{sansmath}

\ifnum\JMPBUILD=0
  \DeclareCaptionFont{sanscaptionmath}{\sansmath}
\fi

\ifnum\JMPBUILD=1 \else
                  \fi

\allowdisplaybreaks
\providecommand{\doi}[1]{doi:\allowbreak\href{https://doi.org/#1}{\nolinkurl{#1}}}
\definecolor{figblue}{RGB}{0,102,204}
\definecolor{figred}{RGB}{204,0,0}
\definecolor{gapgreen}{RGB}{0,153,76}
\definecolor{figpurple}{RGB}{128,0,128}

\newcommand{\ed}{d}
\newcommand{\cl}{\operatorname{cl}}
\newcommand{\id}{\mathbf{1}}
\newcommand{\ket}[1]{\lvert{#1}\rangle}
\DeclareMathOperator{\Tr}{Tr}
\DeclareMathOperator{\intr}{int}

\newtheoremstyle{colonplain}
  {\topsep}{\topsep}{\itshape}{}{\bfseries}{:}{.5em}{}
\newtheoremstyle{colondefinition}
  {\topsep}{\topsep}{\normalfont}{}{\bfseries}{:}{.5em}{}
\newtheoremstyle{colonremark}
  {0.5\topsep}{0.5\topsep}{\normalfont}{}{\itshape}{:}{.5em}{}

\theoremstyle{colondefinition}
\newtheorem{definition}{Definition}[section]
\newtheorem{assumption}[definition]{Assumption}
\newtheorem{convention}[definition]{Convention}

\theoremstyle{colonplain}
\newtheorem{theorem}{Theorem}[section]
\newtheorem{lemma}[theorem]{Lemma}
\newtheorem{proposition}[theorem]{Proposition}
\newtheorem{corollary}[theorem]{Corollary}

\theoremstyle{colonremark}
\newtheorem{remark}[theorem]{Remark}

\makeatletter
\renewenvironment{proof}[1][\proofname]{\par
  \pushQED{\qed}%
  \normalfont \topsep6\p@\@plus6\p@\relax
  \trivlist
  \item[\hskip\labelsep
        \itshape
    #1:]\ignorespaces
}{%
  \popQED\endtrivlist\@endpefalse
}
\makeatother

\newcommand{\ABSTRACTTEXT}{%
We investigate the relation between fundamental complements of Bousso and Kaya and regional operator algebras in global de Sitter space.  For a finite union of open arcs on the time-symmetric circle of global \(\mathrm{dS}_2\), we show that the fundamental complement is determined by the largest complementary gap: a gap contributes precisely when its angular length is at least \(\pi\), and at most one gap can do so.  In higher-dimensional global de Sitter space, the corresponding criterion is containment of an open hemisphere.

We then study the operator-algebraic consequences in an abstract M\"obius-covariant chiral conformal net. Each regional algebra is crossed with its vacuum modular flow using a shared auxiliary clock.  We derive an exact criterion for inclusions between the resulting continuous cores and show that vacuum correlations obstruct this criterion when the larger region contains an additional spacelike arc.  A compatible family can nevertheless be obtained on the finite Boolean algebra generated by a fixed collection of separated arcs by using a split product state.

Motivated by the holograms prescription of Bousso and Penington, we introduce a representation-dependent commutant model for regional algebras.  When the protected gap overlaps a fixed reference arc, the assigned algebra is a proper von Neumann subalgebra of the reference type-\(\mathrm{II}_1\) factor.  For an explicit two-arc family, adding an arbitrarily short antipodal component removes the fundamental complement and changes the assigned algebra discontinuously to the full reference algebra.%
}

\newcommand{\KEYWORDTEXT}{de Sitter space, JT gravity, fundamental complement, hologram trace, operator algebras, algebraic quantum field theory, modular theory, generalized entanglement wedge}

\newcommand{\SHORTTITLE}{Toward an Observable Algebra for Gravitating de Sitter Space}

\ifnum\JMPBUILD=1
  \title{Toward an Observable Algebra for de Sitter Space: Gap Protection and Modular Dressings}
  \author{Hassan ElSayed}
  \email{hassan.elsayed@wisc.edu}
  \affiliation{\textit{Department of Physics, University of Wisconsin-Madison, Madison, WI 53706, U.S.A.}}
  \date{\today}
\else
  \title{\textbf{Toward an Observable Algebra for de Sitter Space: \\[2pt] Gap Protection and Modular Dressings \\[10pt]}}
  \author{Hassan ElSayed\,\orcidlink{0000-0003-1451-0186}\\[2pt]
    \small Department of Physics, University of Wisconsin--Madison,
    Madison, WI 53706, U.S.A.\\
    \small \texttt{hassan.elsayed@wisc.edu}}
  \date{\today}
\fi

\hypersetup{%
  pdftitle={\SHORTTITLE},
  pdfauthor={Hassan ElSayed},
  pdfkeywords={\KEYWORDTEXT}}

\begin{document}

\ifnum\JMPBUILD=1
  \begin{abstract}\ABSTRACTTEXT\end{abstract}
  \keywords{\KEYWORDTEXT}
  \maketitle
\else
  \maketitle
  \begin{abstract}\noindent\ABSTRACTTEXT\end{abstract}
  \medskip
  \noindent\textbf{Keywords:} \KEYWORDTEXT.
  \tableofcontents
\fi

\section{Introduction}
\label{sec:introduction}

Subregion duality gives one of the clearest links between geometry and quantum information in AdS/CFT \cite{Maldacena,GKP,Witten}. A boundary region is associated with a bulk entanglement wedge, whose semiclassical boundary is determined by the Ryu--Takayanagi prescription and its covariant and quantum extensions \cite{RyuTakayanagi,HRT,EngelhardtWall}. The corresponding operator statement is naturally formulated in terms of relative entropy and quantum error correction \cite{JLMS,DHW,ADH,HarlowTASI}. Extending this picture to a finite region in a closed universe is harder. There is no asymptotic boundary on which to anchor the construction, and gravitational constraints obstruct a naive tensor-product decomposition of the Hilbert space \cite{DonnellyFreidel}.

Bousso and Penington proposed a generalized entanglement wedge for gravitating regions \cite{BP}, following earlier work on the entropy of finite gravitating subsystems \cite{BalasubramanianCummings}. Bousso and Kaya refined this proposal by introducing the \emph{fundamental complement}, the part of the spacelike complement generated by timelike curves of infinite proper time \cite{BK}. This refinement isolates degrees of freedom that must remain complementary to the original region. Kaya, Rath, and Ritchie subsequently derive a closely related hollowing prescription for generalized entanglement wedges directly from the gravitational path integral \cite{HollowGrams}. The refined Bousso-Penington prescription has also been applied to a slowly evaporating black hole in JT-de Sitter space \cite{Y2H-Page_from_holograms}.  In that construction, the radiation remains gravitating, and the prescription is applied slice by slice in the adiabatic regime. This yields, to our knowledge, the first explicit Page curve derived from the Bousso-Penington proposal.

A second line of work has been operator-algebraic. Leutheusser and Liu showed how a type-\(\mathrm{III}_1\) von Neumann algebra, and with it an emergent bulk time, arises in the large-\(N\) limit of holographic duality \cite{LeutheusserLiu2021,LeutheusserLiu2022}; Witten observed that imposing the gravitational constraint replaces such an algebra by a crossed product of type \(\mathrm{II}\) \cite{Witten2022}; and Chandrasekaran, Longo, Penington, and Witten showed that adjoining an observer clock to a static-patch algebra does the same in de Sitter space \cite{CLPW}. Penington and Witten then carried the programme through for JT gravity itself: in the setting with an asymptotic boundary, canonical quantization gives an algebra of boundary observables that is commutative in pure JT gravity and becomes a type-\(\mathrm{II}_\infty\) factor with trivial centre once bulk matter is included, on which density matrices and entropies are determined up to a rescaling and a shift respectively \cite{PeningtonWitten2023}. Together these developments suggest that the causal notion of a fundamental complement should have an operator-algebraic counterpart.

This paper investigates that relation in global \(\mathrm{dS}_2\), whose time-symmetric spatial slice is a circle. We combine a geometric classification of fundamental complements with a representation-dependent operator-algebraic model of regional observables. The latter is formulated for a chiral conformal net and isolates the compatibility conditions that a gravitational realization would have to satisfy.

\subsection{Results}
\label{sec:intro-results}

Let \(a\subset S^1\) be an admissible spatial region in the sense of \Cref{def:admissible-region} --- a finite union of open arcs with pairwise disjoint closures --- let \(A:=D(a)\) be its domain of dependence, and let \(b_1,\dots,b_m\) be the connected components of the regular-open complement \(a^\perp:=\intr(S^1\setminus a)\), with angular lengths \(\beta_k\). Our first result (\Cref{thm:gap-protection}) is the gap-protection rule
\begin{align}
  \widetilde A
  &=\bigcup_{\beta_k\geq\pi}D(b_k),
  \quad
  \widetilde a:=\widetilde A\cap\Sigma_0 ,
  \label{eq:intro-gap-rule}
\end{align}
where \(\widetilde A\) is the spacetime fundamental complement of \(A\) and \(\widetilde a\) its spatial trace. A complementary gap therefore contributes to the fundamental complement exactly when it occupies at least half of the circle. At most one gap can do so, which makes the multi-interval problem unusually simple: the answer depends on the largest gap, not on the total measure of \(a\). In particular a short arc placed near the antipode of another short arc splits the remaining complement into two unprotected gaps. Because the union in \eqref{eq:intro-gap-rule} has at most one term, it also satisfies \(\widetilde A=D(\widetilde a)\). \Cref{sec:higher-dimensions} shows that the threshold itself is dimension independent. For an open \(g\subseteq S^{d-1}\) the domain of dependence \(D(g)\) carries a complete timelike curve exactly when \(g\) contains an open hemisphere (\Cref{prop:dSd-threshold}), and at most one component of the complement can do so (\Cref{cor:dSd-unique}); at \(d=2\) an open hemisphere is an arc of length \(\pi\), which recovers the threshold above. In higher dimensions a complement component need not be a geodesic ball, so the hemisphere criterion replaces the closed formula \eqref{eq:intro-gap-rule}.

We next study a concrete modular dressing. Each conformal-net region is crossed by its own vacuum modular flow, while all regions share the same auxiliary clock representation. \Cref{thm:net-compatibility-criterion} gives an exact criterion for an inclusion of the concretely represented cores, and shows that it requires the modular flow of the larger region to restrict exactly to that of the smaller region. \Cref{prop:net-compatibility-fails} shows that vacuum correlations rule this out whenever the larger region contains a nonempty arc spacelike to the smaller one. Thus the independently constructed regional crossed products fail isotony. Using a split product state on a fixed separated partition instead, \Cref{prop:split-repair} produces an isotonic dressing on the finite Boolean family generated by its atomic arcs.

We then study the assignment
\begin{align}
  \mathcal A_{\rm model}(a)
  :=\mathcal M(\widetilde A)'\cap\mathcal M_{\rm ref},
  \label{eq:intro-ambient-commutant}
\end{align}
where the commutant is taken in a fixed common ambient representation, as a commutant-based model motivated by the fundamental-complement prescription. Here \(\mathcal M_{\rm ref}\) is the type-\(\mathrm{II}_1\) factor obtained as the positive-clock-spectrum corner of the continuous core of one reference static-patch algebra. Its piecewise definition assigns \(\mathcal A_{\rm model}(a)=\mathcal M_{\rm ref}\) when \(\widetilde A=\varnothing\). The nontrivial statement is \Cref{cor:properness-a1}: the assigned algebra is a \emph{proper} von Neumann subalgebra of \(\mathcal M_{\rm ref}\) whenever the protected gap overlaps the fixed reference arc \(R\). This includes every protected gap of length strictly greater than \(\pi\), as well as all threshold placements except \(R^\perp\). For the explicit family \(a_1=(0,\delta)\), combining properness with the empty-complement branch gives the discontinuity we call activation (\Cref{def:activation}, \Cref{thm:activation}): adding an arbitrarily short antipodal arc moves the assignment from a proper subalgebra to the whole reference algebra.

\subsection{Framework and Scope}
\label{sec:intro-scope}

The geometric theorem applies directly to global de Sitter space. The operator-algebraic analysis uses an abstract chiral conformal net as a controlled model of one light-ray sector, together with an auxiliary \(L^2(\mathbb R)\) clock. The comparison with JT gravity is correspondingly limited: \Cref{app:adm} derives a classical constraint relation between endpoint dilaton data and a one-sided matter charge, while the crossed-product clock remains part of the representation.

The construction leaves several structural questions open. The vacuum-modular dressing fails isotony, and isotony of the ambient-commutant assignment is unresolved when both fundamental complements are nonempty. Factoriality and commutant recovery for \(\mathcal A_{\rm model}(a)\) are also open. A JT realization would require an anomaly-free non-chiral matter theory, a gravitational clock, and coherent regional embeddings. York-time constructions \cite{ParrikarSake,BlommaertChen} offer a possible route, while \cite{SahuGEWAlgebras} treats the wedge-to-algebra map as structural input.

\Cref{sec:geometry} fixes the region conventions and proves the gap-protection rule. \Cref{sec:algebra} builds the common-clock representation in which a single region's modular dressing lives, and isolates in \Cref{sec:stress-tensor} everything that requires a stress-energy tensor. \Cref{sec:isotony} gives the exact criterion for a compatible inclusion of two dressed regions, shows that it fails for their own vacuum modular flows, and analyses the split-state repair. \Cref{sec:activation-model} defines the ambient-commutant model and proves the activation statement. \Cref{sec:discussion} draws out the consequences for the Bousso-Penington--Kaya proposal and for a possible future JT realization. Our notational conventions are as follows: lower-case letters denote spatial regions and capital letters spacetime regions, the spacelike and fundamental complements being the exceptions; a \emph{hologram}, in the sense of \cite{BP,BK}, is a worldtube and is written \(E\); and its intersection with a time slice is the \emph{hologram trace} of \Cref{def:hologram-trace}, written \(e\). The companion paper \cite{Y2H-Page_from_holograms} uses the same conventions.

\section{Geometry of Global \texorpdfstring{dS\(_2\)}{dS2} and Gap Protection}
\label{sec:geometry}

The global de Sitter metric in conformal coordinates is
\begin{align}
  \ed s^2 = \frac{L^2}{\cos^2\eta}\,(-\ed\eta^2+\ed\chi^2),\qquad
  -\frac{\pi}{2}<\eta<\frac{\pi}{2},\quad \chi\sim\chi+2\pi ,
  \label{eq:global-ds-metric}
\end{align}
with Ricci scalar \(R=2/L^2\). The time-reflection-symmetric Cauchy slice is \(\Sigma_0=\{\eta=0\}\), a circle \(S^1\) of circumference \(2\pi L\) whose angular coordinate is \(\chi\). We write \(M\) for the whole spacetime \eqref{eq:global-ds-metric} and \(\mathcal I^\pm\) for its future and past conformal boundaries \(\eta=\pm\pi/2\).

\subsection{Regions and Complements}
\label{sec:region-conventions}

We distinguish spatial from spacetime regions and use separate notation for set, regular-open, and causal complements.

\begin{definition}[Admissible spatial regions and their gaps]
\label{def:admissible-region}
A set \(a\subseteq S^1\) is an \emph{admissible spatial region} if
\begin{enumerate}[label=(\alph*)]
\item \(a=\bigsqcup_{i=1}^{n}I_i\) is a finite union of nonempty open arcs \(I_i\subsetneq S^1\) whose closures \(\overline{I_i}\) are pairwise disjoint; and
\item \(\overline a\ne S^1\), so that \(a\) is a proper subset.
\end{enumerate}
For such \(a\) we write
\begin{align}
  a^\perp:=\intr\bigl(S^1\setminus a\bigr)
  \label{eq:regular-open-complement}
\end{align}
for the \emph{regular-open complement}, and call the connected components \(b_1,\dots,b_m\) of \(a^\perp\) the \emph{gaps} of \(a\), with angular lengths \(\beta_k:=|b_k|\).
\end{definition}

We record the elementary properties used below. First, each \(I_i\) and each \(b_k\) is an open arc, and \(a\) and \(a^\perp\) are \emph{regular open}, that is \(a=\intr\overline a\) and \(a^\perp=\intr\overline{a^\perp}\); consequently
\begin{align}
  a^{\perp\perp}=a .
  \label{eq:double-perp}
\end{align}
Second, \(a^\perp\) is again an admissible spatial region, with gaps \(I_1,\dots,I_n\), and
\begin{align}
  |a|+\sum_{k=1}^{m}\beta_k = 2\pi ,
  \label{eq:length-sum}
\end{align}
because \(S^1\setminus(a\cup a^\perp)\) is the finite set of endpoints and therefore has zero measure. Third, the angular length of a gap is unchanged by adjoining or deleting its endpoints, so the threshold conditions \(\beta_k\ge\pi\) below do not depend on whether gaps are taken open or closed. This is the reason for working with regular open sets: the combinatorics of gap lengths is insensitive to endpoints, whereas domains of dependence and local algebras are not.

\begin{convention}[Complements]
\label{conv:complements}
We use three distinct notions.
\begin{enumerate}[label=(\roman*)]
\item The \emph{set complement} \(S^1\setminus a\), which is closed and contains the endpoints of \(a\).
\item The \emph{regular-open complement} \(a^\perp\) of \eqref{eq:regular-open-complement}. It is the object with respect to which \eqref{eq:double-perp} holds; \(S^1\setminus a\) itself is not involutive under \(\intr(S^1\setminus\,\cdot\,)\) except up to endpoints.
\item The \emph{causal complement} of a spacetime set \(X\subseteq M\),
\begin{align}
  X':=M\setminus\cl\bigl[I(X)\bigr],
  \label{eq:causal-complement}
\end{align}
where \(I(X)\) is the chronological future and past of \(X\) \cite{WaldGR,HawkingEllis}. A \emph{wedge} is a set with \(X=X''\).
\end{enumerate}
For an admissible \(a\) we write \(A:=D(a)\) for the domain of dependence, a spacetime region. Lower-case symbols always denote subsets of \(\Sigma_0\) and upper-case symbols always denote subsets of \(M\).
\end{convention}

Two facts relate (ii) and (iii) and are used repeatedly. For an admissible \(a\), the components of \(a\) have pairwise disjoint closures, so
\begin{align}
  D(a)=\bigsqcup_{i=1}^{n}D(I_i),
  \qquad
  D(a)'=D\bigl(a^\perp\bigr)=\bigsqcup_{k=1}^{m}D(b_k),
  \label{eq:D-and-complements}
\end{align}
and, by \eqref{eq:double-perp}, \(D(a)''=D(a)\); domains of dependence of admissible regions are wedges. We emphasize that \eqref{eq:D-and-complements} requires the regular-open convention: if a single interior endpoint is deleted from \(a\), then \(D(a)\) shrinks by a full causal diamond, and \(D(a)''\ne D(a)\). On the algebraic side the corresponding endpoint is filled in by strong additivity rather than by a geometric identity; see \Cref{ass:regional-net}(iii) and the discussion following it.

\subsection{The Fundamental Complement of One Gap}
\label{sec:fundamental-complement}

We use the definition of \cite{BK}. For a wedge \(X\subseteq M\) let \(\mathscr C(X')\) be the union of all timelike curves contained in \(X'\) whose proper time is infinite in both directions. The \emph{fundamental complement} of \(X\) is the causal completion
\begin{align}
  \widetilde X:=\mathscr C(X')'' ,
  \label{eq:fundamental-complement-definition}
\end{align}
with \(\,'\,\) as in \eqref{eq:causal-complement}. The completion step is essential: it is not enough to decide whether one complete curve exists. In \cite{BK} the gravitating region is itself written as a wedge \(a\) with fundamental complement \(\tilde a\); in the present notation their wedge is our \(A\) and their \(\tilde a\) is our \(\widetilde A\).

Let \(b\subset\Sigma_0\) be an open arc of angular length \(0<\beta<2\pi\). Rotate the circle so that \(b\) is centred at \(\chi=0\), and write
\[
  \operatorname{dist}(\chi,\chi_0)
  :=\min_{n\in\mathbb Z}|\chi-\chi_0+2\pi n|\in[0,\pi]
\]
for circular distance. Tracing the null curves \(\chi\pm\eta=\mathrm{constant}\) from the two endpoints gives
\begin{align}
  D(b)=\left\{(\eta,\chi):
  |\eta|<\frac{\pi}{2},\quad
  \operatorname{dist}(\chi,0)+|\eta|<\frac{\beta}{2}\right\}.
  \label{eq:interval-domain}
\end{align}
\Cref{fig:gap_protection} illustrates the two cases. For \(\beta<\pi\) this is the usual causal diamond, with tips at \(\eta=\pm\beta/2\) inside \(M\). For \(\beta=\pi\) its closure meets each conformal boundary in the single point \((\pm\pi/2,0)\). For \(\beta>\pi\) it meets each boundary in an open angular segment of length \(\beta-\pi\).

\begin{lemma}[Complete curves in an arc domain]
\label{lem:complete-curves}
Let \(C(b)\) denote the set of points lying on past- and future-infinite timelike curves contained in \(D(b)\). Then
\begin{align}
  C(b)=
  \begin{cases}
    \varnothing, & 0<\beta<\pi,\\
    D(b), & \pi\leq\beta<2\pi .
  \end{cases}
  \label{eq:complete-curve-set}
\end{align}
\end{lemma}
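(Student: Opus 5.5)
The plan is to work directly with the explicit description \eqref{eq:interval-domain} of \(D(b)\), and to use the fact that proper time in the metric \eqref{eq:global-ds-metric} is
\[
  \tau=\int \frac{L\sqrt{1-\dot\chi^2}}{\cos\eta}\,\ed\eta
\]
along a timelike curve parametrized by \(\eta\). Every inextendible timelike curve in \(M\) can be parametrized by \(\eta\), since \(\eta\) is a time function. For such a curve, the proper time to the future is finite exactly when \(\eta\) stays bounded away from \(\pi/2\). The reason is that \(1/\cos\eta\) is integrable only on intervals where \(\eta<\pi/2-\epsilon\). Conversely, a curve that reaches \(\eta\to\pi/2\) has infinite future proper time provided \(\sqrt{1-\dot\chi^2}\) is bounded below near the end. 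So the statement reduces to two questions about curves in \(D(b)\): whether they can reach both \(\mathcal I^+\) and \(\mathcal I^-\), and whether they can do so with infinite proper time.

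\textbf{Case \(\beta<\pi\).} By \eqref{eq:interval-domain}, every point of \(D(b)\) satisfies \(|\eta|<\beta/2<\pi/2\). The conformal factor \(1/\cos^2\eta\) is therefore bounded by \(1/\cos^2(\beta/2)\) on \(D(b)\). A timelike curve contained in \(D(b)\) has \(\eta\) ranging over a subinterval of \((-\beta/2,\beta/2)\). This bounds its proper time by
\[
  \frac{L\beta}{\cos(\beta/2)}<\infty .
\]
Hence no curve in \(D(b)\) has infinite proper time in either direction, and \(C(b)=\varnothing\).

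\textbf{Case \(\pi\le\beta<2\pi\).} The inclusion \(C(b)\subseteq D(b)\) holds by definition, so the work is to pass a complete curve through every point \(p=(\eta_0,\chi_0)\in D(b)\). First I would show that the central comoving geodesic \(\chi=0\), \(|\eta|<\pi/2\), lies in \(D(b)\) when \(\beta\ge\pi\). This holds because \(|\eta|<\pi/2\le\beta/2\). This curve has \(\dot\chi=0\), so its proper time
\[
  \int L\,\ed\eta/\cos\eta
\]
diverges at both ends.

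For a general point \(p\), I would build a curve through \(p\) that moves toward \(\chi=0\) and then stays there. For concreteness, take the future half, with \(\eta_0\ge0\) and \(\chi_0\in(0,\beta/2)\) after using the reflection symmetry. Set
\[
  \chi(\eta)=\chi_0-\lambda(\eta-\eta_0)
\]
with slope \(0<\lambda<1\) until \(\chi\) reaches \(0\), and \(\chi\equiv0\) afterwards; the corner can be smoothed if needed. Along the first segment the quantity \(\operatorname{dist}(\chi,0)+|\eta|\) changes at rate \(1-\lambda>0\), so it increases. The key estimate is that it stays below \(\beta/2\) until \(\chi\) hits \(0\). Membership at the endpoint of the segment requires
\[
  \eta_0+\chi_0/\lambda<\beta/2 .
\]
The available margin is \(\beta/2-\eta_0-\chi_0>0\), and taking \(\lambda\) close to \(1\) makes the left-hand side arbitrarily close to \(\eta_0+\chi_0\), so the condition holds. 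This choice of \(\lambda\) is the one delicate step. After the segment, the constraint reduces to \(\eta<\pi/2\le\beta/2\), which always holds.

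The past half is the mirror image. Since \(\lambda<1\) is fixed, both halves eventually coincide with \(\chi=0\), so \(\sqrt{1-\dot\chi^2}=1\) near both ends and the proper time diverges in both directions. This shows \(p\in C(b)\), and hence \(C(b)=D(b)\).

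I expect the main obstacle to be the threshold case \(\beta=\pi\). There \(D(b)\) pinches to the single boundary points \((\pm\pi/2,0)\), so the connecting curve must reach \(\chi=0\) strictly before \(\eta=\pi/2\). The strict inequality \(\eta_0+\chi_0<\beta/2\) at the starting point supplies exactly that margin.
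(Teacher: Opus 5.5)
Your argument is correct, but for \(\beta\ge\pi\) it uses a different construction from the paper's. The \(\beta<\pi\) case is the same in both: a curve confined to \(|\eta|<\beta/2\) has bounded proper time.

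\emph{The two constructions.} The paper joins \(p\) by straight segments of slope less than one to points \((\pm\pi/2,x_\pm)\) in the closure of the lifted diamond \(\{|x|+|\eta|<\beta/2,\ |\eta|<\pi/2\}\). It gets membership from convexity of that region. It must treat \(\beta=\pi\) separately, because there the admissible boundary segment shrinks to the point \((\pm\pi/2,0)\). You instead steer \(p\) onto the central worldline \(\chi=0\), which lies in \(D(b)\) exactly when \(\beta\ge\pi\). This treats threshold and supra-threshold gaps uniformly and exposes the mechanism directly; the central worldline is what the paper itself uses for \(\mathrm{dS}_d\) in \Cref{prop:dSd-threshold}. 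The price is a corner and some case bookkeeping.

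\emph{Two repairs in the bookkeeping.} First, when \(\beta>\pi\), points such as \((0,\chi_0)\) with \(\pi/2\le\chi_0<\beta/2\) lie in \(D(b)\). For them no \(\lambda<1\) brings the segment to \(\chi=0\) before \(\mathcal I^+\), since \(\chi_0/\lambda>\pi/2\). So it is not true that both halves eventually coincide with \(\chi=0\). The segment instead reaches \(\mathcal I^+\) with constant slope \(\lambda\). Divergence then follows from \(\sqrt{1-\lambda^2}>0\), the criterion you stated at the outset, and membership still follows from your endpoint inequality. Second, time reflection exchanges future and past halves. So the future half for \(\eta_0\ge0\) together with its mirror image does not cover the half that crosses \(\Sigma_0\) (future half from \(\eta_0<0\), past half from \(\eta_0>0\)). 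On that half \(\operatorname{dist}(\chi,0)+|\eta|\) first decreases and is not monotone. Its supremum is still attained at an end, and \(|\eta_0\pm\chi_0/\lambda|\le|\eta_0|+\chi_0/\lambda<\beta/2\) for \(\lambda\) near \(1\). Citing convexity of the lifted diamond, as the paper does, settles all of these membership checks at once.
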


\begin{proof}
Along a future-directed timelike curve \(\eta\) is strictly increasing. If the curve remains in a compact substrip \(|\eta|\leq\pi/2-\epsilon\), its proper time cannot diverge: there the conformal factor \(L/\cos\eta\) is bounded, so the Lorentzian length is bounded by a constant times the coordinate-time interval, which is at most \(\pi-2\epsilon\). Hence a curve infinite in both directions must approach both \(\mathcal I^-\) and
\(\mathcal I^+\).

If \(\beta<\pi\), \eqref{eq:interval-domain} forces \(|\eta|<\beta/2<\pi/2\) on \(D(b)\). No curve in \(D(b)\) reaches either conformal boundary, so \(C(b)=\varnothing\).

Now let \(\beta\geq\pi\) and fix \(p=(\eta_p,\chi_p)\in D(b)\). Work in the lift \(x\in\mathbb R\) of \(\chi\) with \(x_p\in(-\pi,\pi]\), so that \eqref{eq:interval-domain} reads \(|x_p|+|\eta_p|<\beta/2\) and the lifted region \(\{|x|+|\eta|<\beta/2\}\cap\{|\eta|<\pi/2\}\) is convex. Suppose first \(\beta>\pi\). The intervals
\begin{align*}
 J_+&=\left(x_p-\Bigl(\tfrac{\pi}{2}-\eta_p\Bigr),
               x_p+\Bigl(\tfrac{\pi}{2}-\eta_p\Bigr)\right)
       \cap\left(-\tfrac{\beta-\pi}{2},\tfrac{\beta-\pi}{2}\right),\\
 J_-&=\left(x_p-\Bigl(\tfrac{\pi}{2}+\eta_p\Bigr),
               x_p+\Bigl(\tfrac{\pi}{2}+\eta_p\Bigr)\right)
       \cap\left(-\tfrac{\beta-\pi}{2},\tfrac{\beta-\pi}{2}\right)
\end{align*}
are nonempty, which is exactly the strict inequality \(|x_p|+|\eta_p|<\beta/2\). Pick \(x_\pm\in J_\pm\). The straight segments from \((-\pi/2,x_-)\) to \(p\) and from \(p\) to \((\pi/2,x_+)\) have slopes of magnitude strictly smaller than one, and they remain in the lifted region: that region is open and convex, \(p\) lies in it, and both endpoints lie in its closure, so the half-open segments are contained in it. After smoothing at \(p\) they give a timelike curve in \(D(b)\) through \(p\); its proper time diverges at both ends because \(\int^{\pm\pi/2} \ed\eta/\cos\eta\) diverges.

At the threshold \(\beta=\pi\) the open intervals \(\bigl(-\tfrac{\beta-\pi}{2},\tfrac{\beta-\pi}{2}\bigr)\) are empty, so \(J_\pm=\varnothing\) and the construction above does not apply literally. Take \(x_\pm=0\) instead. The domain inequality gives \(|x_p|<\pi/2-|\eta_p|\le\pi/2\mp\eta_p\), so the two straight segments joining \(p\) to \((\pm\pi/2,0)\) again have slope of magnitude strictly smaller than one; and \((\pm\pi/2,0)\) lies in the closure of the lifted region because \(|0|+\pi/2=\beta/2\), so convexity keeps the half-open segments inside. The threshold case needs no continuity or limiting argument. Hence every point of \(D(b)\) lies on a complete curve for all \(\beta\geq\pi\).
\end{proof}

\begin{proposition}[Causal completion of one gap]
\label{prop:single-gap-completion}
For \(0<\beta<2\pi\),
\begin{align}
 C(b)'&=
 \begin{cases}
   M,&\beta<\pi,\\
   D(b)',&\beta\geq\pi,
 \end{cases}
 &
 C(b)''&=
 \begin{cases}
   \varnothing,&\beta<\pi,\\
   D(b),&\beta\geq\pi .
 \end{cases}
 \label{eq:single-gap-completion}
\end{align}
\end{proposition}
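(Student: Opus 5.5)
The plan is to read off both cases from \Cref{lem:complete-curves} and then reduce the remaining content to the wedge property of domains of dependence recorded in \eqref{eq:D-and-complements}.

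\emph{Subcritical case \(\beta<\pi\).} By \Cref{lem:complete-curves}, \(C(b)=\varnothing\). Since \(I(\varnothing)=\varnothing\) and the closure of the empty set is empty, \eqref{eq:causal-complement} gives \(\varnothing'=M\). For the second column I will compute \(M'=M\setminus\cl[I(M)]\). Every point of \(M\) lies in the chronological future of some other point, so \(I(M)=M\), which is already closed in \(M\). Hence \(M'=\varnothing\) and \(C(b)''=\varnothing\).

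\emph{Supercritical case \(\beta\ge\pi\).} By \Cref{lem:complete-curves}, \(C(b)=D(b)\), so \(C(b)'=D(b)'\) holds immediately. It then remains to show \(D(b)''=D(b)\) for a single open arc \(b\). I will get this from \eqref{eq:D-and-complements} applied to the admissible region \(a:=b\), which is admissible because \(\overline b\ne S^1\) when \(\beta<2\pi\). That equation gives \(D(b)'=D(b^\perp)\). A second application, to the admissible arc \(b^\perp\), together with \(b^{\perp\perp}=b\) from \eqref{eq:double-perp}, gives \(D(b)''=D(b^{\perp\perp})=D(b)\).

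\emph{Main obstacle.} The only substantive point is whether \eqref{eq:D-and-complements} is really valid when \(\beta\ge\pi\). In that regime \(D(b)\) reaches the conformal boundaries, and the complementary arc has length \(2\pi-\beta\le\pi\), so I should check that the formula does not tacitly assume a compact diamond. If needed, I will verify \(D(b)'=D(b^\perp)\) directly from the explicit description \eqref{eq:interval-domain}.

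For the inclusion \(D(b^\perp)\subseteq D(b)'\): a point of \(D(b^\perp)\) has no timelike connection to \(b\), and hence none to \(D(b)\). Points near the shared null boundary are removed exactly by the closure in \eqref{eq:causal-complement}.

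For the reverse inclusion: a point outside \(\cl[I(D(b))]\) has both of its past and future light-cone traces on \(\Sigma_0\) inside \(S^1\setminus\overline b\). Since \(I(D(b))=I(b)\), this means \(\operatorname{dist}(\chi,\chi_c)+|\eta|<(2\pi-\beta)/2\), where \(\chi_c\) is the centre of \(b^\perp\). That inequality is exactly membership in \(D(b^\perp)\).

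This is a routine check of the coordinate inequalities, and it holds uniformly in \(\beta\in(0,2\pi)\). It does not matter that the cone of \(b\) meets \(\mathcal I^\pm\), because conformal-boundary points are not in \(M\). With this check in place the proposition follows.
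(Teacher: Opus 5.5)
Your proposal is correct and follows the paper's proof. \Cref{lem:complete-curves} gives \(C(b)=\varnothing\) or \(C(b)=D(b)\); then \(\varnothing'=M\) and \(M'=\varnothing\), while \(D(b)'=D(b^\perp)\) and \(D(b)''=D(b^{\perp\perp})=D(b)\) follow from \eqref{eq:D-and-complements} and \eqref{eq:double-perp}.

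Your optional coordinate check of \(D(b)'=D(b^\perp)\) is a harmless extra, with one slip: \(I(D(b))\) and \(I(b)\) are not literally equal, since the open set \(D(b)\) lies in \(I(D(b))\) but \(b\not\subseteq I(b)\). The check only needs \(\cl[I(D(b))]=\cl[I(b)]\), which does hold.
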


\begin{proof}
For \(\beta<\pi\), \Cref{lem:complete-curves} gives \(C(b)=\varnothing\), and
\(\varnothing'=M\), \(M'=\varnothing\).
For \(\beta\geq\pi\) it gives \(C(b)=D(b)\). Since \(b\) is an open arc it is regular open, so \eqref{eq:D-and-complements} applies with \(a=b\):
\begin{align}
 D(b)'=D\bigl(b^\perp\bigr)=
 \left\{(\eta,\chi):
 \operatorname{dist}(\chi,\pi)+|\eta|<\frac{2\pi-\beta}{2}\right\},
 \label{eq:gap-chronological-complement}
\end{align}
the diamond of the complementary arc centred at \(\chi=\pi\), and \(D(b)''=D(b^{\perp\perp})=D(b)\). In particular, at \(\beta=\pi\) the complete curves that limit to the two single boundary points causally generate the entire half-circle diamond.
\end{proof}

\subsection{Gap Protection for Finite Unions of Arcs}
\label{sec:gap-protection}

\begin{theorem}[Global-\(\mathrm{dS}_2\) gap theorem]
\label{thm:gap-protection}
Let \(a\subset S^1\) be an admissible spatial region (\Cref{def:admissible-region}) with gaps \(b_1,\dots,b_m\) of angular lengths \(\beta_k\), and let \(A=D(a)\). Then the spacetime fundamental complement of
\(A\) is
\begin{align}
  \widetilde A = \bigcup_{\beta_k\ge \pi} D(b_k) ,
  \label{eq:gap-protection-result}
\end{align}
and this union has at most one term. A complementary gap is therefore protected precisely when its angular length is at least \(\pi\), the equality case included. Moreover, writing \(\widetilde a:=\widetilde A\cap\Sigma_0\),
\begin{align}
  \widetilde a=\bigcup_{\beta_k\ge\pi}b_k,
  \qquad
  \widetilde A = D(\widetilde a).
  \label{eq:spatial-trace-identity}
\end{align}
\end{theorem}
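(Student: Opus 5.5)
The proof reduces the multi-gap problem to the single-gap results (\Cref{lem:complete-curves}, \Cref{prop:single-gap-completion}) via the decomposition \eqref{eq:D-and-complements}. First, \(A'=D(a^\perp)=\bigsqcup_k D(b_k)\), and the diamonds \(D(b_k)\) are pairwise disjoint open sets. A timelike curve in \(A'\) is connected, so it lies in a single \(D(b_k)\). Hence
\[
  \mathscr C(A')=\bigcup_{k=1}^m C(b_k)=\bigcup_{\beta_k\ge\pi}D(b_k),
\]
where the second equality is \Cref{lem:complete-curves}.

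Second, I would show that at most one \(\beta_k\ge\pi\). The gaps are disjoint open arcs, and \(a\) is nonempty with positive measure. By \eqref{eq:length-sum}, \(\sum_k\beta_k=2\pi-|a|<2\pi\), so two gaps of length \(\ge\pi\) are impossible. This gives the "at most one term" claim. If no gap qualifies, then \(\mathscr C(A')=\varnothing\), and \(\widetilde A=\varnothing''=M'=\varnothing\), which matches the empty union. If exactly one gap \(b_k\) qualifies, then \(\widetilde A=\mathscr C(A')''=D(b_k)''=D(b_k)\) by \Cref{prop:single-gap-completion}; equivalently, \(D(b_k)''=D(b_k^{\perp\perp})=D(b_k)\) by \eqref{eq:double-perp} and \eqref{eq:D-and-complements} applied to the single arc \(b_k\). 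This establishes \eqref{eq:gap-protection-result}.

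For \eqref{eq:spatial-trace-identity}, I would use the explicit form \eqref{eq:interval-domain}: setting \(\eta=0\) gives \(D(b_k)\cap\Sigma_0=b_k\). Therefore \(\widetilde a\) is either \(\varnothing\) or the single protected gap \(b_k\), and in both cases \(D(\widetilde a)=\widetilde A\), using \(D(\varnothing)=\varnothing\).

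The only step needing care is the first one: the claim that complete curves cannot pass between different gap diamonds, and that the union of the \(C(b_k)\) is exactly \(\mathscr C(A')\). This holds because the \(D(b_k)\) are disjoint open sets; their closures touch only along null boundaries emanating from the endpoints of \(a\), and those boundaries are excluded from \(A'\). A curve inside their union is connected and therefore stays in one component. The remaining subtlety is the threshold \(\beta=\pi\), but \Cref{lem:complete-curves} already covers it, so no limiting argument is needed.
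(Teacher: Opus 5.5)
Your proposal is correct and follows the paper's proof essentially step for step. You decompose \(A'=\bigsqcup_k D(b_k)\), use connectedness of timelike curves with \Cref{lem:complete-curves} to identify \(\mathscr C(A')\), use \eqref{eq:length-sum} to rule out two protected gaps, invoke \Cref{prop:single-gap-completion}, and read off the trace at \(\eta=0\). One side remark is inaccurate: because each component of \(a\) has positive length, the closures of distinct \(D(b_k)\) are in fact disjoint rather than touching along null boundaries. This does not affect your argument, which only needs the \(D(b_k)\) to be pairwise disjoint open sets.
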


\begin{proof}
By \eqref{eq:D-and-complements} the causal complement of \(A\) is the disjoint union \(A'=\bigsqcup_{k=1}^{m}D(b_k)\), and the \(D(b_k)\) are distinct connected components of \(A'\). The image of a timelike curve is connected, so a complete timelike curve contained in \(A'\) lies in exactly one \(D(b_k)\). \Cref{lem:complete-curves} therefore gives
\[
 \mathscr C(A')=\bigsqcup_{\beta_k\geq\pi}D(b_k).
\]

Next, at most one gap can be protected. By \eqref{eq:length-sum} and \(|a|>0\) we have \(\sum_k\beta_k<2\pi\), so two distinct gaps of angular length at least \(\pi\) are impossible. Hence \(\mathscr C(A')\) is either empty or a single \(D(b_{k})\), and taking the causal completion cannot join different components. By \Cref{prop:single-gap-completion}, \(\widetilde A=\mathscr C(A')''\) is therefore respectively \(\varnothing\) or \(D(b_{k})\), which is \eqref{eq:gap-protection-result}.

For \eqref{eq:spatial-trace-identity}, setting \(\eta=0\) in \eqref{eq:interval-domain} gives \(D(b_k)\cap\Sigma_0=b_k\). If no gap is protected both sides of \eqref{eq:spatial-trace-identity} are empty, using \(D(\varnothing)=\varnothing\). If \(b_*\) is the unique protected gap then \(\widetilde A=D(b_*)\) and \(\widetilde a=b_*\), so
\(D(\widetilde a)=D(b_*)=\widetilde A\).
\end{proof}

\begin{figure}[!tb]
{\sffamily\sansmath
\centering
\makebox[\textwidth][c]{%
\resizebox{1.02\linewidth}{!}{%
\begin{tikzpicture}[x=1.0cm, y=1.25cm, >=Stealth]

\begin{scope}[shift={(0,0)}]
    \fill[fill=figred!12]
        (1.257, 0) -- (2.827, 1.571) -- (3.456, 1.571) -- (5.027, 0) -- (3.456, -1.571) -- (2.827, -1.571) -- cycle;
    \draw[thick, double] (0, 1.571) -- (6.283, 1.571) node[midway, above=4pt, font=\footnotesize] {$\mathcal{I}^+\ (\eta = +\pi/2)$};
    \draw[thick, double] (0, -1.571) -- (6.283, -1.571) node[midway, below=4pt, font=\footnotesize] {$\mathcal{I}^-\ (\eta = -\pi/2)$};
    \draw[thick, dashed] (0, -1.571) -- (0, 1.571);
    \draw[thick, dashed] (6.283, -1.571) -- (6.283, 1.571);
    \node[left=3pt, font=\footnotesize] at (0, -1.0) {$\chi=0$};
    \node[anchor=east, font=\footnotesize] at (6.20, -1.0) {$\chi=2\pi$};
    \draw[gray!50, thick] (0,0) -- (6.283,0);
    \node[left=4pt, font=\footnotesize\bfseries, gray!80!black] at (0,0) {$\Sigma_0$};
    \draw[line width=2.5pt, gapgreen] (1.257, 0) -- (5.027, 0);
    \draw[gapgreen, fill=white, line width=1pt] (1.257, 0) circle (2pt);
    \draw[gapgreen, fill=white, line width=1pt] (5.027, 0) circle (2pt);
    \node[above=2pt, font=\footnotesize, inner sep=1.5pt] at (2.1, 0.05) {$b$};
    \draw[thick, figred] (1.257, 0) -- (2.827, 1.571);
    \draw[thick, figred] (1.257, 0) -- (2.827, -1.571);
    \draw[thick, figred] (5.027, 0) -- (3.456, 1.571);
    \draw[thick, figred] (5.027, 0) -- (3.456, -1.571);
    \draw[very thick, figred, <->] (3.142, -1.48) -- (3.142, 1.48);
    \node[
      below=16pt,
      inner sep=0pt
    ] at (3.142,-1.8)
      {\begin{minipage}{7cm}
      \centering
      \footnotesize\sffamily\sansmath
      \textbf{(a)} Protected gap:
      $\beta=1.2\pi\ge\pi$\\[1pt]
      \textcolor{figred}{%
        $\widetilde{A}=D(b)\neq\varnothing$%
      }
      \end{minipage}};
\end{scope}

\begin{scope}[shift={(7.4,0)}]
    \fill[fill=figblue!12]
        (2.199, 0) -- (3.142, 0.942) -- (4.084, 0) -- (3.142, -0.942) -- cycle;
    \draw[thick, double] (0, 1.571) -- (6.283, 1.571) node[midway, above=4pt, font=\footnotesize] {$\mathcal{I}^+\ (\eta = +\pi/2)$};
    \draw[thick, double] (0, -1.571) -- (6.283, -1.571) node[midway, below=4pt, font=\footnotesize] {$\mathcal{I}^-\ (\eta = -\pi/2)$};
    \draw[thick, dashed] (0, -1.571) -- (0, 1.571);
    \draw[thick, dashed] (6.283, -1.571) -- (6.283, 1.571);
    \node[right=3pt, font=\footnotesize, text opacity=0] at (6.283, -1.2) {$\chi=2\pi$};
    \draw[gray!50, thick] (0,0) -- (6.283,0);
    \node[left=4pt, font=\footnotesize\bfseries, gray!80!black] at (0,0) {$\Sigma_0$};
    \draw[line width=2.5pt, gapgreen] (2.199, 0) -- (4.084, 0);
    \draw[gapgreen, fill=white, line width=1pt] (2.199, 0) circle (2pt);
    \draw[gapgreen, fill=white, line width=1pt] (4.084, 0) circle (2pt);
    \node[above=2pt, font=\footnotesize, inner sep=1.5pt] at (2.7, 0.0) {$b$};
    \draw[thick, figblue] (2.199, 0) -- (3.142, 0.942);
    \draw[thick, figblue] (2.199, 0) -- (3.142, -0.942);
    \draw[thick, figblue] (4.084, 0) -- (3.142, 0.942);
    \draw[thick, figblue] (4.084, 0) -- (3.142, -0.942);
    \fill[figblue] (3.142, 0.942) circle (2pt);
    \fill[figblue] (3.142, -0.942) circle (2pt);
    \draw[thick, <->, figblue] (3.142, -0.88) -- (3.142, 0.88);
    \node[anchor=west, font=\footnotesize, figblue, fill=white, inner sep=2pt, rounded corners=2pt]
        at (3.79, -0.45) {$\Delta\eta = \beta < \pi$};
    \node[
      below=16pt,
      inner sep=0pt
      ] at (3.142,-1.8)
        {\begin{minipage}{7cm}
        \centering
        \footnotesize\sffamily\sansmath
        \textbf{(b)} Unprotected gap:
        $\beta=0.6\pi<\pi$\\[1pt]
        \textcolor{figblue}{%
          $\widetilde{A}=\varnothing$%
        }
      \end{minipage}};
\end{scope}

\end{tikzpicture}%
}%
}
\caption{Penrose diagrams of global $\mathrm{dS}_2$ illustrating geometric gap protection (\Cref{thm:gap-protection}). The Cauchy slice $\Sigma_0=S^1$ sits at $\eta=0$ and light rays propagate at $45^\circ$; open circles indicate that the gaps are open arcs. \textbf{(a)}~A complementary gap $b$ of angular length $\beta=1.2\pi\ge\pi$ has a domain of dependence $D(b)$ of full conformal height $\Delta\eta=\pi$ that meets both conformal boundaries $\mathcal I^\pm$. It contains timelike curves of infinite proper time in both directions, one of which is drawn as the red double arrow, so $D(b)$ is the spacetime fundamental complement $\widetilde A$ of $A=D(b^\perp)$. \textbf{(b)}~A gap of angular length $\beta=0.6\pi<\pi$ has a diamond whose null boundaries meet in the bulk at $\eta=\pm\beta/2$ before reaching $\mathcal I^\pm$; no curve inside reaches a conformal boundary and the fundamental complement is empty.}
}
\label{fig:gap_protection}
\end{figure}

\begin{remark}[The spatial trace]
\label{rem:spatial-trace-caution}
A disconnected spacetime region need not equal the domain of dependence of its spatial trace, so \eqref{eq:spatial-trace-identity} rests on \Cref{thm:gap-protection} having already reduced \(\widetilde A\) to a single arc. We write \(\mathcal M(\widetilde A)\) for the algebra of the fundamental complement in \Cref{sec:activation-model}, and use \(\mathcal M(\widetilde A)=\mathcal M(D(\widetilde a))\) only where \eqref{eq:spatial-trace-identity} applies.
\end{remark}

We call gaps with \(\beta_k\geq\pi\) \emph{protected} and gaps with \(\beta_k<\pi\) \emph{unprotected}. Whether an unprotected gap enters a state-dependent hologram \(E\) is a separate, entropic question, discussed in \Cref{sec:refined-wedge-empty}.

\FloatBarrier
\subsection{The Threshold in Higher Dimensions}
\label{sec:higher-dimensions}

The half-circle threshold is not special to two dimensions. No proof below uses this subsection; we include it because it identifies what the threshold of \Cref{thm:gap-protection} really measures.

Let \(d\ge2\) and write global \(\mathrm{dS}_d\) in conformal coordinates as
\begin{align}
  \ed s^2=\frac{L^2}{\cos^2\eta}\bigl(-\ed\eta^2+\ed\Omega_{d-1}^2\bigr),
  \qquad -\frac{\pi}{2}<\eta<\frac{\pi}{2},
  \label{eq:dSd-metric}
\end{align}
with \(\ed\Omega_{d-1}^2\) the round metric of unit radius on \(\Sigma_0=S^{d-1}\). Let \(\operatorname{dist}\) be the induced geodesic distance, which takes values in \([0,\pi]\) and for which the antipode of \(p\) is the unique point at distance \(\pi\) from it; at \(d=2\) this is the circular distance of \Cref{sec:fundamental-complement}. Write \(\mathcal B_\theta(p):=\{q:\operatorname{dist}(q,p)<\theta\}\) for the open geodesic ball, and call \(\mathcal B_{\pi/2}(p)\) an \emph{open hemisphere}. In this subsection \(p,q\) denote points of \(S^{d-1}\); the unrelated clock coordinate \(q\) of \Cref{sec:reference-crossed-product} appears nowhere in \Cref{sec:geometry}. At \(d=2\) an open ball of radius \(\theta\) is an open arc of angular length \(2\theta\), and an open hemisphere is an open arc of length \(\pi\).

\begin{lemma}[Domain of dependence on the round sphere]
\label{lem:dSd-domain}
For open \(g\subseteq S^{d-1}\),
\begin{align}
  D(g)=\Bigl\{(\eta,q):|\eta|<\tfrac{\pi}{2},\
  \overline{\mathcal B_{|\eta|}(q)}\subseteq g\Bigr\}.
  \label{eq:dSd-domain}
\end{align}
In particular, for \(0<\theta<\pi\),
\begin{align}
  D\bigl(\mathcal B_\theta(p)\bigr)
  =\Bigl\{(\eta,q):|\eta|<\tfrac{\pi}{2},\
  |\eta|+\operatorname{dist}(q,p)<\theta\Bigr\}.
  \label{eq:dSd-ball-domain}
\end{align}
\end{lemma}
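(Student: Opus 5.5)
Since the metric \eqref{eq:dSd-metric} is conformal to the static cylinder \(\mathbb R_\eta\times S^{d-1}\) restricted to \(|\eta|<\pi/2\), and domains of dependence are conformally invariant, we compute \(D(g)\) in the product metric \(-\ed\eta^2+\ed\Omega_{d-1}^2\). There a causal curve \(\gamma(s)=(\eta(s),q(s))\) satisfies \(|\dot q|\le|\dot\eta|\). Hence \(q\) moves at most distance \(|\eta_1-\eta_2|\) on the sphere between times \(\eta_1,\eta_2\). Conversely, any sphere path of length \(\ell\), parametrized at unit speed, lifts to a null curve spanning time \(\ell\); at speed strictly below one it lifts to a timelike curve.

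For the inclusion \(\supseteq\) in \eqref{eq:dSd-domain}, take \((\eta,q)\) with, say, \(\eta>0\) and \(\overline{\mathcal B_{\eta}(q)}\subseteq g\). Consider a past-inextendible causal curve through it. It must reach \(\Sigma_0\): \(\eta\) is monotone, and the curve cannot stop before \(\eta=0\) because it is inextendible in the strip. It reaches \(\Sigma_0\) at a point within distance \(\eta\) of \(q\), which therefore lies in \(g\). The case \(\eta<0\) is symmetric, and \(\eta=0\) is trivial.

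For \(\subseteq\), suppose \(\overline{\mathcal B_{|\eta|}(q)}\not\subseteq g\). Then there is \(r\notin g\) with \(\operatorname{dist}(q,r)\le|\eta|<\pi/2\le\pi\). Choose a minimizing geodesic from \(q\) to \(r\), which exists since the sphere is compact. Follow it at unit speed from \(r\) over the time interval from \(0\) to \(\operatorname{dist}(q,r)\), then stay at \(q\) until time \(\eta\). This gives a causal curve, null then timelike, from \((0,r)\) to \((\eta,q)\). Extend it inextendibly by keeping \(q\) fixed beyond, and by keeping \(r\) fixed on the other side of \(\Sigma_0\). This inextendible causal curve meets \(\Sigma_0\) only at \(r\notin g\), so \((\eta,q)\notin D(g)\).

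The main subtlety is the convention for \(D\). If \(D\) is defined via inextendible causal curves, the argument above is exact: the null segment is allowed, and it is exactly why the closed ball appears. If instead one uses the open-set/timelike convention implicit in \eqref{eq:interval-domain}, we argue via openness. Because \(g\) is open and the closed ball is compact, the condition \(\overline{\mathcal B_{|\eta|}(q)}\subseteq g\) is equivalent to \(\mathcal B_{|\eta|+\epsilon}(q)\subseteq g\) for some \(\epsilon>0\). So the set on the right of \eqref{eq:dSd-domain} is open, and it agrees with the interior of the causal-curve domain. I would check this consistency at \(d=2\) against \eqref{eq:interval-domain}.

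For \eqref{eq:dSd-ball-domain}, the condition \(\overline{\mathcal B_{|\eta|}(q)}\subseteq\mathcal B_\theta(p)\) means every \(r\) with \(\operatorname{dist}(r,q)\le|\eta|\) has \(\operatorname{dist}(r,p)<\theta\). The triangle inequality gives sufficiency when \(|\eta|+\operatorname{dist}(q,p)<\theta\). For necessity, take \(r\) on the great circle through \(p\) and \(q\), at distance \(|\eta|\) from \(q\) and moving away from \(p\); if \(q=p\), take any direction. Then \(\operatorname{dist}(r,p)=\min\bigl(\operatorname{dist}(q,p)+|\eta|,\;2\pi-\operatorname{dist}(q,p)-|\eta|\bigr)\). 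If \(\operatorname{dist}(q,p)+|\eta|\ge\theta\), we need \(\operatorname{dist}(r,p)\ge\theta\). The only delicate case is wrap-around past the antipode. There we use that the ball \(\overline{\mathcal B_{|\eta|}(q)}\) is connected and contains points at distance \(\ge\theta\) from \(p\) by continuity of distance along the geodesic. Indeed, distance to \(p\) along the geodesic rises continuously from \(\operatorname{dist}(q,p)\) up to \(\min(\operatorname{dist}(q,p)+|\eta|,\pi)\ge\theta\), since \(\theta<\pi\).
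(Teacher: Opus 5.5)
Your proposal is correct and takes essentially the same route as the paper. It reduces to the static cylinder, uses the \(1\)-Lipschitz bound \(\operatorname{dist}(q_0,q_1)\le|\eta_1-\eta_0|\) for one inclusion, and for the other exhibits a causal curve along a minimizing geodesic to a point \(r\notin g\). The ball formula then comes from the triangle inequality plus continuing the geodesic, capped at distance \(\pi\).

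The differences are cosmetic. You use a null-then-static curve where the paper uses constant sphere-speed \(\operatorname{dist}(q,q')/\eta\), and you argue via monotonicity and inextendibility where the paper cites global hyperbolicity. Your aside about a timelike convention is unnecessary: the paper defines \(D\) with past-inextendible causal curves, and for that definition your argument is already exact.
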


\begin{proof}
The metric \eqref{eq:dSd-metric} is conformal to the static cylinder \(-\ed\eta^2+\ed\Omega_{d-1}^2\), so the two have the same causal curves. Along a causal curve of the cylinder the sphere component has speed at most one, and geodesic distance is \(1\)-Lipschitz along curves, so a causal curve joining \((\eta_0,q_0)\) to \((\eta_1,q_1)\) satisfies
\begin{align}
  \operatorname{dist}(q_0,q_1)\le|\eta_1-\eta_0| .
  \label{eq:dSd-lipschitz}
\end{align}

Fix \(\eta>0\); for \(\eta=0\) the assertion is just that \(q\in g\). Global \(\mathrm{dS}_d\) is globally hyperbolic with Cauchy surface \(\Sigma_0\), so every past-inextendible causal curve from \((\eta,q)\) crosses \(\Sigma_0\), and by \eqref{eq:dSd-lipschitz} it does so inside \(\overline{\mathcal B_\eta(q)}\). Conversely every \(q'\) in that closed ball is a crossing point: traverse a minimizing geodesic of \(S^{d-1}\) from \(q\) to \(q'\) at the constant sphere-speed \(\operatorname{dist}(q,q')/\eta\le1\), which gives a causal curve from \((\eta,q)\) to \((0,q')\), and continue it past-inextendibly below \(\Sigma_0\). Hence every past-inextendible causal curve from \((\eta,q)\) meets \(g\) exactly when \(\overline{\mathcal B_\eta(q)}\subseteq g\). The same argument with time reversed gives \(D^-\), and together they give \eqref{eq:dSd-domain}.

For \eqref{eq:dSd-ball-domain}, the triangle inequality gives \(\operatorname{dist}(x,p)\le\operatorname{dist}(p,q)+\eta\) for every \(x\in\overline{\mathcal B_\eta(q)}\), so the maximum of \(\operatorname{dist}(\cdot,p)\) over that compact set is at most \(\min\{\pi,\operatorname{dist}(p,q)+\eta\}\); it equals that value, attained by continuing a minimizing geodesic from \(p\) through \(q\) a further distance \(\eta\) when \(\operatorname{dist}(p,q)+\eta\le\pi\), and at the antipode of \(p\) otherwise. Since \(\theta<\pi\), the containment \(\overline{\mathcal B_\eta(q)}\subseteq\mathcal B_\theta(p)\) therefore holds if and only if \(\operatorname{dist}(p,q)+\eta<\theta\).
\end{proof}

\begin{proposition}[The threshold is a hemisphere in every dimension]
\label{prop:dSd-threshold}
Let \(d\ge2\) and let \(g\subseteq S^{d-1}\) be open. Then \(D(g)\) contains a timelike curve of infinite proper time in both directions if and only if \(g\) contains an open hemisphere.
\end{proposition}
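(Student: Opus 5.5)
The argument mirrors the two-dimensional case (\Cref{lem:complete-curves}), using \Cref{lem:dSd-domain} in place of \eqref{eq:interval-domain}. The first step is already in the proof of \Cref{lem:complete-curves}: on any compact substrip \(|\eta|\le\pi/2-\epsilon\) the conformal factor is bounded, so a timelike curve of infinite proper time in both directions must have \(\eta\to\pm\pi/2\) along it. Parametrize such a curve by \(\eta\) as \(\eta\mapsto(\eta,q(\eta))\) for \(\eta\in(-\pi/2,\pi/2)\). Timelikeness gives \(\operatorname{dist}(q(\eta_0),q(\eta_1))\le|\eta_1-\eta_0|\), as in \eqref{eq:dSd-lipschitz}. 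The curve \(q\) therefore has total length at most \(\pi\), and it extends continuously to endpoints \(q_\pm=\lim_{\eta\to\pm\pi/2}q(\eta)\).

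\emph{Necessity.} We want a point \(p\) with \(\mathcal B_{\pi/2}(p)\subseteq g\). Choose \(p:=q(0)\). By \eqref{eq:dSd-domain}, for every \(\eta\) we have \(\overline{\mathcal B_{|\eta|}(q(\eta))}\subseteq g\). Fix \(x\) with \(\operatorname{dist}(x,p)<\pi/2\). Take \(\eta\) close to \(\pi/2\), so that \(\operatorname{dist}(q(\eta),p)\le\eta\), and ask whether \(x\) lies in \(\overline{\mathcal B_{\eta}(q(\eta))}\). That would require \(\operatorname{dist}(x,q(\eta))\le\eta\), which the triangle inequality does not give: it only bounds \(\operatorname{dist}(x,q(\eta))\) by \(\operatorname{dist}(x,p)+\eta\).

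So the better choice is \(p:=q_+\). For \(\operatorname{dist}(x,q_+)<\pi/2\), pick \(\eta<\pi/2\) with \(\operatorname{dist}(x,q_+)+\operatorname{dist}(q_+,q(\eta))<\eta\). This is possible because \(\operatorname{dist}(q(\eta),q_+)\le\pi/2-\eta\to0\). Then \(x\in\mathcal B_\eta(q(\eta))\subseteq g\). Hence \(\mathcal B_{\pi/2}(q_+)\subseteq g\), and only the future end of the curve is needed. The main obstacle is exactly this choice of centre; the centre \(q(0)\) does not work.

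\emph{Sufficiency.} Suppose \(\mathcal B_{\pi/2}(p)\subseteq g\). Domain of dependence is monotone in \(g\), so \(D(\mathcal B_{\pi/2}(p))\subseteq D(g)\). By \eqref{eq:dSd-ball-domain} with \(\theta=\pi/2\), the former is \(\{|\eta|+\operatorname{dist}(q,p)<\pi/2\}\). Take the curve \(\eta\mapsto(\eta,p)\) for \(\eta\in(-\pi/2,\pi/2)\). It satisfies \(\operatorname{dist}(p,p)+|\eta|<\pi/2\) everywhere, so it lies in this region. It is timelike, and its proper time is \(L\int\ed\eta/\cos\eta\), which diverges at both ends.

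Monotonicity of \(D\) is immediate from the definition, or directly from \eqref{eq:dSd-domain}. This reproduces the \(d=2\) threshold: a ball of radius \(\pi/2\) is an arc of length \(\pi\).
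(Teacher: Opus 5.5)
Your proof is correct and follows the paper's argument. Sufficiency uses the central worldline in \(D(\mathcal B_{\pi/2}(p))\); necessity applies \eqref{eq:dSd-domain} at points of the curve approaching \(\mathcal I^+\) and uses the triangle inequality around a limiting centre. The only difference is how you get that centre: you take the actual future endpoint \(q_+\), which exists by the \(1\)-Lipschitz bound \eqref{eq:dSd-lipschitz} and completeness of the sphere, while the paper takes a subsequential limit of points \(q_n\) using compactness of \(S^{d-1}\); both work equally well.
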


\begin{proof}
Suppose \(\mathcal B_{\pi/2}(p)\subseteq g\). Isotony of \(D\) gives \(D(\mathcal B_{\pi/2}(p))\subseteq D(g)\), and by \eqref{eq:dSd-ball-domain} the worldline \(\{(\eta,p):|\eta|<\pi/2\}\) lies in \(D(\mathcal B_{\pi/2}(p))\). Its proper time is \(L\int_{-\pi/2}^{\pi/2}\ed\eta/\cos\eta\), which diverges at both ends.

Conversely, let \(\gamma\subseteq D(g)\) be timelike with infinite proper time in both directions. On a substrip \(|\eta|\le\pi/2-\epsilon\) the conformal factor \(L/\cos\eta\) is at most \(L/\sin\epsilon\), and \(\eta\) is monotonic along a timelike curve, so the proper time \(\gamma\) accumulates inside that substrip is at most \(\pi L/\sin\epsilon\). Being infinite in both directions, \(\gamma\) leaves every such substrip, and in particular has points \((\eta_n,q_n)\) with \(\eta_n\uparrow\pi/2\). By \eqref{eq:dSd-domain}, \(\overline{\mathcal B_{\eta_n}(q_n)}\subseteq g\). Since \(S^{d-1}\) is compact we may pass to a subsequence with \(q_n\to p\). Fix \(\theta<\pi/2\). For \(n\) large enough that \(\eta_n\ge\theta+\operatorname{dist}(p,q_n)\), the triangle inequality gives
\(\overline{\mathcal B_\theta(p)}\subseteq
\overline{\mathcal B_{\eta_n}(q_n)}\subseteq g\). As \(\theta<\pi/2\) was arbitrary,
\(\mathcal B_{\pi/2}(p)=\bigcup_{\theta<\pi/2}\overline{\mathcal B_\theta(p)}
\subseteq g\).
\end{proof}

\begin{corollary}[At most one protected component]
\label{cor:dSd-unique}
Let \(a\subseteq S^{d-1}\) be open with \(|a|>0\). Then at most one connected component of \(a^\perp=\intr(S^{d-1}\setminus a)\) contains an open hemisphere.
\end{corollary}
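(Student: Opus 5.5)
The plan is to argue by contradiction, using only the geometry of hemispheres on the round sphere together with the hypothesis \(|a|>0\). Suppose two distinct connected components \(c_1\neq c_2\) of \(a^\perp\) contain open hemispheres \(\mathcal B_{\pi/2}(p_1)\subseteq c_1\) and \(\mathcal B_{\pi/2}(p_2)\subseteq c_2\). Distinct components are disjoint, so the two hemispheres are disjoint. The first step is to show that this forces \(p_2=-p_1\), the antipode of \(p_1\). If \(p_2\neq -p_1\), then \(\operatorname{dist}(p_1,p_2)<\pi\), and the midpoint \(m\) of a minimizing geodesic from \(p_1\) to \(p_2\) satisfies \(\operatorname{dist}(m,p_i)=\operatorname{dist}(p_1,p_2)/2<\pi/2\) for \(i=1,2\). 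Thus \(m\) lies in both hemispheres, a contradiction.

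The second step handles \(p_2=-p_1\). The two open hemispheres \(\mathcal B_{\pi/2}(p_1)\) and \(\mathcal B_{\pi/2}(-p_1)\) are complementary up to the equator \(\{q:\operatorname{dist}(q,p_1)=\pi/2\}\). Their union is therefore an open set of full measure, since the equator is a great \((d-2)\)-sphere of measure zero. It follows that \(a^\perp\), which contains this union, has complement of measure zero.

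The third step transfers this to \(a\). Because \(a\) is open and disjoint from \(S^{d-1}\setminus a\), it is disjoint from the interior \(a^\perp\). Hence \(a\subseteq S^{d-1}\setminus a^\perp\), which has measure zero, contradicting \(|a|>0\). At \(d=2\) this recovers the length-sum argument of \Cref{thm:gap-protection}.

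I expect the only delicate point to be the antipodal case in the second step. Hemisphere disjointness alone does not give a contradiction there, so the hypothesis \(|a|>0\) must be used through openness of \(a\) and the measure-zero equator. Passing \(|a|>0\) through \(a\subseteq S^{d-1}\setminus a^\perp\) handles this cleanly, and it needs no regularity of \(a\) beyond openness.
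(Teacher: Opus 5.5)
Your proof is correct and follows the paper's argument essentially step for step. The midpoint of a minimizing geodesic forces two disjoint open hemispheres to be antipodal, their union is the complement of a measure-zero equatorial great sphere, and since \(a\subseteq S^{d-1}\setminus a^\perp\) it lies in that equator, contradicting \(|a|>0\). (Openness of \(a\) is not needed for \(a\cap a^\perp=\varnothing\), because \(a^\perp\subseteq S^{d-1}\setminus a\) by definition.)
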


\begin{proof}
Two disjoint open hemispheres \(\mathcal B_{\pi/2}(p)\) and \(\mathcal B_{\pi/2}(p')\) force \(\operatorname{dist}(p,p')=\pi\): otherwise the midpoint of a minimizing geodesic from \(p\) to \(p'\) lies at distance \(\operatorname{dist}(p,p')/2<\pi/2\) from both and belongs to both balls. So \(p'\) is the antipode of \(p\), and \(\mathcal B_{\pi/2}(p)\cup\mathcal B_{\pi/2}(-p)=S^{d-1}\setminus E\) with \(E=\{\operatorname{dist}(\cdot,p)=\pi/2\}\) an equatorial great sphere, of measure zero. Both hemispheres lie in \(a^\perp\), which is disjoint from \(a\), so \(a\subseteq E\) and \(|a|=0\).
\end{proof}

\begin{remark}[Higher-dimensional extension]
\label{rem:dSd-scope}
At \(d=2\) a component of \(a^\perp\) is an arc, of angular length \(\beta\) and hence a ball of radius \(\beta/2\), so \Cref{prop:dSd-threshold} reads \(\beta\ge\pi\): the threshold of \Cref{thm:gap-protection}, equality included. \Cref{cor:dSd-unique} likewise reduces to the length count in the proof of that theorem.

For \(d\ge3\), a component of \(a^\perp\) need not be a geodesic ball. \Cref{prop:dSd-threshold} determines whether it carries a complete curve, but the causal completion \(\mathscr C(A')''\) may be smaller than \(D(g)\). Thus the higher-dimensional result is a threshold criterion rather than a closed formula for \(\widetilde A\).
\end{remark}

\section[Regional Modular Dressings and the Common-Clock Representation]%
{Regional Modular Dressings and the Common-Clock Representation}
\label{sec:algebra}

\subsection{Modular Conventions and the Static Normalization}
\label{sec:modular-conventions}

We place a chiral conformal net on the circle \(\Sigma_0=S^1\) in its vacuum state, denoted \(\Omega_{\rm BD}:=\ket{\rm BD}\) in the de Sitter application \cite{BunchDavies}. For every admissible region \(u\) for which \(\Omega_{\rm BD}\) is cyclic and separating, \(\Delta_u\) denotes the Tomita modular operator of \((\mathcal A_{\rm CFT}(u),\Omega_{\rm BD})\) and
\begin{align}
 K_u:=-\log\Delta_u,\qquad
 \sigma_t^u(x):=\Delta_u^{it}x\Delta_u^{-it}
               =e^{-itK_u}x\,e^{itK_u}.
 \label{eq:modular-convention}
\end{align}
Thus \(t\) is dimensionless mathematical modular time, and \(\Omega_{\rm BD}\) is a \(1\)-KMS state for \(\sigma^u\).

The reference region is the half-circle
\begin{align}
  R:=\bigl(-\tfrac{\pi}{2},\tfrac{\pi}{2}\bigr)\subset S^1,
  \label{eq:reference-region}
\end{align}
whose domain of dependence \(D(R)\) is a static patch of \eqref{eq:global-ds-metric}. Let \(\delta^R_\tau\) be the one-parameter group of M\"obius transformations of \(S^1\) that preserves \(R\) and fixes its two endpoints, parametrized so that on \(\Sigma_0\) it is the flow of the de Sitter Killing field
\begin{align}
  \xi=\cos\chi\,\partial_\eta ,
  \label{eq:killing-field}
\end{align}
which is timelike and future-directed throughout \(D(R)\), past-directed in the opposite patch, and vanishes at \(\chi=\pm\pi/2\). We write \(H_{\rm stat}^{\rm ch}\) for the self-adjoint generator of the unitary implementation of \(\delta^R\) in the vacuum representation of the abstract chiral net,
\begin{align}
  \alpha_\tau^{\rm stat}
  :=\operatorname{Ad}\bigl(e^{i\tau H_{\rm stat}^{\rm ch}}\bigr),
  \qquad
  H_{\rm stat}^{\rm ch}\,\Omega_{\rm BD}=0 ,
  \label{eq:hstat-definition}
\end{align}
with \(\tau\) the \emph{dimensionless} static Killing parameter; the central observer's proper time is \(T=L\tau\), so the dimensionful generator is \(H_{\rm stat}^{\rm ch,proper}=H_{\rm stat}^{\rm ch}/L\).

The operator \(H_{\rm stat}^{\rm ch}\) is the group-theoretic generator in the chiral vacuum representation. The stress-tensor charge used in the JT comparison belongs to the separate full-CFT realization introduced in \Cref{sec:stress-tensor}.

\begin{lemma}[Static modular normalization]
\label{lem:static-modular-relation}
For a M\"obius-covariant net with positive energy and vacuum
\(\Omega_{\rm BD}\),
\begin{align}
  \sigma_t^R=\alpha_{-2\pi t}^{\rm stat},
  \qquad \Delta_R^{it}=e^{-2\pi itH_{\rm stat}^{\rm ch}},
  \qquad K_R=2\pi H_{\rm stat}^{\rm ch}
  =2\pi L\,H_{\rm stat}^{\rm ch,proper} .
  \label{eq:static-modular-relation}
\end{align}
\end{lemma}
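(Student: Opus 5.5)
The plan is to reduce the statement to the Bisognano--Wichmann property for Möbius-covariant nets with positive energy, as established by Brunetti--Guido--Longo and Gabbiani--Fröhlich. That theorem says: for any interval \(I\), the vacuum modular group of \(\mathcal A_{\rm CFT}(I)\) acts geometrically. Concretely, \(\Delta_I^{it}=U(\delta^I_{-2\pi t})\), where \(\delta^I_s\) is the one-parameter dilation subgroup of \(\mathrm{PSL}(2,\mathbb R)\) fixing the endpoints of \(I\). This subgroup is normalized so that it acts as the standard dilation \(x\mapsto e^{s}x\) after a Cayley transform sending \(I\) to the positive half-line, and \(U\) is the positive-energy representation. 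I would invoke this theorem, with \(\Omega_{\rm BD}\) as the vacuum, rather than reprove it. The remaining work is to match its normalization and sign to the parameter \(\tau\) of \(\delta^R\) defined through the Killing field \eqref{eq:killing-field}.

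The first step is the normalization. I will identify the flow of \(\xi=\cos\chi\,\partial_\eta\) with the standard dilations. At \(\chi=0\) the Killing field has unit norm relative to \(\partial_\eta\), so near the centre of the patch the observer's time is \(\eta\approx\tau\). Near a bifurcation point \(\chi=\pm\pi/2\), write \(\chi=\pi/2-y\). Then \(\xi\approx y\,\partial_\eta\), so in null coordinates \(u=\eta-y\), \(v=\eta+y\) the flow acts as a boost \((u,v)\mapsto(e^{-\tau}u,e^{\tau}v)\) with unit surface gravity. Restricted to the chiral light ray, the \(\tau\)-flow is therefore the dilation subgroup with rate exactly one. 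This is the dimensionless analogue of the de Sitter temperature \(1/(2\pi L)\).

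It is cleanest to check this globally. Map the light-ray coordinate to the circle and \(R\) to the half-line. The Möbius generator preserving \(R\) is \(\tfrac12(L_1-L_{-1})\)-type, i.e.\ \(\cos\chi\,\partial\) on the light ray \(\eta\pm\chi\), and its fixed points are \(\pm\pi/2\). Its linearization at the fixed point \(\chi=-\pi/2\) has eigenvalue \(\pm1\), and that eigenvalue fixes the dilation rate. So \(\delta^R_\tau=\delta^{I}_{\pm\tau}\) with \(I=R\).

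The second step is the sign. Bisognano--Wichmann gives \(\sigma^R_t=\operatorname{Ad}U(\delta^R_{-2\pi t})\) in the convention where the positive dilation moves points toward the future endpoint. Equivalently, \(\Omega\) is KMS at inverse temperature \(1\) for \(\sigma^R\), hence \(2\pi\)-KMS at inverse temperature with respect to the \emph{future-directed} geometric flow. I would fix the sign by checking that \(K_R=2\pi H^{\rm ch}_{\rm stat}\) makes \(H^{\rm ch}_{\rm stat}\) the generator of future-directed translation inside \(D(R)\). With \(\alpha_\tau={\rm Ad}\,e^{i\tau H}\) and \(\sigma_t={\rm Ad}\,e^{-itK}\), the relation \(\sigma_t=\alpha_{-2\pi t}\) is then immediate.

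As a cross-check, I would verify the KMS condition directly. Physically this is the statement that the Bunch--Davies state restricted to the static patch is thermal at \(\beta=2\pi\) in \(\tau\), with \(H\) future-directed in \(D(R)\) and \(\xi\) past-directed in the complement. This matches Tomita's \(J\) exchanging \(R\) and \(R^\perp\).

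Once \(\Delta_R^{it}=e^{-2\pi itH^{\rm ch}_{\rm stat}}\) holds for all \(t\), Stone's theorem gives \(K_R=2\pi H^{\rm ch}_{\rm stat}\), uniqueness of generators included. Substituting \(H^{\rm ch}_{\rm stat}=L H^{\rm ch,proper}_{\rm stat}\) gives the last equality. The condition \(H\Omega=0\) is consistent with \(\Delta\Omega=\Omega\).

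The main obstacle is the sign and normalization bookkeeping. The chiral net carries only one light-ray sector, and I must confirm that on the chiral coordinate the Killing flow is still future-directed dilation of \(R\) with unit rate. Orientation conventions in Brunetti--Guido--Longo differ between references, so I would settle the sign with the KMS positivity check, not by quoting a convention.
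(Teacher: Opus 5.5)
Your proposal is correct and follows the same route as the paper. Both invoke the Brunetti--Guido--Longo Bisognano--Wichmann theorem for the arc $R$, identify the M\"obius subgroup preserving $R$ with $\delta^R$, and obtain $K_R=2\pi H^{\rm ch}_{\rm stat}$ from uniqueness of the Stone generator (vacuum invariance fixes the phase), with the $2\pi$-KMS statement as a cross-check; your explicit unit-rate and positive-energy (future) orientation check is the bookkeeping the paper builds into the parametrization of $\delta^R$ and records in \Cref{rem:sign-check}.
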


\begin{proof}
The Bisognano--Wichmann property for M\"obius-covariant nets on \(S^1\) \cite[Thm.~2.3(ii)]{BrunettiGuidoLongo1993} states that the modular unitaries of an interval algebra in the vacuum state are exactly the unitaries implementing the one-parameter M\"obius subgroup preserving that interval, run at \(-2\pi\) times the modular parameter. For \(R\) that subgroup is \(\delta^R\) with generator \(H_{\rm stat}^{\rm ch}\), so \(\Delta_R^{it}=e^{-2\pi itH_{\rm stat}^{\rm ch}}\), which is \(\sigma^R_t=\alpha^{\rm stat}_{-2\pi t}\). Comparing with \(\Delta_R^{it}=e^{-itK_R}\) from \eqref{eq:modular-convention} and using that both one-parameter unitary groups fix \(\Omega_{\rm BD}\), uniqueness of the Stone generator gives \(K_R=2\pi H_{\rm stat}^{\rm ch}\).

The factor \(-2\pi\) is equivalent to the more familiar statement that \(\omega_{\rm BD}|_{\mathcal A_{\rm CFT}(R)}\) is KMS at inverse temperature \(\beta_{\rm stat}=2\pi\) for \(\alpha^{\rm stat}\), and the two agree: a \(\beta_{\rm stat}\)-KMS state for \(\alpha_\tau=\operatorname{Ad}(e^{i\tau H})\) has modular group \(\sigma_t=\alpha_{-\beta_{\rm stat}t}\), since in the Gibbs model \(\rho\propto e^{-\beta_{\rm stat}H}\) gives \(\rho^{it}x\rho^{-it}=e^{-i\beta_{\rm stat}tH}x\,e^{i\beta_{\rm stat}tH}\).
\end{proof}

\begin{remark}[Orientation and normalization]
\label{rem:sign-check}
The factor \(2\pi\) and the sign in \eqref{eq:static-modular-relation} follow from three conventions. (i) The Tomita convention \(K_u=-\log\Delta_u\) in \eqref{eq:modular-convention}, for which \(\Omega_{\rm BD}\) is \(1\)-KMS in \(t\). (ii) The sign of \eqref{eq:hstat-definition}, with \(H_{\rm stat}^{\rm ch}\) generating the flow of \eqref{eq:killing-field}, future-directed in \(D(R)\). (iii) The value \(\beta_{\rm stat}=2\pi\), which in dimensionless \(\tau\) is the de Sitter temperature \(T_{\rm dS}=1/(2\pi L)\) measured in the central observer's proper time \(T=L\tau\). Reversing (ii) --- taking the generator past-directed in \(D(R)\) --- flips the sign in \eqref{eq:static-modular-relation}, and this is the only orientation freedom involved. As a geometric cross-check, choose a M\"obius coordinate \(x\) mapping \(R\) to \((0,\infty)\) and oriented so that \(\alpha^{\rm stat}_\tau\) acts as \(x\mapsto e^{-\tau}x\). Then \eqref{eq:static-modular-relation} says that \(\sigma^R_t\) acts as the dilation \(x\mapsto e^{2\pi t}x\), which is the statement used in \Cref{lem:geometric-modular-flow}. Equation \eqref{eq:static-modular-relation} relates the two flow parameters in the vacuum representation.
\end{remark}

\subsection{The Abstract Chiral Net}
\label{sec:abstract-net}

\begin{assumption}[Algebraic CFT framework]
\label{ass:regional-net}
For all operator-algebraic results, \(\mathcal A_{\rm CFT}\) is an abstract chiral conformal net on \(S^1\): it is isotonic, local, irreducible, and M\"obius covariant with positive energy and an invariant vacuum \(\Omega_{\rm BD}\), and \(\mathcal A_{\rm CFT}(R)\ne\mathbb C\id\). We use the following three properties.
\begin{enumerate}[label=(\roman*)]
\item Every nonempty proper open-arc algebra is a type-\(\mathrm{III}_1\) factor.
\item \(\Omega_{\rm BD}\) has the Reeh--Schlieder property.
\item The net is strongly additive and has the split property for arcs with disjoint closures.
\end{enumerate}
Items (i) and (ii) follow from the preceding axioms. Factoriality of arc algebras follows from irreducibility, and ergodicity of the vacuum modular (dilation) group then forces type \(\mathrm{III}_1\) unless the theory is one-dimensional \cite[Part~II, Prop.~6.2.9]{LongoLectureNotes}; see also \cite{GabbianiFrohlich1993} for the type-\(\mathrm{III}_1\) classification of local algebras in chiral conformal field theory. The Reeh--Schlieder property and additivity likewise follow from the stated axioms \cite[Part~II, Thm.~6.2.2 and Cor.~6.2.6]{LongoLectureNotes}. The Bisognano--Wichmann property used in \Cref{lem:static-modular-relation} and Haag duality on \(S^1\) for a single arc are proved from M\"obius covariance and positivity of the energy alone in \cite{BrunettiGuidoLongo1993}, with no stress-energy tensor and no diffeomorphism covariance.

Item (iii) is an additional hypothesis. Strong additivity is equivalent to Haag duality on the once-punctured circle \cite{GuidoLongoWiesbrock1998} and is not implied by Haag duality on \(S^1\); the split property is an independent nuclearity-type condition, which holds whenever \(e^{-sL_0}\) is trace class for every \(s>0\) \cite{BuchholzDAntoniLongo2007}, but which does not follow from the basic axioms. Complete rationality and finiteness of the \(\mu\)-index are not required.
\end{assumption}

The class of regions and the role of these assumptions are as follows. For a single open arc, Haag duality and geometric modular flow are consequences of the M\"obius vacuum representation \cite{BrunettiGuidoLongo1993}. For an admissible region \(u=\bigsqcup_{j=1}^n I_j\) with pairwise disjoint arc closures, the split property gives an isomorphism
\begin{align}
 \bigvee_{j=1}^{n}\mathcal A_{\rm CFT}(I_j)
 \;\xrightarrow{\ \sim\ }\;
 \overline{\bigotimes}_{j=1}^{n}\mathcal A_{\rm CFT}(I_j),
 \label{eq:multi-interval-split}
\end{align}
discussed in detail as \(\Theta_S\) in \Cref{sec:split-repair}. If two arcs have touching closures and differ from their union by a single point, strong additivity fills that point and merges them, after which \eqref{eq:multi-interval-split} applies; this is the precise algebraic role of strong additivity announced after \eqref{eq:D-and-complements}. Combining \Cref{ass:regional-net}(i), stability of type \(\mathrm{III}_1\) under finite tensor products, and \eqref{eq:multi-interval-split}, every admissible regional algebra \(\mathcal A_{\rm CFT}(u)\) is a type-\(\mathrm{III}_1\) factor \cite{Takesaki}. Without strong additivity and the split property this conclusion for disconnected \(u\) does not follow from the single-arc classification.

Throughout the regional construction both \(u\) and \(u^\perp\) contain a nonempty open arc. The vacuum is then cyclic for \(\mathcal A_{\rm CFT}(u)\), because that algebra contains an arc algebra for which it is cyclic, and separating, because \(u^\perp\) contains an arc algebra for which the vacuum is cyclic and locality places that algebra in \(\mathcal A_{\rm CFT}(u)'\). The empty region and the whole circle are excluded; in particular \(\mathcal A_{\rm CFT}(S^1)=B(\mathcal H_{\rm CFT})\) is a type-\(\mathrm I_\infty\) factor and is not covered by \Cref{ass:regional-net}(i).

For every region in this class, cyclicity and separatingness define \(\Delta_u\), \(K_u\), and \(\sigma^u\). The disconnected-region arguments use this Tomita flow without requiring it to act geometrically.

\begin{lemma}[Conformal covariance of arc modular Hamiltonians]
\label{lem:conformal-covariance}
Let \(b_1,b_2\subsetneq S^1\) be open arcs, of any length and position, and let \(K_{b_1},K_{b_2}\) be the modular Hamiltonians of \(\ket{\rm BD}\) restricted to \(\mathcal A_{\rm CFT}(b_1),\mathcal A_{\rm CFT}(b_2)\). Then \(K_{b_1}\) and \(K_{b_2}\) are unitarily equivalent, implemented by a M\"obius unitary.
\end{lemma}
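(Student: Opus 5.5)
The plan is to combine transitivity of the Möbius group on proper open arcs with covariance of the modular objects under vacuum-preserving unitaries. First I will show transitivity: for any two open arcs \(b_1,b_2\subsetneq S^1\) there is a Möbius transformation \(g\) with \(g(b_1)=b_2\). Since \(\mathrm{PSL}(2,\mathbb R)\) acts on \(S^1\) as the group of orientation-preserving fractional linear transformations, it is triply transitive on ordered triples of distinct points with compatible cyclic order. I will choose \(g\) to send the endpoints of \(b_1\) to the endpoints of \(b_2\), with the correct orientation, and to send an interior point of \(b_1\) to an interior point of \(b_2\). Then \(g(b_1)=b_2\), whatever the lengths of the two arcs. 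An alternative route is to map each arc separately onto the reference half-circle \(R\), using a rotation followed by a dilation subgroup, and then compose.

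Next I will use the positive-energy unitary representation \(U\) of the Möbius group, taken from \Cref{ass:regional-net}. It satisfies \(U(g)\mathcal A_{\rm CFT}(b_1)U(g)^*=\mathcal A_{\rm CFT}(g b_1)=\mathcal A_{\rm CFT}(b_2)\) and \(U(g)\Omega_{\rm BD}=\Omega_{\rm BD}\). Because the vacuum is invariant, \(U(g)\) intertwines the Tomita operators. Set \(S_2:=U(g)S_1U(g)^*\). It acts on the dense domain \(\mathcal A_{\rm CFT}(b_2)\Omega_{\rm BD}\) by sending \(x\Omega_{\rm BD}\) to \(x^*\Omega_{\rm BD}\). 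Taking closures, I expect to get \(S_{b_2}=U(g)S_{b_1}U(g)^*\).

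By uniqueness of the polar decomposition, this gives \(\Delta_{b_2}=U(g)\Delta_{b_1}U(g)^*\). Functional calculus then yields \(K_{b_2}=U(g)K_{b_1}U(g)^*\), which is the claim. The reason \(U(g)\) is a genuine unitary, rather than a projective one, is that the vacuum-preserving normalization fixes the phase. A phase would in any case cancel in the conjugation.

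The only point needing care is the first step, for arcs longer than \(\pi\) or very short ones. One might worry that Möbius maps preserve some length notion, but they do not, since dilations change arc lengths arbitrarily. So transitivity is immediate once I verify triple transitivity on the circle, for instance via the Cayley transform to \(\mathbb R\cup\{\infty\}\). As a cross-check, \Cref{lem:static-modular-relation} then gives \(K_b=2\pi\,U(g)H^{\rm ch}_{\rm stat}U(g)^*\) for the \(g\) sending \(R\) to \(b\), consistent with Bisognano--Wichmann for every arc.
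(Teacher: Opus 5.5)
Your proposal is correct and matches the paper's argument: the paper also takes a M\"obius \(g\) sending the ordered endpoints of \(b_1\) to those of \(b_2\), so that orientation preservation gives \(g(b_1)=b_2\) rather than the complementary arc. It then uses that the implementing unitary fixes the vacuum and carries \(\mathcal A_{\rm CFT}(b_1)\) onto \(\mathcal A_{\rm CFT}(b_2)\), and concludes by uniqueness of modular objects, which your intertwining of the Tomita operators and appeal to uniqueness of the polar decomposition simply spell out.
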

\begin{proof}
The M\"obius group acts on \(S^1\) transitively on ordered pairs of distinct points and preserves orientation, so an element carrying the ordered endpoints of \(b_1\) to those of \(b_2\) carries \(b_1\) onto \(b_2\) and not onto the complementary arc: there is \(g\) with \(g(b_1)=b_2\). The vacuum is M\"obius invariant, so the implementing unitary \(U_g\) fixes \(\ket{\rm BD}\) and \(U_g\mathcal A_{\rm CFT}(b_1)U_g^{-1}=\mathcal A_{\rm CFT}(b_2)\). Uniqueness of modular objects then gives
\(U_gK_{b_1}U_g^{-1}=K_{b_2}\).
\end{proof}

\begin{lemma}[Geometric modular flow for a single arc]
\label{lem:geometric-modular-flow}
For a single open arc \(u\subsetneq S^1\), the modular flow \(\sigma_t^u=\operatorname{Ad}(e^{-itK_u})\) acts geometrically: there is a one-parameter group \(g_t^u\) of M\"obius transformations preserving \(u\) and fixing \(\partial u\) with \(\sigma_t^u(\mathcal A_{\rm CFT}(J))=\mathcal A_{\rm CFT}(g_t^u(J))\) for every arc \(J\subseteq u\). In a M\"obius coordinate mapping \(u\) to \((0,\infty)\), oriented as in \Cref{rem:sign-check}, \(g_t^u\) is
\(x\mapsto e^{2\pi t}x\).
\end{lemma}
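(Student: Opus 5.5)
The plan is to reduce to the reference half-circle \(R\) by M\"obius covariance, and there read off the geometric action from the static normalization. First, by \Cref{lem:conformal-covariance} there is a M\"obius transformation \(g\) with \(g(R)=u\), chosen orientation-preserving so that the ordered endpoints of \(R\) go to those of \(u\). Let \(U_g\) be its vacuum-fixing implementing unitary. Uniqueness of modular objects gives
\[
\Delta_u^{it}=U_g\Delta_R^{it}U_g^{-1},\qquad \sigma^u_t=\operatorname{Ad}(U_g)\circ\sigma^R_t\circ\operatorname{Ad}(U_g^{-1}).
\]
So it suffices to establish the statement for \(u=R\) and then conjugate. For the conjugated flow we set \(g^u_t:=g\circ g^R_t\circ g^{-1}\). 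This is a one-parameter M\"obius group preserving \(u=g(R)\) and fixing \(\partial u=g(\partial R)\).

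For \(u=R\), \Cref{lem:static-modular-relation} gives \(\sigma^R_t=\alpha^{\rm stat}_{-2\pi t}=\operatorname{Ad}\bigl(e^{-2\pi itH^{\rm ch}_{\rm stat}}\bigr)\). Here \(e^{i\tau H^{\rm ch}_{\rm stat}}\) implements the M\"obius flow \(\delta^R_\tau\) in the covariant representation. Covariance of the net, \(U(h)\mathcal A_{\rm CFT}(J)U(h)^{-1}=\mathcal A_{\rm CFT}(h(J))\), then yields
\[
\sigma^R_t\bigl(\mathcal A_{\rm CFT}(J)\bigr)=\mathcal A_{\rm CFT}\bigl(\delta^R_{-2\pi t}(J)\bigr)
\]
for every arc \(J\), and in particular for \(J\subseteq R\). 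Thus \(g^R_t:=\delta^R_{-2\pi t}\) works. It preserves \(R\) and fixes \(\pm\pi/2\), by the defining properties of \(\delta^R\). Transporting through \(U_g\), using covariance once more, gives \(\sigma^u_t(\mathcal A_{\rm CFT}(J))=\mathcal A_{\rm CFT}(g^u_t(J))\) for every \(J\subseteq u\), since \(g^{-1}(J)\subseteq R\).

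For the coordinate form, choose the M\"obius coordinate \(x\) on \(R\) as in \Cref{rem:sign-check}, in which \(\delta^R_\tau\) acts as \(x\mapsto e^{-\tau}x\). Then \(\delta^R_{-2\pi t}\) is \(x\mapsto e^{2\pi t}x\). For general \(u\), compose this coordinate with \(g^{-1}\). This gives a M\"obius coordinate mapping \(u\) to \((0,\infty)\) in which \(g^u_t\) is the same dilation. Any other coordinate with the same orientation convention differs by a M\"obius map fixing \(0\) and \(\infty\), namely a positive dilation, and dilations commute with \(x\mapsto e^{2\pi t}x\). So the form is coordinate-independent.

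The main obstacle is bookkeeping of orientation and sign. We must ensure \(g\) carries \(R\) onto \(u\) rather than onto \(u^\perp\), which \Cref{lem:conformal-covariance} handles. We must also ensure that the coordinate realizing \(\delta^R_\tau\) as \(x\mapsto e^{-\tau}x\) is exactly the convention of \Cref{rem:sign-check}. A subtler point is that the representation may only be projective for the M\"obius group. However, \(\operatorname{Ad}\) of the implementing unitaries is unambiguous, and that is all that enters.
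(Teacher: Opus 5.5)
Your proposal is correct and follows the paper's route. It applies Bisognano--Wichmann to the reference arc \(R\) through \Cref{lem:static-modular-relation}, transports the result to an arbitrary arc with the vacuum-fixing M\"obius unitary of \Cref{lem:conformal-covariance}, and fixes the sign and the \(2\pi\) by \Cref{rem:sign-check}; you additionally make explicit the conjugated flow \(g^u_t=g\circ\delta^R_{-2\pi t}\circ g^{-1}\) and the coordinate bookkeeping that the paper's two-line proof leaves implicit.
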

\begin{proof}
This is the Bisognano--Wichmann property for a M\"obius-covariant net on \(S^1\) \cite[Thm.~2.3(ii)]{BrunettiGuidoLongo1993} together with \Cref{lem:conformal-covariance}, which transports the statement from \(R\) to an arbitrary arc. The sign and the factor \(2\pi\) are those of \Cref{lem:static-modular-relation} and \Cref{rem:sign-check}: Tomita modular time has KMS inverse temperature \(1\), while the associated geometric flow has inverse temperature \(2\pi\).
\end{proof}

\begin{remark}[Disconnected modular flow]
\label{rem:disconnected-modular-flow}
For an admissible disconnected \(u\), \((\mathcal A_{\rm CFT}(u),\Omega_{\rm BD})\) is standard, so its Tomita modular flow exists. Geometric modular action, however, requires additional model-dependent structure.
\end{remark}

\begin{lemma}[Statistical independence of disjoint arcs]
\label{lem:statistical-independence}
If \(J_1,J_2\subsetneq S^1\) are disjoint open arcs, then \(\mathcal A_{\rm CFT}(J_1)\cap\mathcal A_{\rm CFT}(J_2)=\mathbb C\id\).
\end{lemma}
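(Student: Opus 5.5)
The plan is to use locality to place the intersection in the centre of a single arc algebra, and then invoke factoriality from \Cref{ass:regional-net}(i).

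Since \(J_1\) and \(J_2\) are disjoint open arcs, \(J_2\subseteq S^1\setminus J_1\). Because \(J_2\) is open, it lies in the interior, so \(J_2\subseteq \intr(S^1\setminus J_1)=J_1^\perp\). This interior is the open arc \(J_1'\) complementary to \(J_1\), or is empty only in degenerate cases, which cannot occur here since \(J_2\ne\varnothing\). Locality of the net gives \(\mathcal A_{\rm CFT}(J_2)\subseteq\mathcal A_{\rm CFT}(J_1)'\). Hence any \(x\in\mathcal A_{\rm CFT}(J_1)\cap\mathcal A_{\rm CFT}(J_2)\) lies in \(\mathcal A_{\rm CFT}(J_1)\cap\mathcal A_{\rm CFT}(J_1)'\), which is the centre of \(\mathcal A_{\rm CFT}(J_1)\).

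By \Cref{ass:regional-net}(i), \(\mathcal A_{\rm CFT}(J_1)\) is a type-\(\mathrm{III}_1\) factor, since \(J_1\) is a nonempty proper open arc. Its centre is therefore \(\mathbb C\id\), and the intersection is trivial. The reverse inclusion \(\mathbb C\id\subseteq\) both algebras is immediate.

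The only point needing care is locality. In the chiral-net axioms it is stated for disjoint intervals, so no closure condition is needed: the arcs may even share an endpoint, and locality still applies directly. Alternatively, one could use Haag duality for a single arc \cite{BrunettiGuidoLongo1993} together with isotony: \(\mathcal A_{\rm CFT}(J_2)\subseteq\mathcal A_{\rm CFT}(J_1^\perp)=\mathcal A_{\rm CFT}(J_1)'\). Either route gives the same reduction to the centre, so I expect no genuine obstacle beyond citing factoriality correctly.
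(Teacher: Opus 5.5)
Your proof is correct and follows the paper's own argument. The paper uses $J_1\subseteq J_2^\perp$ together with isotony and single-arc Haag duality to get $\mathcal A_{\rm CFT}(J_1)\subseteq\mathcal A_{\rm CFT}(J_2)'$, and then concludes from factoriality of $\mathcal A_{\rm CFT}(J_2)$; this is your alternative route with $J_1,J_2$ interchanged. Your primary route through locality alone is equally valid and slightly more economical, because only the commutation of the two arc algebras is needed, not Haag duality.
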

\begin{proof}
Since \(J_1\) and \(J_2\) are disjoint arcs, \(J_1\subseteq J_2^\perp\), so isotony and single-arc Haag duality give \(\mathcal A_{\rm CFT}(J_1)\subseteq\mathcal A_{\rm CFT}(J_2^\perp) =\mathcal A_{\rm CFT}(J_2)'\). Hence
\begin{align}
  \mathcal A_{\rm CFT}(J_1)\cap\mathcal A_{\rm CFT}(J_2)
  \subseteq
  \mathcal A_{\rm CFT}(J_2)'\cap\mathcal A_{\rm CFT}(J_2)
  = Z\bigl(\mathcal A_{\rm CFT}(J_2)\bigr)=\mathbb C\id ,
  \label{eq:statistical-independence-chain}
\end{align}
the last equality because \(\mathcal A_{\rm CFT}(J_2)\) is a factor.
\end{proof}

\begin{corollary}[Modular non-stationarity of local excitations]
\label{cor:modular-nonstationarity}
Let \(u\subsetneq S^1\) be a single open arc and \(J\) an open arc with \(\overline J\subset u\). For any non-scalar unitary \(V\in\mathcal A_{\rm CFT}(J)\) there is \(s\in\mathbb R\) with
\begin{align}
  e^{-isK_u}V\ket{\rm BD}\ \ne\ V\ket{\rm BD}.
  \label{eq:modular-nonstationarity}
\end{align}
\end{corollary}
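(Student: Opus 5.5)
We argue by contradiction. Suppose that \(e^{-isK_u}V\ket{\rm BD}=V\ket{\rm BD}\) for every \(s\in\mathbb R\). Since \(\Delta_u^{is}=e^{-isK_u}\) fixes \(\ket{\rm BD}\), we have
\[
\sigma^u_s(V)\ket{\rm BD}=\Delta_u^{is}V\Delta_u^{-is}\ket{\rm BD}=\Delta_u^{is}V\ket{\rm BD}=V\ket{\rm BD}
\]
for all \(s\). Both \(\sigma^u_s(V)\) and \(V\) lie in \(\mathcal A_{\rm CFT}(u)\), and \(\ket{\rm BD}\) is separating for that algebra, because \(u^\perp\) contains a nonempty arc and we are in the cyclic-separating class described after \Cref{ass:regional-net}. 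Hence \(\sigma^u_s(V)=V\) for all \(s\), so \(V\) is a fixed point of the modular flow.

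Next we use geometric modular action. By \Cref{lem:geometric-modular-flow}, \(\sigma^u_s(V)\in\mathcal A_{\rm CFT}(g^u_s(J))\). Work in the Möbius coordinate mapping \(u\) to \((0,\infty)\), in which \(g^u_s\) is the dilation \(x\mapsto e^{2\pi s}x\). Because \(\overline J\subset u\), the arc \(J\) corresponds to a bounded interval \((x_1,x_2)\) with \(0<x_1<x_2<\infty\). Choose \(s\) with \(e^{2\pi s}x_1>x_2\). Then \(g^u_s(J)=(e^{2\pi s}x_1,e^{2\pi s}x_2)\) is an open arc disjoint from \(J\). It follows that
\[
V=\sigma^u_s(V)\in\mathcal A_{\rm CFT}(J)\cap\mathcal A_{\rm CFT}(g^u_s(J))=\mathbb C\id
\]
by \Cref{lem:statistical-independence}. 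This contradicts the assumption that \(V\) is non-scalar.

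The only delicate point is to ensure that the dilated arc is genuinely disjoint from \(J\), and not merely distinct from it. This is exactly where the hypothesis \(\overline J\subset u\) is used: it keeps both endpoints of \(J\) away from the fixed points \(0\) and \(\infty\) of the dilation, so a sufficiently large \(s\) pushes \(J\) entirely off itself. Unitarity of \(V\) plays no essential role; non-scalarity is all that is needed.
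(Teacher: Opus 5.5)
Your proposal is correct and follows essentially the same route as the paper's proof. You use vacuum invariance of \(\Delta_u^{is}\) and separatingness to get \(\sigma^u_s(V)=V\), then dilate \(J\) off itself in the \((0,\infty)\) coordinate (possible because \(\overline J\subset u\)), and conclude with \Cref{lem:statistical-independence}. Your remark that only non-scalarity of \(V\), not unitarity, is used is also accurate.
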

\begin{proof}
Suppose \(e^{-isK_u}V\ket{\rm BD}=V\ket{\rm BD}\) for all \(s\). Since \(e^{isK_u}\ket{\rm BD}=\ket{\rm BD}\), this says \((\sigma_s^u(V)-V)\ket{\rm BD}=0\), and separatingness gives \(\sigma_s^u(V)=V\). Map \(u\) onto \((0,\infty)\) as in \Cref{lem:geometric-modular-flow}; the image of \(J\) is \((a_0,b_0)\) with \(0<a_0<b_0<\infty\). For \(s>(2\pi)^{-1}\log(b_0/a_0)\) the arc \(e^{2\pi s}(a_0,b_0)\) is disjoint from \((a_0,b_0)\), so \(\sigma^u_s(\mathcal A_{\rm CFT}(J))\) and \(\mathcal A_{\rm CFT}(J)\) are the algebras of disjoint arcs. Then \(V\) lies in their intersection, which is \(\mathbb C\id\) by \Cref{lem:statistical-independence}, contradicting that \(V\) is non-scalar.
\end{proof}

\begin{remark}
\label{rem:modular-nonstationarity-use}
Compact containment \(\overline J\subset u\) is essential: a dilation cannot move an arc sharing an endpoint with \(u\) off itself.
\end{remark}

\subsection{Stress-Tensor Realization and Comparison with Dilaton Gravity}
\label{sec:stress-tensor}

The comparison with JT gravity requires a full, non-chiral stress-tensor realization in addition to the abstract net.

\begin{assumption}[Stress-tensor realization]
\label{ass:stress-tensor}
For this subsection, \Cref{sec:discussion}, and \Cref{app:adm}, assume:
\begin{enumerate}[label=(\alph*)]
\item the matter sector is a full two-dimensional, non-chiral, unitary CFT with local net \(\mathcal A_{\rm full}\) on a Hilbert space \(\mathcal H_{\rm full}\), with Bunch--Davies vacuum \(\Omega_{\rm BD}^{\rm full}\), admitting an anomaly-free coupling to the metric, so in particular the left- and right-moving gravitational anomalies cancel, which for a unitary CFT requires \(c_L=c_R=:c\) \cite{AlvarezGaumeWitten1984};
\item this theory possesses a stress-energy tensor \(T_{\mu\nu}:=-2(\sqrt{-g})^{-1}\delta I_{\rm CFT}/\delta g^{\mu\nu}\) whose smearings against smooth test functions on \(\Sigma_0\) are densely defined closable operators with a common invariant dense domain containing \(\Omega_{\rm BD}^{\rm full}\);
\item \(H_{\rm stat}^{\rm full}\) is the global Noether charge of \eqref{eq:killing-field} in that full-CFT realization.
\end{enumerate}
This is a separate physical realization. In particular, \(H_{\rm stat}^{\rm full}\) and \(H_{\rm stat}^{\rm ch}\) act in different models.
\end{assumption}

Under \Cref{ass:stress-tensor}(c) the global generator is the full-circle integral
\begin{align}
  H_{\rm stat}^{\rm full}
  =\int_{S^1}\ed\chi\,\cos\chi\,T_{\eta\eta}(0,\chi),
  \label{eq:hstat-stress-tensor}
\end{align}
whereas the object that appears in the canonical JT constraint of \Cref{app:adm} is the \emph{one-sided matter charge}
\begin{align}
  H_{\rm mat}^{R}
  :=\int_{-\pi/2}^{\pi/2}\ed\chi\,\cos\chi\,T_{\eta\eta}(0,\chi).
  \label{eq:hmat-definition}
\end{align}

\begin{remark}[Global and one-sided charges]
\label{rem:hstat-neq-hmat}
Formally
\begin{align}
  H_{\rm stat}^{\rm full}=H_{\rm mat}^{R}+H_{\rm mat}^{R^\perp},
  \qquad
  H_{\rm mat}^{R^\perp}
  =\int_{\pi/2}^{3\pi/2}\ed\chi\,\cos\chi\,T_{\eta\eta}(0,\chi),
  \label{eq:two-sided-split}
\end{align}
and \(\cos\chi<0\) on the interior of \(R^\perp\). Hence the global generator is a two-sided charge, whereas the canonical JT constraint below contains the one-sided charge \(H_{\rm mat}^{R}\). Equality of their derivations on \(\mathcal A_{\rm full}(R)\) would require a separate domain and self-adjointness analysis at the bifurcation points.
\end{remark}

Subject to \Cref{ass:stress-tensor}, the gravitational background is JT gravity \cite{Teitelboim,Jackiw} with positive cosmological constant coupled to the matter CFT,
\begin{align}
  I = \frac{1}{16\pi G_2}\int_M \ed^2x\,\sqrt{-g}\,
      \Phi\Bigl(R-\frac{2}{L^2}\Bigr)
 +\frac{1}{8\pi G_2}\int_{\partial M}
      \epsilon\,\sqrt{|\gamma|}\,\Phi K
 +I_{\rm corner}+I_{\rm CFT}[g] ,
  \label{eq:jt-action}
\end{align}
with \(\epsilon=n^\mu n_\mu=\pm1\); null segments require their corresponding null-boundary term and joints require \(I_{\rm corner}\). Both must be specified whenever a static patch is treated as a manifold with boundary. In the positive-curvature conventions of \cite{MTY,CotlerJensenMaloney} the metric equation of motion \(R=2/L^2\) is solved by \eqref{eq:global-ds-metric} and the dilaton equation is
\begin{align}
  (g_{\mu\nu}\Box-\nabla_\mu\nabla_\nu)\Phi
  +\frac{1}{L^2}g_{\mu\nu}\Phi
  =8\pi G_2\,T_{\mu\nu} .
  \label{eq:dilaton-eom}
\end{align}
The ADM reduction is carried out in \Cref{app:adm}. On the time-reflection-symmetric slice the lapse constraint reduces to
\begin{align}
  \frac{1}{8\pi G_2}\bigl(\partial_\chi^2\Phi+\Phi\bigr)
  +T_{\eta\eta}=0 ,
  \label{eq:adm-constraint-density}
\end{align}
where a common factor \(1/L\) has been removed; the \(\eta\eta\) component of \eqref{eq:dilaton-eom} evaluated at \(\eta=0\) reproduces \eqref{eq:adm-constraint-density} exactly, which fixes the relative sign in \eqref{eq:dilaton-eom}. Smearing \eqref{eq:adm-constraint-density} with the Killing lapse \(N_\xi=L\cos\chi\) over \(R\) gives, as computed in \Cref{app:adm-smearing},
\begin{align}
  \frac{\Phi_++\Phi_-}{8\pi G_2}+H_{\rm mat}^{R}=0,
  \qquad \Phi_\pm:=\Phi(0,\pm\pi/2).
  \label{eq:boundary-term}
\end{align}
The points \((0,\pm\pi/2)\) are the two spatial endpoints, or bifurcation points, of the patch on \(\Sigma_0\); they are not its future and past tips, which lie at \((\pm\pi/2,0)\) in the conformal completion.

\begin{proposition}[Classical constraint-surface charge relation]
\label{prop:classical-charge-relation}
Work in classical JT gravity coupled to a matter theory satisfying \Cref{ass:stress-tensor}(a). Let \(Q_{\rm JT}^{R}:=-(\Phi_++\Phi_-)/(8\pi G_2)\) be the endpoint improvement term generated by allowing the endpoint values of \(\Phi\) to vary in the smeared constraint, as derived in \Cref{app:adm-smearing}. Then, on the constraint surface of the time-reflection-symmetric slice,
\begin{align}
  Q_{\rm JT}^{R}=H_{\rm mat}^{R}.
  \label{eq:classical-charge-relation}
\end{align}
\end{proposition}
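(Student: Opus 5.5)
The plan is to derive \eqref{eq:classical-charge-relation} by smearing the lapse constraint \eqref{eq:adm-constraint-density} against the Killing lapse over \(R\) and integrating by parts. The key observation is that \(N_\xi=L\cos\chi\) solves the adjoint equation \(N''+N=0\) and vanishes at the endpoints \(\chi=\pm\pi/2\), so all bulk dilaton terms cancel and only endpoint values of \(\Phi\) survive.

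\textbf{Step 1 (smearing).} Multiply \eqref{eq:adm-constraint-density} by \(\cos\chi\) and integrate over \(R=(-\pi/2,\pi/2)\). The overall factor of \(L\) cancels against the normalization already removed in \eqref{eq:adm-constraint-density}. The matter term then gives exactly \(H_{\rm mat}^{R}\) by \eqref{eq:hmat-definition}.

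\textbf{Step 2 (integration by parts).} Integrate by parts twice:
\begin{align*}
 \int_{-\pi/2}^{\pi/2}\cos\chi\,(\Phi''+\Phi)\,\ed\chi
 =\Bigl[\cos\chi\,\Phi'+\sin\chi\,\Phi\Bigr]_{-\pi/2}^{\pi/2}
 +\int_{-\pi/2}^{\pi/2}\bigl(\cos\chi''+\cos\chi\bigr)\Phi\,\ed\chi .
\end{align*}
The remaining integral vanishes because \(\cos''+\cos=0\). In the boundary term, \(\cos(\pm\pi/2)=0\) removes the \(\Phi'\) contributions, and the \(\sin\chi\,\Phi\) terms give \(\Phi_++\Phi_-\). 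Hence on the constraint surface
\[
\frac{\Phi_++\Phi_-}{8\pi G_2}+H_{\rm mat}^R=0,
\]
which is \eqref{eq:boundary-term}. With \(Q^R_{\rm JT}:=-(\Phi_++\Phi_-)/(8\pi G_2)\), rearranging gives \eqref{eq:classical-charge-relation}.

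\textbf{Step 3 (justifying the improvement term).} Check that this endpoint term is indeed the improvement term. Vary the smeared constraint \(C[N]=\int N(\cdots)\) with respect to \(\Phi\), allowing the endpoint values of \(\Phi\) to vary. The boundary variation is \(\bigl[N\delta\Phi'-N'\delta\Phi\bigr]\). Using \(N(\pm\pi/2)=0\) and \(N'(\pm\pi/2)=\mp L\), this equals \(L(\delta\Phi_++\delta\Phi_-)\). A differentiable generator therefore requires adding \(-L(\Phi_++\Phi_-)/(8\pi G_2)\), i.e.\ \(L\,Q^R_{\rm JT}\), which matches \Cref{app:adm-smearing}.

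\textbf{Main obstacle.} The delicate part is regularity, which needs \(\Phi\in C^2\) on \(\overline R\) and \(T_{\eta\eta}\) integrable against \(\cos\chi\). Classically these hold for smooth initial data, and anomaly freedom (\Cref{ass:stress-tensor}(a)) ensures \(T_{\mu\nu}\) is conserved and well-defined, so the argument stays at the classical level.
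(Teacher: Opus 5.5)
Your proposal is correct and follows the paper's route: the paper's proof simply cites the Killing-smeared constraint \eqref{eq:boundary-term}, which the appendix derives by the same double integration by parts against \(N_\xi=L\cos\chi\), and it justifies the improvement term by the same endpoint-variation argument.

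One small bookkeeping slip is in your Step 3. There you vary the reduced density \eqref{eq:adm-constraint-density}, from which the factor \(1/L\) has already been removed, against \(N=L\cos\chi\), and so you obtain \(L\,Q^R_{\rm JT}\). The unreduced density \(\mathcal C_{\rm grav}|_{\rm sym}=(8\pi G_2L)^{-1}(\Phi''+\Phi)\) carries a compensating \(1/L\), as your own Step 1 notes. With it, the improvement term is \(Q^R_{\rm JT}\) itself, as in \eqref{eq:endpoint-variation}, not \(L\,Q^R_{\rm JT}\). The charge relation is unaffected, since both sides rescale together.
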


\begin{proof}
Immediate from \eqref{eq:boundary-term} and the definition of \(Q_{\rm JT}^R\); the derivation, including the orientation, the boundary terms and the \(L\)-independence of the cancellation, is \Cref{app:adm-smearing}.
\end{proof}

\begin{remark}[Classical character of the charge relation]
\label{rem:classical-charge-status}
Equation~\eqref{eq:classical-charge-relation} is a relation between classical phase-space functions on the constraint surface. A quantum operator relation would additionally require a differentiable generator with specified null-boundary and corner conditions, together with a quantization of the endpoint data \(\Phi_\pm\).
\end{remark}

\begin{remark}[Comparison of normalizations]
\label{rem:normalization-compatibility}
The chiral-net identity \(K_R=2\pi H_{\rm stat}^{\rm ch}\) and the full-CFT/JT relation \(Q^R_{\rm JT}=H^R_{\rm mat}\) use the same classical Killing field \eqref{eq:killing-field} and the same dimensionless parameter \(\tau\). Their normalization conventions are therefore geometrically comparable. Since the first identity belongs to the chiral model and the second is classical and full-CFT based, they do not furnish an operator relation between \(K_R\) and
\(Q^R_{\rm JT}\).
\end{remark}

\subsection{The Reference Crossed Product and Its Finite Corner}
\label{sec:reference-crossed-product}

From here to the end of \Cref{sec:activation-model} only \Cref{ass:regional-net} is used.

For the algebraic construction we use the regular representation of the continuous core. The auxiliary Hilbert space is \(L^2(\mathbb R,\ed q)\), where \(q_{\rm clk}\) is a dimensionless modular-clock coordinate. To keep the crossed-product formulas readable we write \(q:=q_{\rm clk}\) and \(h:=p_{\rm clk}\), where
\begin{align}
 p_{\rm clk}:=-i\partial_{q_{\rm clk}},\qquad
 [q_{\rm clk},p_{\rm clk}]=i,\qquad V(s):=e^{is p_{\rm clk}} .
 \label{eq:clock-variables}
\end{align}
The symbol \(h\) always denotes this auxiliary generator, never the ADM spatial metric of \Cref{app:adm}. For \(x\in\mathcal A_{\rm CFT}(u)\) put
\begin{align}
 D_u(x):=\sigma_q^u(x)
 =e^{-iqK_u}(x\otimes\id)e^{iqK_u},
 \label{eq:regular-core-representation}
\end{align}
acting by fibrewise multiplication, \((D_u(x)\xi)(q)=\sigma_q^u(x)\xi(q)\), while \((V(s)\xi)(q)=\xi(q+s)\). Hence
\begin{align}
 V(s)D_u(x)V(s)^*=D_u\bigl(\sigma_s^u(x)\bigr),
 \label{eq:core-covariance}
\end{align}
which is the crossed-product covariance relation. This is the observer-clock mechanism of \cite{CLPW}, expressed in modular units. For \(u=R\), \(q=-\tau/(2\pi)\) along static evolution by \eqref{eq:static-modular-relation}. The variables \(q\) and \(h\) remain auxiliary in the JT comparison of \Cref{app:adm}.

The whole-circle algebra \(\mathcal A_{\rm CFT}(S^1)=B(\mathcal H_{\rm CFT})\) is a type-\(\mathrm I_\infty\) factor with inner modular flow. We therefore use the type-\(\mathrm{III}_1\) local factor \(\mathcal A_{\rm CFT}(R)\) and set
\begin{align}
  \hat{\mathcal{M}}
  :=\bigl(\mathcal A_{\rm CFT}(R)\rtimes_{\sigma^R}\mathbb R\bigr)''
  =\{D_R(x),V(s):x\in\mathcal A_{\rm CFT}(R),\ s\in\mathbb R\}'' ,
  \label{eq:reference-core}
\end{align}
which is a type-\(\mathrm{II}_\infty\) factor by Takesaki's continuous-core theorem \cite{Takesaki}. The dual action is implemented by \(Z(r):=e^{irq}\). Since \(Z(r)^*hZ(r)=h+r\), it obeys
\begin{align}
 \widehat\sigma_r(D_R(x))=D_R(x),\qquad
 \widehat\sigma_r(V(s))=e^{-irs}V(s).
 \label{eq:dual-action-convention}
\end{align}
Let \(\widehat\Tr\) be the semifinite trace representative selected by the normalized vacuum state and the Haar/Fourier convention fixed in the proof of \Cref{prop:finite-trace}; it obeys
\begin{align}
 \widehat\Tr\circ\widehat\sigma_r=e^{-r}\widehat\Tr .
 \label{eq:trace-scaling}
\end{align}
Following the positive-spectrum corner used in \cite{CLPW}, put \(P:=\mathbf1_{[0,\infty)}(h)\) and define
\begin{align}
  \mathcal{M}_{\rm ref} := P\,\hat{\mathcal{M}}\,P .
  \label{eq:global-algebra-def}
\end{align}

\begin{proposition}[Finite trace on the positive-clock-spectrum corner]
\label{prop:finite-trace}
The corner \(\mathcal M_{\rm ref}=P\hat{\mathcal M}P\) carries a finite, normalized trace. Consequently \(\mathcal M_{\rm ref}\) is a type-\(\mathrm{II}_1\) factor.
\end{proposition}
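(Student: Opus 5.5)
We prove this with the standard CLPW computation: write the semifinite trace explicitly in terms of the vacuum and the clock variable, and evaluate it on \(P\). Concretely, the trace on the continuous core is given by the dual weight. Let \(\hat\omega\) be the weight on \(\hat{\mathcal M}\) dual to the vacuum state \(\omega_{\rm BD}|_{\mathcal A_{\rm CFT}(R)}\), and let \(\hat h\) be the positive self-adjoint operator affiliated with \(\hat{\mathcal M}\) that generates the implementing group \(V(s)\). With our conventions this is \(e^{-h}\), up to the sign fixed by \eqref{eq:core-covariance} and \eqref{eq:dual-action-convention}. We then set
\[
\widehat\Tr(\hat x):=\hat\omega\bigl(e^{h/2}\hat x\,e^{h/2}\bigr),
\]
in the form used by \cite{CLPW}. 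Takesaki's theorem guarantees that this is a faithful normal semifinite trace. The dual action \(Z(r)\) shifts \(h\mapsto h+r\), which gives the scaling \eqref{eq:trace-scaling} directly. The Haar/Fourier normalization is fixed at this point, so that \(\hat\omega\) restricted to functions of \(h\) is Lebesgue measure in \(h\) tensored with the vacuum expectation.

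The key step is evaluating \(\widehat\Tr(P)\). The projection \(P=\mathbf 1_{[0,\infty)}(h)\) lies in the von Neumann algebra generated by the \(V(s)\), so \(P\in\hat{\mathcal M}\). For a bounded Borel function \(f(h)\), the dual weight gives
\[
\hat\omega(f(h))=\int_{\mathbb R} f(h)\,dh
\]
times \(\langle\Omega_{\rm BD},\Omega_{\rm BD}\rangle\). This follows from the Plancherel formula for the dual weight on the group algebra, \(\hat\omega(\int g(s)V(s)\,ds)=2\pi g(0)\)-type identities. Hence
\[
\widehat\Tr(P)=\int_0^\infty e^{\mp h}\,dh=1,
\]
with the sign chosen by the orientation convention. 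The main obstacle is making this sign and normalization consistent with \eqref{eq:trace-scaling}. We therefore check that the choice yielding \(\widehat\Tr\circ\widehat\sigma_r=e^{-r}\widehat\Tr\) is the one that makes the positive-spectrum corner, and not the negative one, carry the finite weight. If it comes out reversed, we would redefine \(P\) accordingly; as stated, we expect \(e^{-h}\) to appear, and this gives \(\int_0^\infty e^{-h}\,dh=1\). We would also verify that \(\hat\omega\) on such functions really reduces to the vacuum norm times the Lebesgue integral, using \(\sigma^R\)-invariance of \(\Omega_{\rm BD}\).

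With a finite trace on \(P\) in hand, the restriction \(\tau:=\widehat\Tr|_{P\hat{\mathcal M}P}\) is a faithful normal tracial positive functional with \(\tau(P)=1\), so it is normalized. The type statement then follows:

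\begin{itemize}
\item Since \(\hat{\mathcal M}\) is a factor, the corner \(P\hat{\mathcal M}P\) is a factor, with centre \(Z(\hat{\mathcal M})P\).
\item It carries a faithful normal finite trace, so it is finite.
\item It is not type I, because a nonzero corner of a type-\(\mathrm{II}_\infty\) factor is type II.
\end{itemize}

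Therefore \(\mathcal M_{\rm ref}\) is a type-\(\mathrm{II}_1\) factor.
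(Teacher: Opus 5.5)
Your route is the paper's: dual-weight trace, reduction of the dual weight on functions of \(h\) to Lebesgue measure via Plancherel, evaluation on \(P\), then the corner argument. There is, however, a genuine gap at the one step that carries the content of the proposition, namely the sign of the exponential density. As written that step is wrong, and you then leave it unresolved.

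Since \(V(s)=e^{ish}\) implements \(\sigma^R\) by \eqref{eq:core-covariance}, the modular group of the dual weight \(\hat\omega\) is \(\operatorname{Ad}V(t)=\operatorname{Ad}\rho^{it}\) with \(\rho=e^{h}\), not \(e^{-h}\). The trace is therefore forced to be \(\hat\omega(\rho^{-1/2}\,\cdot\,\rho^{-1/2})=\hat\omega(e^{-h/2}\,\cdot\,e^{-h/2})\). The formula you wrote, \(\hat\omega(e^{h/2}\,\cdot\,e^{h/2})\), is not a trace, and Takesaki's theorem does not apply to it:
\begin{itemize}
\item its modular group is \(\operatorname{Ad}e^{2ith}\), which is not trivial;
\item because \(\hat\omega\circ\widehat\sigma_r=\hat\omega\) and \(\widehat\sigma_r(f(h))=f(h-r)\), it scales as \(e^{+r}\), contradicting \eqref{eq:trace-scaling}, so your claim that the scaling follows ``directly'' fails for your formula;
\item it gives \(\hat\omega(e^{h}P)=\int_0^\infty e^{x}\,\ed x=\infty\).
\end{itemize}

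The sign is not an orientation convention to be chosen afterwards. Traciality forces the spectral density of \(\widehat\Tr\) on functions of \(h\) to be proportional to \(e^{-x}\). Equivalently, \eqref{eq:trace-scaling} together with \(\widehat\sigma_r(f(h))=f(h-r)\) forces the same thing.

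Your fallback of ``redefining \(P\)'' is not available, because \(P=\mathbf 1_{[0,\infty)}(h)\) is fixed by the statement. With the opposite sign it would be \(\id-P\) that has finite trace, and \(P\hat{\mathcal M}P\) would be type \(\mathrm{II}_\infty\). So the proposal as written does not show that the \emph{positive}-spectrum corner is finite, which is exactly what is asserted.

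The repair is the paper's. Take \(\rho_R^{is}=V(s)\), so \(\log\rho_R=h\). With the Haar convention \(\ed s/(2\pi)\) this gives \(\widehat\Tr(f(h))=\int_{\mathbb R}e^{-x}f(x)\,\ed x\), hence \(\widehat\Tr(P)=1\). Your closing factor, finiteness and type-II argument then goes through unchanged.
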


\begin{proof}
The continuous core of a type-\(\mathrm{III}_1\) factor is a type-\(\mathrm{II}_\infty\) factor and carries a faithful normal semifinite trace, unique up to a positive scalar \cite{Takesaki}. Two further steps are needed: membership of \(P\), and the value of its trace.

\emph{Membership.} Every \(V(s)\) belongs to \(\hat{\mathcal M}\) by \eqref{eq:reference-core}. Hence its Stone generator \(h\) is affiliated with \(\hat{\mathcal M}\), and Borel functional calculus gives \(P=\mathbf1_{[0,\infty)}(h)\in\hat{\mathcal M}\). This uses the full unitary group \(V(s)\), not merely the fact that \(P\) acts on the same Hilbert space.

\emph{Normalization.} Let \(\varphi_R(x)=\langle\Omega_{\rm BD}, x\,\Omega_{\rm BD}\rangle\), let \(\widehat\varphi_R\) be its dual weight on the crossed product, and let \(\rho_R\) be the positive affiliated operator with \(\rho_R^{is}=V(s)\). In the regular representation \(V(s)=e^{ish}\), so uniqueness of the Stone generator gives \(\log\rho_R=h\). For \(x\in\hat{\mathcal M}_+\) the trace selected by the dual-weight construction
is
\begin{align}
 \widehat\Tr(x)
 =\widehat\varphi_R\!\left(
    \rho_R^{-1/2}x\,\rho_R^{-1/2}\right).
 \label{eq:dual-weight-trace}
\end{align}
Because \(\rho_R\) is generally unbounded, the right-hand side is not an ordinary product of bounded operators: it denotes the value of \(\widehat\varphi_R\) on the associated positive quadratic form, equivalently on the corresponding element of the extended positive cone, obtained by the standard spatial-derivative calculus for weights, and is allowed to be \(+\infty\). Writing \(\widehat\varphi_R(\rho_R^{-1}x)\) is therefore only shorthand, literal when the operators commute and the product lies in the domain of the weight.

Fix the normalization explicitly. The unitary Fourier transform on the clock factor is
\[
 (\mathcal F\xi)(x)
  =(2\pi)^{-1/2}\int_{\mathbb R}e^{-iqx}\xi(q)\,\ed q ,
\]
so that \(h=-i\partial_q\) becomes multiplication by \(x\) and \(V(s)\) becomes multiplication by \(e^{isx}\). For integrated crossed-product elements we use the Haar measure \(\ed\mu(s)=\ed s/(2\pi)\):
\[
 V(g):=\int_{\mathbb R}g(s)V(s)\,\ed\mu(s),\qquad
 \widetilde g(x):=\frac1{2\pi}\int_{\mathbb R}
      e^{isx}g(s)\,\ed s ,
 \qquad g(s)=\int_{\mathbb R}e^{-isx}\widetilde g(x)\,\ed x ,
\]
whence \(g(0)=\int_{\mathbb R}\widetilde g(x)\,\ed x\). This convention fixes the trace representative and prevents an implicit factor of \(2\pi\).

If \(f\ge0\) is Borel with \(\int_{\mathbb R}e^{-x}f(x)\,\ed x<\infty\), the canonical operator-valued weight from the crossed product onto \(\mathcal A_{\rm CFT}(R)\) gives
\begin{align}
 \widehat\Tr\!\left(f(h)\right)
 =\widehat\varphi_R\!\left(\rho_R^{-1/2}
       f(\log\rho_R)\rho_R^{-1/2}\right)
   =\varphi_R(\id)\int_{\mathbb R}e^{-x}f(x)\,\ed x .
 \label{eq:core-spectral-trace}
\end{align}
Here \(f(h)=f(\log\rho_R)\) commutes with \(\rho_R=e^{h}\), so the symmetric extended-positive-cone expression reduces to the spectral function \(e^{-h}f(h)\); this simplification is used only for Borel functions of \(h\). The result is the Fourier/Haar computation in the regular crossed product \cite{Takesaki,CLPW}, and it is strictly stronger than \eqref{eq:trace-scaling}, which alone fixes the spectral density only up to a constant. Taking monotone approximations to \(f=\mathbf1_{[0,\infty)}\) and using \(\varphi_R(\id)=1\),
\begin{align}
  \widehat\Tr(P)=\int_0^\infty e^{-x}\,\ed x=1 .
  \label{eq:trace-of-identity}
\end{align}
Thus \(P\) is a nonzero finite projection. A nonzero corner \(P\mathcal NP\) of a factor \(\mathcal N\) with \(P\in\mathcal N\) is again a factor, and a finite corner of a type-\(\mathrm{II}_\infty\) factor is a type-\(\mathrm{II}_1\) factor. Its normalized trace is
\begin{align}
 \tau_{R,P}(PxP)
 :=\frac{\widehat\Tr(PxP)}{\widehat\Tr(P)}
 =\widehat\Tr(PxP).
 \label{eq:normalized-corner-trace}
\end{align}
\end{proof}

\begin{remark}[Normalization of the trace]
\label{rem:trace-scale}
As an abstract type-\(\mathrm{II}_\infty\) factor, \(\hat{\mathcal M}\) has a faithful normal semifinite trace only up to a positive scalar. Equation~\eqref{eq:trace-of-identity} is nevertheless derived, not postulated, once one has chosen the normalized vacuum state, the Haar/Fourier convention above, and the dual-weight trace \eqref{eq:dual-weight-trace}. Rescaling the semifinite trace multiplies both sides of \(\widehat\Tr(P)=1\) by a common positive factor. This is the same ambiguity that leaves entropies on a type-\(\mathrm{II}_\infty\) algebra defined only up to an additive shift \cite{PeningtonWitten2023}. The restriction \(x\mapsto\widehat\Tr(PxP)\) is an unnormalized finite trace until one divides by \(\widehat\Tr(P)\); the last equality in \eqref{eq:normalized-corner-trace} uses this particular dual-weight normalization.
\end{remark}

\begin{remark}[Algebra types]
\label{rem:algebra-types}
Collecting the types: \(\mathcal A_{\rm CFT}(u)\) is type \(\mathrm{III}_1\) for every admissible \(u\); \(\hat{\mathcal M}(u)\) is type \(\mathrm{II}_\infty\); \(P\) is a projection of finite nonzero trace in \(\hat{\mathcal M}(u)\); and the corner \(P\hat{\mathcal M}(u)P\) is type \(\mathrm{II}_1\). Thus the static-patch matter algebra \(\mathcal A_{\rm CFT}(R)\) is type \(\mathrm{III}_1\), while its finite continuous-core corner \(\mathcal M_{\rm ref}\) is type \(\mathrm{II}_1\).
\end{remark}

The concrete assignment of \Cref{sec:subregion-algebra-def} depends on the relative position of the reference and protected-gap embeddings. It is covariant under a \emph{simultaneous} spatial rotation: if \(\rho\) is a rotation and \(U_\rho\) its vacuum implementer (acting trivially on the clock), and the superscript records the chosen reference arc, then
\begin{align}
 \mathcal A_{\rm model}^{\,\rho R}(\rho a)
 =U_\rho\,\mathcal A_{\rm model}^{\,R}(a)\,U_\rho^* .
 \label{eq:reference-covariance}
\end{align}
The superscript is suppressed after fixing \(R\) as in \eqref{eq:reference-region}.

\begin{remark}[Reference algebra]
\label{rem:what-mglobal-represents}
\(\mathcal M_{\rm ref}\) is the concrete corner algebra for the fixed arc \(R\), chosen from a unitarily equivalent family of single-patch corners. A construction containing both opposite static patches would instead use the two-observer framework of \cite{CLPW}.
\end{remark}

\subsection{Ambient Representation of Dressed Regional Algebras}
\label{sec:ambient-representation}

To form \(\mathcal M(\widetilde A)'\cap\mathcal M_{\rm ref}\), the regional crossed products must be represented in a common \(B(\mathcal H)\). The next lemma supplies this representation.

\begin{lemma}[Ambient representation of dressed subregion algebras]
\label{lem:ambient-representation}
Let \(u\subsetneq S^1\) be admissible, with both \(u\) and \(u^\perp\) containing a nonempty open arc. Then:
\begin{enumerate}[label=(\alph*)]
\item \(\ket{\rm BD}\) is cyclic and separating for \(\mathcal A_{\rm CFT}(u)\), so Tomita--Takesaki theory supplies \(\Delta_u\) and \(K_u=-\log\Delta_u\) with \(K_u\ket{\rm BD}=0\) and \(\sigma_t^u=\operatorname{Ad}(e^{-itK_u})\) preserving \(\mathcal A_{\rm CFT}(u)\).
\item Dressing \(\mathcal A_{\rm CFT}(u)\) by \(K_u\) on the \emph{same} clock space \(L^2(\mathbb R)\) used for \(R\), and applying the \emph{same} projection \(P=\mathbf1_{[0,\infty)}(h)\), defines
\begin{align}
  \hat{\mathcal M}(u)
  &:=\bigl(\mathcal A_{\rm CFT}(u)\rtimes_{\sigma^u}\mathbb R\bigr)'',
  &
  \mathcal M(D(u))
  &:=P\hat{\mathcal M}(u)P
  \subset B\bigl(\mathcal H_{\rm CFT}\otimes L^2(\mathbb R)\bigr) .
  \label{eq:ambient-dressed-algebra}
\end{align}
\item \(\hat{\mathcal M}(u)\) is the type-\(\mathrm{II}_\infty\) continuous core of \(\mathcal A_{\rm CFT}(u)\), its dual-weight trace \(\widehat\Tr_u\) satisfies
\begin{align}
 0<\widehat\Tr_u(P)=1<\infty,\qquad
 \tau_{u,P}(x):=\frac{\widehat\Tr_u(x)}{\widehat\Tr_u(P)}
 \quad \bigl(x\in P\hat{\mathcal M}(u)P\bigr),
 \label{eq:regional-corner-trace}
\end{align}
and \(\mathcal M(D(u))\) is a type-\(\mathrm{II}_1\) factor. For \(u=R\) the construction reduces to \(\mathcal M_{\rm ref}\) of \eqref{eq:global-algebra-def}.
\end{enumerate}
\end{lemma}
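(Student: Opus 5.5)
The plan is to prove the three items in order, reusing the single-region machinery of \Cref{prop:finite-trace} essentially verbatim with \(R\) replaced by \(u\).

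For (a), pick nonempty open arcs \(J\subseteq u\) and \(J^\perp_0\subseteq u^\perp\). By \Cref{ass:regional-net}(ii) the vacuum is cyclic for \(\mathcal A_{\rm CFT}(J)\subseteq\mathcal A_{\rm CFT}(u)\), hence cyclic for \(\mathcal A_{\rm CFT}(u)\). Next, \(J^\perp_0\) is spacelike to every component of \(u\). Locality together with the definition of the multi-arc algebra as the join of its arc algebras, \eqref{eq:multi-interval-split}, gives \(\mathcal A_{\rm CFT}(J^\perp_0)\subseteq\mathcal A_{\rm CFT}(u)'\). Cyclicity for the commutant then yields separatingness. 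Tomita--Takesaki theory now supplies \(\Delta_u\) with \(\Delta_u\ket{\rm BD}=\ket{\rm BD}\), hence \(K_u\ket{\rm BD}=0\), and \(\sigma^u_t(\mathcal A_{\rm CFT}(u))=\mathcal A_{\rm CFT}(u)\).

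For (b), the construction \eqref{eq:regular-core-representation} with \(K_u\) in place of \(K_R\) is well defined on \(\mathcal H_{\rm CFT}\otimes L^2(\mathbb R)\). The covariance relation \eqref{eq:core-covariance} holds with the same \(V(s)\), since it only uses \((V(s)\xi)(q)=\xi(q+s)\) and the group law of \(\sigma^u\). Thus \(\{D_u(x),V(s)\}''\) is the regular representation of the crossed product. Because \(V(s)\in\hat{\mathcal M}(u)\), the membership argument of \Cref{prop:finite-trace} gives \(P\in\hat{\mathcal M}(u)\), so \(P\hat{\mathcal M}(u)P\) is a well-defined von Neumann algebra on the stated space.

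For (c), \Cref{ass:regional-net}(i) together with strong additivity and the split discussion following it makes \(\mathcal A_{\rm CFT}(u)\) a type-\(\mathrm{III}_1\) factor. Its continuous core is then a type-\(\mathrm{II}_\infty\) factor by Takesaki. The spatial realization of the core is independent of the choice of faithful normal state up to unitary equivalence (Connes cocycle), so we may use the vacuum state. The trace computation \eqref{eq:core-spectral-trace} goes through with \(\varphi_u(x)=\langle\Omega_{\rm BD},x\Omega_{\rm BD}\rangle\), with \(\rho_u\) defined by \(\rho_u^{is}=V(s)\) and the same Haar/Fourier convention. The only inputs are \(\log\rho_u=h\) and \(\varphi_u(\id)=1\), so \(\widehat\Tr_u(P)=\int_0^\infty e^{-x}\,\ed x=1\). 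The corner of a factor by a nonzero finite projection is a type-\(\mathrm{II}_1\) factor. For \(u=R\) every definition literally coincides with \eqref{eq:reference-core} and \eqref{eq:global-algebra-def}.

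The main obstacle is the normalization step in (c). The formula \eqref{eq:core-spectral-trace} was justified in \Cref{prop:finite-trace} via the dual weight of the vacuum for \(R\), and it must be checked that nothing there used geometric modular action or the specific region. I would argue this by noting that the dual-weight and operator-valued-weight computation depends only on the standard pair \((\mathcal A_{\rm CFT}(u),\Omega_{\rm BD})\) and the regular-representation conventions. This is consistent with \Cref{rem:disconnected-modular-flow}: for disconnected \(u\) only Tomita flow is available, and only Tomita flow is needed. A secondary point is factoriality of \(\mathcal A_{\rm CFT}(u)\) for disconnected \(u\), which I would take from the paragraph after \Cref{ass:regional-net} rather than reprove.
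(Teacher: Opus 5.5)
Your proposal is correct and follows essentially the same route as the paper. It obtains cyclicity and separatingness from an arc in \(u\) and an arc in \(u^\perp\), gets membership of \(P\) from the shared group \(V(s)\), and repeats the dual-weight computation \eqref{eq:core-spectral-trace} verbatim with \(\log\rho_u=h\) and \(\varphi_u(\id)=1\). The Connes-cocycle aside is unnecessary, since \(\hat{\mathcal M}(u)\) is defined directly from the vacuum modular flow \(\sigma^u\).
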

\begin{proof}
The discussion after \Cref{ass:regional-net} shows that \(\mathcal A_{\rm CFT}(u)\) is a type-\(\mathrm{III}_1\) factor and that the vacuum is cyclic and separating for it. Tomita--Takesaki theory gives (a), and the regular representation \eqref{eq:regular-core-representation} gives (b). Takesaki's continuous-core theorem makes \(\hat{\mathcal M}(u)\) a type-\(\mathrm{II}_\infty\) factor \cite{Takesaki}.

The membership and trace argument is uniform in \(u\). The defining generators include the same full unitary group \(V(s)=e^{ish}\), so \(h\) is affiliated with every represented core and the common spectral projection \(P=\mathbf1_{[0,\infty)}(h)\) belongs to each one. Let \(\varphi_u\) be the normalized vacuum restriction and \(\widehat\varphi_u\) its dual weight; the corresponding density \(\rho_u\) obeys \(\rho_u^{is}=V(s)\), hence \(\log\rho_u=h\). Repeating \eqref{eq:dual-weight-trace}--\eqref{eq:core-spectral-trace} verbatim gives
\[
 \widehat\Tr_u(f(h))
 =\int_{\mathbb R}e^{-x}f(x)\,\ed x,\qquad
 \widehat\Tr_u(P)=1 ,
\]
so \(P\) is a nonzero finite projection in every regional core. As in \Cref{prop:finite-trace}, a nonzero corner of a factor by a projection of the factor is a factor, and a finite corner of a type-\(\mathrm{II}_\infty\) factor is type \(\mathrm{II}_1\). This proves (c). No physical inverse temperature is inserted for a general region: Tomita modular time is already normalized so that \(\varphi_u\) is \(1\)-KMS.
\end{proof}

Because \(\mathcal H_{\rm CFT}\) and \(L^2(\mathbb R)\) are fixed, \Cref{lem:ambient-representation} places all regional cores in the same \(B(\mathcal H_{\rm CFT}\otimes L^2(\mathbb R))\). Commutants and intersections taken below are therefore well defined even though the regional algebras are not nested.

\begin{remark}[Auxiliary clock]
\label{rem:algebraic-vs-geometric-clock}
\Cref{lem:ambient-representation} uses the modular Hamiltonian \(K_u\) of each region together with one auxiliary \(L^2(\mathbb R)\). The positive-spectrum condition concerns its generator \(h\), while a gravitational implementation would require a clock derived from the JT phase space.
\end{remark}

\begin{remark}[Regional traces]
\label{rem:common-projection-traces}
The fixed regular representation supplies one operator \(P=\mathbf1_{[0,\infty)}(h)\) for every region, because every represented core contains the same \(V(s)\) and hence every spectral projection of its generator. Each regional core instead carries its own dual-weight trace \(\widehat\Tr_u\), normalized by the regional vacuum state and the common Haar/Fourier convention so that \(\widehat\Tr_u(P)=1\). These traces are regional objects; the ambient representation carries no common semifinite trace restricting to all of them.
\end{remark}

\subsection{Algebras of Protected Diamonds}
\label{sec:diamond-algebras}

A protected gap \(b\) with \(\beta=|b|\ge\pi\) has a domain of dependence \(B:=D(b)\) meeting both conformal boundaries. Using the same auxiliary modular clock and the same projection \(P\) we set
\begin{align}
  \mathcal{M}(B) := P\bigl( \mathcal{A}_{\rm CFT}(b)
  \rtimes_{\sigma^{b}} \mathbb{R} \bigr)''P ,
  \label{eq:diamond-algebra}
\end{align}
where \(\sigma^b\) is the group \eqref{eq:modular-convention} for the restricted Bunch--Davies state. \Cref{lem:ambient-representation} realizes \(\mathcal M(B)\) on the common ambient Hilbert space, and \Cref{lem:conformal-covariance} gives \(\mathcal M(B)=U_g\mathcal M_{\rm ref}U_g^{-1}\) for a M\"obius map \(g\) with \(g(R)=b\), the unitary \(U_g\) acting trivially on the clock factor. By \Cref{thm:gap-protection} two disjoint gaps cannot both be protected, so no multi-gap factorization is needed in this geometry. We also adopt the convention \(\mathcal M(\varnothing):=\mathbb C\id\).

A M\"obius transformation cannot map a connected arc onto a disconnected region, so for disconnected regions only the common ambient representation of \Cref{lem:ambient-representation} is available; there is no conjugation argument to fall back on.

\section{The Inclusion Criterion and Its Obstruction}
\label{sec:isotony}

\begin{remark}[KLM and regional inclusions]
\label{rem:klm-scope}
Ordinary isotony gives \(\mathcal A_{\rm CFT}(u)\subseteq\mathcal A_{\rm CFT}(v)\) for \(u\subseteq v\). For a separated \(n\)-arc region \(E\), the Kawahigashi--Longo--M\"uger (KLM) inclusion is instead
\begin{align}
  \mathcal A_{\rm CFT}(E)\subseteq
  \widehat{\mathcal A}_{\rm CFT}(E)
  :=\mathcal A_{\rm CFT}(E^\perp)' ,
  \label{eq:klm-inclusion}
\end{align}
which is trivial for a single arc under Haag duality and nontrivial for disconnected \(E\). KLM analyse \eqref{eq:klm-inclusion} for a local irreducible chiral net with the split property and Haag duality \cite{KLM2001}. When the net is completely rational, the two-arc \(\mu\)-index \(\mu_{\mathcal A}=[\widehat{\mathcal A}_{\rm CFT}(E): \mathcal A_{\rm CFT}(E)]\) is finite, independent of the two-arc region, and equals the global dimension of the sector category; the \(n\)-arc inclusions are treated separately in their Section~6 and their index is not a region-independent copy of the two-arc index. Its canonical expectation \(\widehat{\mathcal A}_{\rm CFT}(E)\to\mathcal A_{\rm CFT}(E)\) differs from the expectation \(\mathcal A_{\rm CFT}(v)\to\mathcal A_{\rm CFT}(u)\) governing an arbitrary nested pair below. The KLM theorem therefore provides context for multi-arc duality, while the regional continuous-core embeddings require separate compatibility data.
\end{remark}

We retain \Cref{ass:regional-net} throughout this section. The same arguments apply to a full two-dimensional net once regional standardness, factoriality, and covariance are verified directly.

Every region is represented on the same \(\mathcal H_{\rm CFT}\otimes L^2(\mathbb R,\ed q)\), with the same multiplication operator \(q\), the same translations \(V(s)=e^{ish}\), the same dual implementers \(Z(r)=e^{irq}\), and the same projection \(P=\mathbf1_{[0,\infty)}(h)\). Only the fibre representation \(D_u(x)\xi(q)=\sigma_q^u(x)\xi(q)\) depends on \(u\). It is faithful because \(\lVert D_u(x)\rVert=\operatorname*{ess\,sup}_q \lVert\sigma_q^u(x)\rVert=\lVert x\rVert\). Together with \(V\) this is the faithful regular representation of the von Neumann crossed product \cite{Takesaki}, so \(\hat{\mathcal M}(u)\) always denotes the concrete represented image, not an abstract crossed product up to isomorphism.

\begin{theorem}[Shared-clock inclusion criterion]
\label{thm:net-compatibility-criterion}
Let \(u\subset v\subsetneq S^1\) be admissible regions in the class of \Cref{lem:ambient-representation}. The following are equivalent.
\begin{enumerate}[label=(\roman*)]
\item The concrete regular images satisfy
\(\hat{\mathcal M}(u)\subseteq\hat{\mathcal M}(v)\).
\item Their common spectral corners satisfy
\(P\hat{\mathcal M}(u)P\subseteq P\hat{\mathcal M}(v)P\).
\item \(\sigma^v_t(x)=\sigma^u_t(x)\) for every \(x\in\mathcal A_{\rm CFT}(u)\) and every \(t\in\mathbb R\); that is, \(v\)'s own modular flow restricts on \(\mathcal A_{\rm CFT}(u)\) to \(u\)'s own modular flow.
\item \(\mathcal A_{\rm CFT}(u)\) is globally invariant under \(\sigma^v\).
\item There is a normal conditional expectation \(E:\mathcal A_{\rm CFT}(v)\to\mathcal A_{\rm CFT}(u)\) preserving the vacuum, \(\omega_{\rm BD}\circ E=\omega_{\rm BD}\).
\end{enumerate}
The inclusions in (i) and (ii) are literal inclusions in the displayed common representation, and no trace comparison enters the criterion. The statement applies equally to connected and disconnected regions; the only regional inputs are \(\mathcal A_{\rm CFT}(u)\subseteq\mathcal A_{\rm CFT}(v)\) and faithful normal modular data for both regions. Conditions (iv) and (v) mention neither the clock nor the projection: whether the dressed cores are nested is decided by a property of the undressed inclusion \(\mathcal A_{\rm CFT}(u)\subseteq\mathcal A_{\rm CFT}(v)\) alone.
\end{theorem}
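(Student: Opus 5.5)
The plan is to prove the cycle of implications in three pieces: the clock-side equivalence (i)\(\Leftrightarrow\)(ii), the bridge (i)\(\Leftrightarrow\)(iii), and the purely modular equivalences (iii)\(\Leftrightarrow\)(iv)\(\Leftrightarrow\)(v). The last of these refer only to the undressed inclusion \(\mathcal A_{\rm CFT}(u)\subseteq\mathcal A_{\rm CFT}(v)\). Throughout, I use that both cores are generated inside the same \(B(\mathcal H_{\rm CFT}\otimes L^2(\mathbb R))\), with the same \(V(s)\) and the same dual implementers \(Z(r)=e^{irq}\). The only region-dependent ingredient is the fibre map \(D_u\).

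\emph{(i)\(\Leftrightarrow\)(ii).} The direction (i)\(\Rightarrow\)(ii) is immediate, since \(P\) is common to both cores. For the converse I use the dual action. Conjugation by \(Z(r)\) fixes every \(D_u(x)\) and rescales \(V(s)\) by a phase, so it preserves each represented core, \(\hat{\mathcal M}(u)\) and \(\hat{\mathcal M}(v)\) alike. Since \(Z(r)hZ(r)^*=h-r\), it carries \(P\) to \(P_r:=\mathbf1_{[r,\infty)}(h)\). Given \(y\in\hat{\mathcal M}(u)\), put \(y':=Z(r)^*yZ(r)\in\hat{\mathcal M}(u)\). Then (ii) gives \(Py'P\in\hat{\mathcal M}(v)\). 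Conjugating back yields \(P_ryP_r\in\hat{\mathcal M}(v)\) for every \(r\). Letting \(r\to-\infty\), \(P_r\to\id\) strongly, so \(P_ryP_r\to y\) strongly. Strong closedness of \(\hat{\mathcal M}(v)\) then gives \(y\in\hat{\mathcal M}(v)\).

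\emph{(iii)\(\Rightarrow\)(i) and (i)\(\Rightarrow\)(iii).} If (iii) holds, then \(D_u(x)=D_v(x)\) fibrewise for \(x\in\mathcal A_{\rm CFT}(u)\subseteq\mathcal A_{\rm CFT}(v)\). The generators of \(\hat{\mathcal M}(u)\) therefore lie in \(\hat{\mathcal M}(v)\), which gives (i). For (i)\(\Rightarrow\)(iii), take \(x\in\mathcal A_{\rm CFT}(u)\). Then \(D_u(x)\in\hat{\mathcal M}(v)\), and \(D_u(x)\) is fixed by \(\operatorname{Ad}Z(r)\) for all \(r\). By Takesaki's identification of the fixed-point algebra of the dual action on the crossed product (Digernes–Haagerup/Takesaki duality), the fixed points in \(\hat{\mathcal M}(v)\) are exactly \(D_v(\mathcal A_{\rm CFT}(v))\). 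Hence \(D_u(x)=D_v(y)\) for some \(y\in\mathcal A_{\rm CFT}(v)\), which means \(\sigma^u_q(x)=\sigma^v_q(y)\) for almost every \(q\). Both sides are \(\sigma\)-weakly continuous in \(q\), so the equality holds for all \(q\). Setting \(q=0\) gives \(y=x\), and (iii) follows. I expect this step to be the main obstacle. It needs the fixed-point theorem stated for the concrete regular representation rather than an abstract crossed product, and the passage from an a.e. identity of fibres to a pointwise one. If the fixed-point theorem proves awkward to cite in this form, I would instead compress against vectors \(\xi\otimes f\) with \(f\) concentrated near \(q=0\) and use continuity directly.

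\emph{(iii)\(\Leftrightarrow\)(iv)\(\Leftrightarrow\)(v).} The direction (iii)\(\Rightarrow\)(iv) is trivial. For (iv)\(\Rightarrow\)(v) I invoke Takesaki's theorem on conditional expectations. \(\omega_{\rm BD}\) is faithful and normal on \(\mathcal A_{\rm CFT}(v)\), and \(\mathcal A_{\rm CFT}(u)\) is globally \(\sigma^v\)-invariant, so there is a unique \(\omega_{\rm BD}\)-preserving normal conditional expectation. For (v)\(\Rightarrow\)(iii), the same theorem shows that \(\sigma^v\) restricted to \(\mathcal A_{\rm CFT}(u)\) is the modular group of \(\omega_{\rm BD}|_{\mathcal A_{\rm CFT}(u)}\). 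Because the vacuum is cyclic and separating for \(\mathcal A_{\rm CFT}(u)\) (\Cref{lem:ambient-representation}(a)), that modular group is \(\sigma^u\). Alternatively, one can check that the restriction satisfies the KMS condition and appeal to uniqueness of the KMS group. Neither (iv) nor (v) involves the clock or \(P\), which gives the final remark of the statement.
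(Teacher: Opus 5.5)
Your proposal is correct and follows the paper's proof essentially step for step. The key moves are the same: (ii)\(\Rightarrow\)(i) via the common dual implementers \(Z(r)\), with \(P_r=Z(r)PZ(r)^*\to\id\) strongly and strong closedness of \(\hat{\mathcal M}(v)\); (i)\(\Rightarrow\)(iii) via Takesaki's fixed-point theorem, the upgrade from almost-everywhere to everywhere by continuity, and evaluation at \(q=0\); and (iii)\(\Leftrightarrow\)(iv)\(\Leftrightarrow\)(v) via Takesaki's conditional-expectation theorem together with uniqueness of the modular group. Your suggested fallback of compressing against vectors concentrated near \(q=0\) could not replace the fixed-point theorem, because that theorem is what produces \(y\) with \(D_u(x)=D_v(y)\); your main argument does not depend on the fallback, though.
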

\begin{proof}
\emph{(iii)\(\Rightarrow\)(i).} For \(x\in\mathcal A_{\rm CFT}(u)\), condition (iii) gives the operator identity \(D_u(x)=D_v(x)\). Both represented cores contain the identical translation group \(V(s)\). Hence every generator of \(\hat{\mathcal M}(u)\) lies in \(\hat{\mathcal M}(v)\).

\emph{(i)\(\Rightarrow\)(ii).} The common projection \(P\) belongs to both cores by \Cref{lem:ambient-representation}; multiplying the inclusion in (i) on both sides by \(P\) gives (ii).

\emph{(i)\(\Rightarrow\)(iii).} The dual action \(\widehat\sigma_r\) of \eqref{eq:dual-action-convention} is implemented on the ambient space by the single unitary group \(Z(r)=e^{irq}\), which commutes with every fibrewise multiplication operator; it therefore preserves both cores and acts on each as its own dual action. Takesaki's fixed-point theorem \cite{Takesaki} identifies the fixed algebra of \(\widehat\sigma\) in \(\hat{\mathcal M}(w)\) with \(D_w(\mathcal A_{\rm CFT}(w))\), for \(w\in\{u,v\}\). Thus (i) implies that for each \(x\in\mathcal A_{\rm CFT}(u)\) there is \(y\in\mathcal A_{\rm CFT}(v)\) with \(D_u(x)=D_v(y)\). Both sides are decomposable operators on \(L^2(\mathbb R;\mathcal H_{\rm CFT})\), and a decomposable operator determines its operator field almost everywhere, so \(\sigma_q^u(x)=\sigma_q^v(y)\) for almost every \(q\). Both fields are \(\sigma\)-weakly continuous in \(q\), so all their matrix elements are continuous and equality almost everywhere upgrades to equality for every \(q\). At \(q=0\) this reads \(x=y\), and at general \(q=t\) it is exactly (iii). Faithfulness of the regular representations excludes any kernel qualification.

\emph{(iii)\(\Leftrightarrow\)(iv)\(\Leftrightarrow\)(v).} We use Takesaki's theorem \cite{Takesaki1972} in the following form. Let \(\mathcal N\subseteq \mathcal M\) be von Neumann algebras and \(\omega\) a faithful normal state on \(\mathcal M\). Then \(\mathcal N\) is globally invariant under \(\sigma^\omega\) if and only if there is a normal conditional expectation \(\mathcal M\to\mathcal N\) preserving \(\omega\), in which case that expectation is unique and
\(\sigma^\omega_t|_{\mathcal N}=\sigma^{\omega|_{\mathcal N}}_t\).

Here \(\mathcal M=\mathcal A_{\rm CFT}(v)\), \(\mathcal N=\mathcal A_{\rm CFT}(u)\), and \(\omega\) is the vacuum, which restricts to a faithful normal state on \(\mathcal A_{\rm CFT}(v)\) because \(\Omega_{\rm BD}\) is separating for it. The stated equivalence of invariance with the existence of the expectation is (iv)\(\Leftrightarrow\)(v). If (iii) holds then \(\sigma^v_t(x)=\sigma^u_t(x)\in\mathcal A_{\rm CFT}(u)\), giving (iv). Conversely, if (iv) holds then the last clause identifies \(\sigma^v_t|_{\mathcal A_{\rm CFT}(u)}\) with the modular group of \(\omega_{\rm BD}|_{\mathcal A_{\rm CFT}(u)}\); that modular group is \(\sigma^u\) by definition and modular groups are unique, so (iii) follows.

\emph{(ii)\(\Rightarrow\)(i).} From \([q,h]=i\) one gets \(Z(r)PZ(r)^*=\mathbf1_{[r,\infty)}(h)=:P_r\), so each \(P_r\) belongs to both represented cores and \(P_r\to\id\) strongly as \(r\to-\infty\). Since the same \(Z(r)\) implements the dual action on both algebras, conjugating (ii) gives
\[
 P_r\hat{\mathcal M}(u)P_r\subseteq
 P_r\hat{\mathcal M}(v)P_r\subseteq\hat{\mathcal M}(v)
\]
for every \(r\). For \(x\in\hat{\mathcal M}(u)\) we have \(P_rxP_r-x=P_rx(P_r-\id)+(P_r-\id)x\to0\) strongly, and strong closedness of \(\hat{\mathcal M}(v)\) gives \(x\in\hat{\mathcal M}(v)\). Thus compression by this common \(P\) neither weakens nor obscures the criterion. The argument uses only the common dual action and \(P_r\to\id\), not any regional trace.
\end{proof}

\begin{remark}[General faithful normal states]
\label{rem:criterion-general-states}
The proof extends to any faithful normal state \(\psi\) on \(\mathcal A_{\rm CFT}(v)\), realized in standard form on \(\mathcal H_{\rm CFT}\), and set \(\sigma^v:=\sigma^{\psi}\) and \(\sigma^u:=\sigma^{\psi|_{\mathcal A_{\rm CFT}(u)}}\). Then (i)--(v) remain equivalent with \(\omega_{\rm BD}\) replaced by \(\psi\) throughout, so that (v) reads: there is a \(\psi\)-preserving normal conditional expectation \(\mathcal A_{\rm CFT}(v)\to\mathcal A_{\rm CFT}(u)\). The two flows must come from one state on the larger algebra together with its restriction, which is exactly the situation in \Cref{sec:split-repair}, where \(\psi=\phi_S\) and
\(\psi|_{\mathcal N_T}=\phi_T\).
\end{remark}

\begin{remark}[Cocycle-modified embeddings]
\label{rem:cocycle-scope}
If two actions are related by an inner cocycle, one can sometimes construct an abstract crossed-product homomorphism carrying one implementing group to a cocycle-twisted version of the other. That is a different embedding. In the canonical shared-\(q\) realization above, both cores are required to contain the same \(V(s)\), while their fixed copies of the matter algebras are the specified \(D_u\) and \(D_v\). The fixed-point argument then forces \(\sigma_t^v|_{\mathcal A_{\rm CFT}(u)}=\sigma_t^u\); a cocycle or inner perturbation defines a different embedding. Coherent cocycle-modified embeddings for an entire regional family remain an open possibility.
\end{remark}

\begin{proposition}[Vacuum dressing obstructs isotony]
\label{prop:net-compatibility-fails}
Let \(u\subsetneq v\subsetneq S^1\) be admissible regions covered by \Cref{lem:ambient-representation}, and suppose that \(v\setminus\overline u\) contains a nonempty open arc. Then \(\sigma^v_t(\mathcal A_{\rm CFT}(u))\ne\mathcal A_{\rm CFT}(u)\) for some \(t\), and consequently
\[
 \hat{\mathcal M}(u)\not\subseteq\hat{\mathcal M}(v),
 \qquad
 P\hat{\mathcal M}(u)P\not\subseteq P\hat{\mathcal M}(v)P .
\]
Thus the common-clock family built from the regions' own vacuum modular flows is not an isotonic net. More explicitly, there are \(x\in\mathcal A_{\rm CFT}(u)\) and \(t\in\mathbb R\) with \(\sigma_t^v(x)\ne\sigma_t^u(x)\), and for any such \(x\),
\[
 D_u(x)\in\hat{\mathcal M}(u)\setminus\hat{\mathcal M}(v),
 \qquad
 PD_u(x)P\in
 P\hat{\mathcal M}(u)P\setminus P\hat{\mathcal M}(v)P .
\]
\end{proposition}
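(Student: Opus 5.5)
The plan is to reduce the statement to criterion (v) of \Cref{thm:net-compatibility-criterion}, namely the existence of a vacuum-preserving normal conditional expectation, and then show that no such expectation can exist. The obstruction will be the failure of the vacuum to factorize across $u$ and a spacelike-separated arc. Fix a nonempty open arc $J\subseteq v\setminus\overline u$. Since $J$ is disjoint from every component of $u$, locality gives $\mathcal A_{\rm CFT}(J)\subseteq\mathcal A_{\rm CFT}(u)'\cap\mathcal A_{\rm CFT}(v)$.

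\emph{Step 1: such an expectation would make the vacuum a product state.} Suppose, for contradiction, that $\mathcal A_{\rm CFT}(u)$ is $\sigma^v$-invariant. By (iv)$\Leftrightarrow$(v) there is a normal conditional expectation $E:\mathcal A_{\rm CFT}(v)\to\mathcal A_{\rm CFT}(u)$ with $\omega_{\rm BD}\circ E=\omega_{\rm BD}$.

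Take $y\in\mathcal A_{\rm CFT}(J)$. Since $E$ is an $\mathcal A_{\rm CFT}(u)$-bimodule map and $y$ commutes with $\mathcal A_{\rm CFT}(u)$, for every $x\in\mathcal A_{\rm CFT}(u)$ we have
\[
xE(y)=E(xy)=E(yx)=E(y)x .
\]
Hence $E(y)\in Z(\mathcal A_{\rm CFT}(u))=\mathbb C\id$, because $\mathcal A_{\rm CFT}(u)$ is a factor by the discussion after \Cref{ass:regional-net}. The invariance $\omega_{\rm BD}\circ E=\omega_{\rm BD}$ then fixes $E(y)=\omega_{\rm BD}(y)\id$. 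Applying the bimodule property once more gives, for all $x\in\mathcal A_{\rm CFT}(u)$,
\[
\omega_{\rm BD}(yx)=\omega_{\rm BD}\bigl(E(yx)\bigr)=\omega_{\rm BD}\bigl(E(y)x\bigr)=\omega_{\rm BD}(y)\,\omega_{\rm BD}(x).
\]

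\emph{Step 2: Reeh--Schlieder rules out factorization.} The identity from Step 1 can be rewritten as
\[
\bigl\langle (y^*-\overline{\omega_{\rm BD}(y)})\Omega_{\rm BD},\,x\Omega_{\rm BD}\bigr\rangle=0
\qquad\text{for all } x\in\mathcal A_{\rm CFT}(u).
\]
Because $u$ contains an arc, $\Omega_{\rm BD}$ is cyclic for $\mathcal A_{\rm CFT}(u)$, so this forces $y^*\Omega_{\rm BD}=\overline{\omega_{\rm BD}(y)}\,\Omega_{\rm BD}$. Since $\Omega_{\rm BD}$ is separating for $\mathcal A_{\rm CFT}(J)$, it follows that $y$ is a scalar. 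This holds for every $y\in\mathcal A_{\rm CFT}(J)$, which contradicts the fact that $\mathcal A_{\rm CFT}(J)$ is type $\mathrm{III}_1$ and hence nontrivial. Therefore $\sigma^v_t(\mathcal A_{\rm CFT}(u))\neq\mathcal A_{\rm CFT}(u)$ for some $t$.

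\emph{Step 3: the explicit non-inclusions.} Since (iv) fails, so does (iii). This gives $x\in\mathcal A_{\rm CFT}(u)$ and $t$ with $\sigma^v_t(x)\neq\sigma^u_t(x)$, and the equivalence in \Cref{thm:net-compatibility-criterion} yields both non-inclusions. For the elementwise claims I would rerun the theorem's proof for a single element.

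First, suppose $D_u(x)\in\hat{\mathcal M}(v)$. The operator $D_u(x)$ commutes with $Z(r)$, so it is fixed by the dual action. The fixed-point theorem then gives $D_u(x)=D_v(y)$ for some $y\in\mathcal A_{\rm CFT}(v)$. Continuity in $q$ upgrades this to $y=x$ and $\sigma^u_q(x)=\sigma^v_q(x)$ for all $q$, which is a contradiction.

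Second, suppose $PD_u(x)P\in\hat{\mathcal M}(v)$. Conjugating by $Z(r)$, which commutes with $D_u(x)$ and lies in the common ambient structure, gives $P_rD_u(x)P_r\in\hat{\mathcal M}(v)$ for every $r$. Letting $r\to-\infty$ and using strong closedness yields $D_u(x)\in\hat{\mathcal M}(v)$, which was just excluded.

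The main obstacle I expect is Step 1, specifically centrality and scalarity of $E(y)$. It needs both the bimodule property and the factoriality of the possibly disconnected $\mathcal A_{\rm CFT}(u)$. That factoriality relies on the split property and strong additivity from \Cref{ass:regional-net}(iii), so I would cite it explicitly at that point.
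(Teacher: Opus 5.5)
Your proposal is correct and follows essentially the same route as the paper: you refute condition (v) of \Cref{thm:net-compatibility-criterion} by using bimodularity and factoriality to force $E(y)=\omega_{\rm BD}(y)\id$ for $y$ in the algebra of an arc inside $v\setminus\overline u$, and you reach a contradiction via cyclicity and separatingness (you subtract the mean where the paper chooses a mean-zero $y$, which is equivalent); the explicit witnesses then come from the same dual-action fixed-point argument and the same $P_r\to\id$ limit. In that last step, what matters about $Z(r)$ is not that it belongs to some algebra but that $\operatorname{Ad}Z(r)$ implements the dual action and so maps $\hat{\mathcal M}(v)$ onto itself, as the theorem's proof uses.
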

\begin{proof}
By \Cref{thm:net-compatibility-criterion} it is enough to refute condition (v), the existence of a vacuum-preserving normal conditional expectation
\[
 E:\mathcal A_{\rm CFT}(v)\longrightarrow\mathcal A_{\rm CFT}(u),
 \qquad \omega_{\rm BD}\circ E=\omega_{\rm BD} .
\]
Suppose such an \(E\) existed. Since \(v\) is open and \(\overline u\) closed, \(v\setminus\overline u\) is open, so it contains an open arc \(I\). As \(I\) and \(u\) are disjoint, locality gives \(\mathcal A_{\rm CFT}(I)\subseteq\mathcal A_{\rm CFT}(u)'\). For \(y\in\mathcal A_{\rm CFT}(I)\) and \(x\in\mathcal A_{\rm CFT}(u)\), bimodularity of \(E\) and \(xy=yx\) give
\[
 xE(y)=E(xy)=E(yx)=E(y)x ,
\]
so \(E(y)\in\mathcal A_{\rm CFT}(u)\cap\mathcal A_{\rm CFT}(u)'\), which is \(\mathbb C\id\) because \(\mathcal A_{\rm CFT}(u)\) is a factor. Applying \(\omega_{\rm BD}\) fixes the scalar: \(E(y)=\omega_{\rm BD}(y)\id\). Hence
\[
 \omega_{\rm BD}(xy)=\omega_{\rm BD}(E(xy))
 =\omega_{\rm BD}(x)\,\omega_{\rm BD}(y)
 \qquad\bigl(x\in\mathcal A_{\rm CFT}(u)\bigr).
\]
Because \(\mathcal A_{\rm CFT}(I)\ne\mathbb C\id\) there is a nonzero \(y\in\mathcal A_{\rm CFT}(I)\) with \(\omega_{\rm BD}(y)=0\); the displayed identity then says \(\langle x^*\Omega_{\rm BD},y\Omega_{\rm BD}\rangle=0\) for every \(x\in\mathcal A_{\rm CFT}(u)\). Cyclicity for \(\mathcal A_{\rm CFT}(u)\) gives \(y\Omega_{\rm BD}=0\), and separatingness for \(\mathcal A_{\rm CFT}(I)\) gives \(y=0\), a contradiction.

So (v) fails, and with it (i)--(iv). In particular \(\mathcal A_{\rm CFT}(u)\) is not globally \(\sigma^v\)-invariant, and both displayed inclusions fail.

The inclusion \(\sigma^v_t(\mathcal A_{\rm CFT}(u))\subseteq\mathcal A_{\rm CFT}(u)\) cannot hold for every \(t\): applying it at \(-t\) would otherwise give equality for every \(t\). Hence there are \(t\) and \(x\in\mathcal A_{\rm CFT}(u)\) with \(\sigma^v_t(x)\notin\mathcal A_{\rm CFT}(u)\), and since \(\sigma^u_t(x)\in\mathcal A_{\rm CFT}(u)\) we get
\(\sigma^v_t(x)\ne\sigma^u_t(x)\).

The displayed witnesses follow from the same fixed-point argument. If \(D_u(x)\) belonged to \(\hat{\mathcal M}(v)\), it would lie in the fixed algebra of the common dual action, forcing \(\sigma_q^u(x)=\sigma_q^v(x)\) for all \(q\), contrary to the choice of \(x\). If \(PD_u(x)P\) belonged to \(P\hat{\mathcal M}(v)P\), conjugating by \(Z(r)\) would give \(P_rD_u(x)P_r\in\hat{\mathcal M}(v)\) for every \(r\), and letting \(r\to-\infty\) would again give \(D_u(x)\in\hat{\mathcal M}(v)\).
\end{proof}

\begin{remark}[Range of the obstruction]
\label{rem:no-go-scope}
The obstruction applies to the shared-clock representation \(D_u(x)=e^{-iqK_u}(x\otimes\id)e^{iqK_u}\) with \(V(s)=e^{ish}\) and the common dual action \(\widehat\sigma_r(V(s))=e^{-irs}V(s)\). Cocycle-twisted embeddings or region-dependent clock identifications define different constructions. The representation-independent input is the absence of a vacuum-preserving conditional expectation when \(v\setminus\overline u\) contains a nonempty open arc. Pairs differing only by finitely many endpoints are instead identified by strong additivity in \Cref{ass:regional-net}(iii).
\end{remark}

\subsection{Split-State Dressing on a Fixed Boolean Family}
\label{sec:split-repair}

A split product state provides a compatible modular flow on a fixed family of separated regions. Its generator is a tensor sum in the split tensor-product representation, but not on the original CFT Hilbert space.

\begin{definition}[Split realization, product state and its generator]
\label{def:split-realization}
Let \(S=\{I_1,\dots,I_n\}\) be open arcs with pairwise disjoint closures, put \(v_S:=\bigcup_jI_j\) and \(\mathcal N_S:=\bigvee_{j}\mathcal A_{\rm CFT}(I_j)\); as is standard for a disconnected region we set \(\mathcal A_{\rm CFT}(v_S):=\mathcal N_S\), the additivity convention already used in \eqref{eq:multi-interval-split}. Under \Cref{ass:regional-net}(iii) the split property provides the unique normal isomorphism
\begin{align}
  \Theta_S:\ \mathcal N_S
  \;\xrightarrow{\ \sim\ }\;
  \overline{\bigotimes}_{j=1}^{n}\mathcal A_{\rm CFT}(I_j),
  \qquad
  \Theta_S(x_1x_2\cdots x_n)=x_1\otimes\cdots\otimes x_n
  \label{eq:split-isomorphism}
\end{align}
for \(x_j\in\mathcal A_{\rm CFT}(I_j)\), determined on such products and extended by normality. Write \(\omega_{I_j}:=\omega_{\rm BD}|_{\mathcal A_{\rm CFT}(I_j)}\) and define the faithful normal \emph{split product state}
\begin{align}
  \phi_S:=\Bigl(\bigotimes_{j=1}^{n}\omega_{I_j}\Bigr)\circ\Theta_S
  \qquad\text{on }\mathcal N_S .
  \label{eq:split-product-state}
\end{align}
Its modular automorphism group is
\begin{align}
  \sigma_t^{\phi_S}
  =\Theta_S^{-1}\circ
   \Bigl(\bigotimes_{j=1}^{n}\sigma_t^{\omega_{I_j}}\Bigr)
   \circ\Theta_S ,
  \label{eq:split-modular-flow}
\end{align}
and we let \(K_S^{\rm split}\) denote the generator of the standard-form implementation of \(\sigma^{\phi_S}\) on \(\mathcal H_{\rm CFT}\), that is \(K_S^{\rm split}:=-\log\Delta_{\phi_S}\) with \(\Delta_{\phi_S}\) the modular operator of \(\phi_S\) computed from the unique vector representative of \(\phi_S\) in the natural positive cone of
\((\mathcal N_S,\Omega_{\rm BD})\).
\end{definition}

\begin{remark}[The split modular generator]
\label{rem:split-generator-status}
\begin{enumerate}[label=(\roman*)]
\item \(K_S^{\rm split}\) becomes a tensor sum \emph{only} in the split tensor-product representation: transporting through \eqref{eq:split-isomorphism} and the associated spatial isomorphism of Hilbert spaces, \(\sigma^{\phi_S}\) becomes \(\bigotimes_j\sigma^{\omega_{I_j}}\), whose generator is \(\sum_jK_{I_j}\) acting on \(\bigotimes_j\mathcal H_{I_j}\). On \(\mathcal H_{\rm CFT}\) itself no such decomposition is available, and the expression \(\sum_jK_{I_j}\) has no meaning as a sum of operators there.
\item \(K_S^{\rm split}\ne K_{v_S}\) in general: \(\phi_S\) is a product state whereas \(\omega_{\rm BD}|_{\mathcal N_S}\) is correlated across the components. Thus \(\sigma^{\phi_S}\) differs from the vacuum Tomita modular flow of \(\mathcal A_{\rm CFT}(v_S)\).
\item The algebraic isomorphism \(\Theta_S\) in \eqref{eq:split-isomorphism} is fixed uniquely by its values on products, because those products generate \(\mathcal N_S\) as a von Neumann algebra. Spatial unitaries implementing this isomorphism, and intermediate type-\(\mathrm I\) factors used in proofs of the split property, need not be unique. That implementation freedom does not change the product state \eqref{eq:split-product-state} or its modular automorphism group.
\end{enumerate}
\end{remark}

\begin{proposition}[Isotony on one fixed finite Boolean family]
\label{prop:split-repair}
Fix \(S=\{I_1,\dots,I_n\}\) as in \Cref{def:split-realization}. For \(\varnothing\neq T\subseteq S\) put \(v_T:=\bigcup_{j\in T}I_j\), \(\mathcal N_T:=\mathcal A_{\rm CFT}(v_T)\), and \(\phi_T:=\phi_S|_{\mathcal N_T}\). Then:
\begin{enumerate}[label=(\alph*)]
\item \(\Theta_S(\mathcal N_T) =\bigl(\overline{\bigotimes}_{j\in T}\mathcal A_{\rm CFT}(I_j)\bigr) \otimes\id\), and \(\phi_T\) is the corresponding sub-product state, so \(\phi_T=(\bigotimes_{j\in T}\omega_{I_j})\circ\Theta_T\) with
\(\Theta_T:=\Theta_S|_{\mathcal N_T}\);
\item for every nonempty \(T\subseteq T'\subseteq S\), \(\sigma_t^{\phi_{T'}}\bigl|_{\mathcal N_T}=\sigma_t^{\phi_T}\) for all
\(t\in\mathbb R\);
\item consequently, writing \(\mathcal M^{\rm split}(D(v_T))\) for the shared-clock crossed product of \(\mathcal N_T\) by \(\sigma^{\phi_T}\) compressed by \(P\),
\begin{align}
  \mathcal M^{\rm split}(D(v_T))
  \subseteq\mathcal M^{\rm split}(D(v_{T'}))
  \qquad\text{whenever }\varnothing\ne T\subseteq T'\subseteq S .
  \label{eq:split-isotony}
\end{align}
\end{enumerate}
The family \(\{v_T:\varnothing\ne T\subseteq S\}\) is the finite Boolean family generated by the fixed atomic arcs \(I_1,\dots,I_n\), and \eqref{eq:split-isotony} is asserted for that family only.
\end{proposition}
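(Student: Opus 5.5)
The plan is to prove (a) by a direct computation on generators, deduce (b) from modular theory of tensor products together with Takesaki's theorem, and obtain (c) from the shared-clock criterion of \Cref{thm:net-compatibility-criterion} in the general-state form of \Cref{rem:criterion-general-states}.

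For (a), note that \(\mathcal N_T=\bigvee_{j\in T}\mathcal A_{\rm CFT}(I_j)\) is generated by products \(\prod_{j\in T}x_j\) with \(x_j\in\mathcal A_{\rm CFT}(I_j)\). By \eqref{eq:split-isomorphism}, \(\Theta_S\) maps such a product to the elementary tensor having \(x_j\) in slot \(j\in T\) and \(\id\) elsewhere. Since \(\Theta_S\) is a normal isomorphism, it carries the generated von Neumann algebra onto the generated von Neumann algebra. That image is \(\bigl(\overline{\bigotimes}_{j\in T}\mathcal A_{\rm CFT}(I_j)\bigr)\otimes\id\), after a harmless reordering of factors. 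Evaluating \(\phi_S\) on these products uses \(\omega_{I_j}(\id)=1\) for \(j\notin T\). This gives \(\phi_T=(\bigotimes_{j\in T}\omega_{I_j})\circ\Theta_T\) on products, and hence on all of \(\mathcal N_T\) by normality.

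For (b), transport everything through \(\Theta_S\). The modular group of a tensor product of faithful normal states is the tensor product of the modular groups. This is checked either by verifying the KMS condition for \(\bigotimes_j\omega_{I_j}\) on elementary tensors and extending by normality and density, or by citing the standard result in \cite{Takesaki}. By (a), \(\sigma^{\phi_T}\) is therefore \(\Theta_T^{-1}\circ\bigotimes_{j\in T}\sigma^{\omega_{I_j}}\circ\Theta_T\). Similarly, \(\sigma^{\phi_{T'}}\) is the tensor product over \(T'\). On elements of the form \(\bigl(\bigotimes_{j\in T}x_j\bigr)\otimes\id_{T'\setminus T}\), the latter acts as \(\bigl(\bigotimes_{j\in T}\sigma_t^{\omega_{I_j}}(x_j)\bigr)\otimes\id\), because each \(\sigma_t^{\omega_{I_j}}\) fixes \(\id\). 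The two flows therefore agree on a generating set of \(\mathcal N_T\), and by normality they agree on all of \(\mathcal N_T\).

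As a cross-check, the map \(E=\Theta_{T'}^{-1}\circ(\id\otimes\omega_{T'\setminus T})\circ\Theta_{T'}\) is a \(\phi_{T'}\)-preserving normal conditional expectation onto \(\mathcal N_T\). Takesaki's theorem, as quoted in the proof of \Cref{thm:net-compatibility-criterion}, then gives the restriction property in (b) directly. I would include this as the primary argument, since it is cleanest.

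For (c), apply \Cref{rem:criterion-general-states} with \(u=v_T\), \(v=v_{T'}\), and \(\psi=\phi_{T'}\) on \(\mathcal N_{T'}\), realized in standard form on \(\mathcal H_{\rm CFT}\) via its vector representative in the natural cone of \((\mathcal N_{T'},\Omega_{\rm BD})\). Then \(\psi|_{\mathcal N_T}=\phi_T\) by definition, and (b) is exactly condition (iii). The implication (iii)\(\Rightarrow\)(i)\(\Rightarrow\)(ii) then yields \eqref{eq:split-isotony}.

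The main obstacle is this last step. The crossed product of \(\mathcal N_T\) must be built with the flow \(\sigma^{\phi_T}\), implemented fibrewise as \(D^{\rm split}_T(x)(q)=\sigma_q^{\phi_T}(x)\), on the same \(\mathcal H_{\rm CFT}\otimes L^2(\mathbb R)\) with the same \(V(s)\) and \(P\). For (iii)\(\Rightarrow\)(i) to go through literally, one needs the identity \(D^{\rm split}_T(x)=D^{\rm split}_{T'}(x)\) as operators. That identity requires only equality of the automorphisms \(\sigma_q^{\phi_T}(x)=\sigma_q^{\phi_{T'}}(x)\), which is (b), and does not require any choice of unitary implementation. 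I will stress that the fibre representation uses the automorphism group itself, so the nonuniqueness of spatial implementers noted in \Cref{rem:split-generator-status}(iii) does not enter. Finally, \(P\) lies in both cores because both contain the full group \(V(s)\), exactly as in \Cref{lem:ambient-representation}.
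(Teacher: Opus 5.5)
Your proposal is correct and takes essentially the same route as the paper: (a) via generators and normality of \(\Theta_S\), (b) from the fact that the modular group of a product state is the tensor product of the factor modular groups, and (c) by checking condition (iii) of \Cref{thm:net-compatibility-criterion} in the general-state form of \Cref{rem:criterion-general-states} and then applying (iii)\(\Rightarrow\)(i)\(\Rightarrow\)(ii). Two of your points go slightly beyond the paper's proof without changing its approach: the slice-map expectation \(E=\Theta_{T'}^{-1}\circ(\id\otimes\omega_{T'\setminus T})\circ\Theta_{T'}\) is a valid alternative for (b) that exhibits condition (v) directly, and your remark that the fibrewise dressing uses only the automorphism group, so nonuniqueness of spatial implementers plays no role, makes explicit what the paper leaves implicit.
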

\begin{proof}
(a) By \eqref{eq:split-isomorphism}, \(\Theta_S\) maps a product
\(\prod_{j\in T}x_j\) to \(\bigotimes_{j\in T}x_j\otimes\id_{S\setminus T}\);
these products generate \(\mathcal N_T\) and their images generate \(\bigl(\overline{\bigotimes}_{j\in T}\mathcal A_{\rm CFT}(I_j)\bigr) \otimes\id\), and \(\Theta_S\) is a normal isomorphism, so the two von Neumann algebras correspond. Evaluating \eqref{eq:split-product-state} on such elements and using \(\omega_{I_j}(\id)=1\) for \(j\notin T\) identifies \(\phi_T\) with the stated sub-product state.

(b) The modular group of a product state on a tensor product is the tensor product of the factor modular groups. For \(T\subseteq T'\), it therefore preserves the tensor factors indexed by \(T\) and restricts to their sub-product modular group. Transporting this statement back through \(\Theta_S\) with (a) gives (b).

(c) By (b), the pair \((\mathcal N_T\subseteq\mathcal N_{T'}, \sigma^{\phi_T},\sigma^{\phi_{T'}})\) satisfies condition (iii) of \Cref{thm:net-compatibility-criterion} in the form of \Cref{rem:criterion-general-states}, with faithful normal modular data on both sides. The implication (iii)\(\Rightarrow\)(i)\(\Rightarrow\)(ii) of that theorem gives \eqref{eq:split-isotony}.
\end{proof}

\begin{remark}[Fixed atomic family]
\label{rem:split-repair-restrictions}
The atomic arcs are fixed, and the regions form the finite Boolean family they generate. Extending \Cref{prop:split-repair} to moving endpoints or different atomic partitions would require coherent split identifications across partitions. By contrast, the vacuum-modular construction fails isotony on any family containing a pair as in \Cref{prop:net-compatibility-fails}.
\end{remark}

\section[The Ambient-Commutant Model and Its Multi-Region Behaviour]%
{The Ambient-Commutant Model and Its Multi-Region Behaviour}
\label{sec:activation-model}

\Cref{lem:ambient-representation} places every protected-gap algebra and the reference factor \(\mathcal M_{\rm ref}\) in a common ambient algebra. This allows their commutants and intersections to be studied directly.

\subsection{Defining the Model Assignment}
\label{sec:subregion-algebra-def}

The fundamental-complement principle motivates imposing commutation with the algebra assigned to the protected gap. For an admissible \(a\) with fundamental complement \(\widetilde A\) we define
\begin{align}
  \mathcal A_{\rm model}(a)
  :=\mathcal M(\widetilde A)'\cap\mathcal M_{\rm ref} ,
  \label{eq:definition-of-A}
\end{align}
where the commutant is taken in \(B(\mathcal H_{\rm CFT}\otimes L^2(\mathbb R))\), the algebra \(\mathcal M(\widetilde A)\) is the one supplied by \eqref{eq:diamond-algebra} together with \eqref{eq:spatial-trace-identity}, and \(\mathcal M(\varnothing)=\mathbb C\id\). Thus \(\mathcal A_{\rm model}(a)\) is the largest von Neumann subalgebra of the fixed reference algebra whose elements commute with the concretely represented protected-gap algebra. Unlike the generalized-entropy prescriptions of \cite{BP,BK}, no minimization enters. The suppressed dependence on the fixed reference arc \(R\) is as described in \eqref{eq:reference-covariance}.

Equation~\eqref{eq:definition-of-A} is an ambient intersection, since \(\mathcal M(\widetilde A)\) is generally not contained in \(\mathcal M_{\rm ref}\). With \(\mathcal M(\varnothing)=\mathbb C\id\), it assigns \(\mathcal M_{\rm ref}\) to an empty fundamental complement.

\begin{lemma}[Well-definedness]
\label{prop:model-status}
\(\mathcal M(\widetilde A)\) and \(\mathcal M_{\rm ref}\) are von Neumann algebras in the same \(B(\mathcal H_{\rm CFT}\otimes L^2(\mathbb R))\), so \(\mathcal A_{\rm model}(a)\) is a von Neumann subalgebra of \(\mathcal M_{\rm ref}\); and if \(\widetilde A=\varnothing\) then \(\mathcal A_{\rm model}(a)=\mathcal M_{\rm ref}\).
\end{lemma}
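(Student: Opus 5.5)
The plan is to settle membership first, then the two algebraic facts, and finally the empty case. By \Cref{thm:gap-protection}, \(\widetilde A\) is either empty or equal to \(D(b_*)\) for the unique protected gap \(b_*\). In the second case \(\beta_*\ge\pi\), and \(b_*^\perp\) contains the nonempty open arc consisting of the components of \(a\), since \(|a|>0\). So \(b_*\) lies in the class of \Cref{lem:ambient-representation}. That lemma, together with \eqref{eq:diamond-algebra} and \eqref{eq:spatial-trace-identity}, realizes \(\mathcal M(\widetilde A)=P\hat{\mathcal M}(b_*)P\) as a von Neumann algebra on \(\mathcal H_{\rm CFT}\otimes L^2(\mathbb R)\). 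In the empty case \(\mathcal M(\varnothing)=\mathbb C\id\) by convention, and this is trivially a von Neumann algebra. The reference algebra \(\mathcal M_{\rm ref}=P\hat{\mathcal M}P\) lives on the same Hilbert space by \Cref{lem:ambient-representation}(c).

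One subtlety needs a line: a corner \(P\mathcal NP\) is a von Neumann algebra acting on \(P\mathcal H\), with unit \(P\). Viewed in \(B(\mathcal H)\), it is a weakly closed \(*\)-algebra that is nondegenerate only on \(P\mathcal H\). I would state explicitly that ``von Neumann algebra'' is meant in this corner sense, with unit \(P\), for both \(\mathcal M(\widetilde A)\) and \(\mathcal M_{\rm ref}\). The intersection is then a weakly closed \(*\)-subalgebra of \(\mathcal M_{\rm ref}\).

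The algebraic step goes as follows. The commutant \(X'\) of any subset \(X\subseteq B(\mathcal H)\) is a von Neumann algebra. The intersection of two weakly closed \(*\)-subalgebras is again a weakly closed \(*\)-subalgebra. It contains \(P\): for \(\widetilde A=\varnothing\) this is immediate, and for \(\widetilde A\ne\varnothing\) the projection \(P\) is the common projection lying in both cores, and it commutes with every \(PyP\). So \(P\in\mathcal M(\widetilde A)'\cap\mathcal M_{\rm ref}\). Hence \(\mathcal A_{\rm model}(a)\) is a von Neumann subalgebra of \(\mathcal M_{\rm ref}\) with the same unit \(P\). Equivalently, by the bicommutant theorem on \(P\mathcal H\), it equals its own double commutant there.

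For the empty case, \((\mathbb C\id)'=B(\mathcal H_{\rm CFT}\otimes L^2(\mathbb R))\), and intersecting this with \(\mathcal M_{\rm ref}\) gives \(\mathcal M_{\rm ref}\).

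The only step needing care is the unit and corner issue just described. I expect no real obstacle, since everything else is direct bookkeeping from \Cref{lem:ambient-representation} and \Cref{thm:gap-protection}.
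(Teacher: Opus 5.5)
Your proposal is correct and follows the paper's proof. Both algebras are placed in the common \(B(\mathcal H_{\rm CFT}\otimes L^2(\mathbb R))\) via \Cref{lem:ambient-representation} together with the convention \(\mathcal M(\varnothing)=\mathbb C\id\); commutants and intersections of von Neumann algebras are von Neumann algebras; and \((\mathbb C\id)'\cap\mathcal M_{\rm ref}=\mathcal M_{\rm ref}\) handles the empty case. Your extra checks make explicit what the paper leaves implicit: that \(b_*\) lies in the class of \Cref{lem:ambient-representation}, and that the corners are von Neumann algebras on \(P(\mathcal H_{\rm CFT}\otimes L^2(\mathbb R))\) with unit \(P\), so the intersection contains \(P\). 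One small correction: \(X'\) is a von Neumann algebra when \(X\) is self-adjoint, as \(\mathcal M(\widetilde A)\) is, not for an arbitrary subset.
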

\begin{proof}
The first statement is \Cref{lem:ambient-representation} with the convention \(\mathcal M(\varnothing)=\mathbb C\id\), since intersections and commutants of von Neumann algebras are von Neumann algebras. The second is \((\mathbb C\id)'\cap\mathcal M_{\rm ref}=\mathcal M_{\rm ref}\).
\end{proof}

For \(a\subseteq v\), the gap theorem gives \(\widetilde v\subseteq\widetilde a\), since enlarging a region can only shrink its gaps, and isotony would follow from \(\mathcal M(\widetilde V)\subseteq\mathcal M(\widetilde A)\) by reversing commutants. That inclusion holds trivially when \(\widetilde V=\varnothing\). When both complements are nonempty, isotony remains open because the vacuum-modular dressings fail the compatibility criterion of \Cref{thm:net-compatibility-criterion}. Factoriality of \(\mathcal A_{\rm model}(a)\) is likewise unresolved. A stronger commutant recovery statement would be
\begin{align}
 \mathcal A_{\rm model}(a)'\cap\mathcal M_{\rm ref}
 =\mathcal M(\widetilde A) :
 \label{eq:commutant-recovery}
\end{align}
it requires an embedding \(\mathcal M(\widetilde A)\subseteq\mathcal M_{\rm ref}\) and a dressed commutant theorem. The results below instead concern the concrete relative embeddings fixed in \Cref{lem:ambient-representation}.

\subsection{Properness from Direct Non-Commutation}
\label{sec:direct-noncommutation}

We now prove properness when the unique protected gap overlaps the fixed reference arc \(R\). This covers every protected gap longer than \(\pi\) and every threshold gap except the antipodal placement \(R^\perp\). The argument compares the two overlapping regional algebras directly and is independent of dressed Haag duality and the split property.

For an admissible \(u\subsetneq S^1\) and \(O\in\mathcal A_{\rm CFT}(u)\) write, as in \eqref{eq:regular-core-representation},
\begin{align}
  D_u(O):=e^{-iqK_u}\,(O\otimes\id)\,e^{iqK_u} ,
  \label{eq:dressing-map}
\end{align}
so that \(\mathcal M(D(u))\) is generated by \(\{D_u(O)\}\) together with the clock translations \(V(s)=e^{ish}\), compressed by \(P\).

\begin{lemma}[Exact reduction to non-commutation]
\label{lem:exact-reduction}
For any admissible \(a\) with \(\widetilde A\ne\varnothing\),
\begin{align}
  \mathcal A_{\rm model}(a)\subsetneq\mathcal M_{\rm ref}
  \quad\Longleftrightarrow\quad
  \exists\, m\in\mathcal M_{\rm ref},\
  n\in\mathcal M(\widetilde A):\ [m,n]\ne0 .
  \label{eq:exact-reduction}
\end{align}
\end{lemma}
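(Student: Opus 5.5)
The claim is essentially set-theoretic once the definitions are unwound, so I would prove both directions directly from \eqref{eq:definition-of-A}. Since \(\widetilde A\ne\varnothing\), \Cref{thm:gap-protection} gives \(\widetilde A=D(b_*)\) for the unique protected gap \(b_*\). Then \(\mathcal M(\widetilde A)\) is the concretely represented algebra of \eqref{eq:diamond-algebra}, and \(\mathcal A_{\rm model}(a)=\mathcal M(\widetilde A)'\cap\mathcal M_{\rm ref}\), with the commutant taken in the common ambient \(B(\mathcal H_{\rm CFT}\otimes L^2(\mathbb R))\) supplied by \Cref{lem:ambient-representation}. By \Cref{prop:model-status}, \(\mathcal A_{\rm model}(a)\subseteq\mathcal M_{\rm ref}\) always holds. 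Properness is therefore equivalent to the existence of some \(m\in\mathcal M_{\rm ref}\) with \(m\notin\mathcal M(\widetilde A)'\).

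For (\(\Leftarrow\)), take \(m,n\) with \([m,n]\ne0\). Then \(m\in\mathcal M_{\rm ref}\) but \(m\notin\mathcal M(\widetilde A)'\), so \(m\notin\mathcal A_{\rm model}(a)\), and the inclusion is strict.

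For (\(\Rightarrow\)), strictness gives some \(m\in\mathcal M_{\rm ref}\setminus\mathcal A_{\rm model}(a)\). Since \(m\in\mathcal M_{\rm ref}\), it must fail to lie in \(\mathcal M(\widetilde A)'\). By the definition of the commutant, this means there is some \(n\in\mathcal M(\widetilde A)\) with \(mn\ne nm\).

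I do not expect a real obstacle here. The only point needing care is that both algebras act on the same Hilbert space, so that the commutant and intersection are literal operations rather than defined up to isomorphism; \Cref{lem:ambient-representation} secures this. One could also note that it suffices to test \(n\) on the generators \(PD_{b_*}(O)P\) and \(PV(s)P\), but that refinement is not needed for the equivalence itself.
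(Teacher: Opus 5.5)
Your proof is correct and takes the same route as the paper. Since \(\mathcal A_{\rm model}(a)=\mathcal M(\widetilde A)'\cap\mathcal M_{\rm ref}\subseteq\mathcal M_{\rm ref}\) always holds in the common ambient algebra, equality is exactly \(\mathcal M_{\rm ref}\subseteq\mathcal M(\widetilde A)'\), and negating this gives the non-commuting pair; invoking the gap theorem to write \(\widetilde A=D(b_*)\) is harmless but not needed.
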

\begin{proof}
By \eqref{eq:definition-of-A}, \(\mathcal A_{\rm model}(a)=\mathcal M(\widetilde A)'\cap\mathcal M_{\rm ref} \subseteq\mathcal M_{\rm ref}\) always, both sides computed in the common ambient algebra. Equality holds if and only if \(\mathcal M_{\rm ref}\subseteq\mathcal M(\widetilde A)'\), that is, if and only if every element of \(\mathcal M_{\rm ref}\) commutes with every element of \(\mathcal M(\widetilde A)\). Negating gives \eqref{eq:exact-reduction}.
\end{proof}

\begin{proposition}[Non-commutation before compression]
\label{prop:pre-compression-noncommutation}
Let \(g\subsetneq S^1\) be an open arc such that \(R\cap g\) contains a nonempty open arc \(J\) with \(\overline J\subset R\cap g\). Then the unprojected crossed products \(\hat{\mathcal M}(R)\) and \(\hat{\mathcal M}(g)\) do not commute; explicitly there are \(x,y\in\mathcal A_{\rm CFT}(J)\) with
\([D_R(x),D_g(y)]\ne0\).
\end{proposition}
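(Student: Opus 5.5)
The plan is to compute the commutator fibre by fibre, using that both dressed operators act by fibrewise multiplication in the common regular representation. By \eqref{eq:regular-core-representation}, for \(x\in\mathcal A_{\rm CFT}(R)\) and \(y\in\mathcal A_{\rm CFT}(g)\) the operators \(D_R(x)\) and \(D_g(y)\) are decomposable on \(L^2(\mathbb R;\mathcal H_{\rm CFT})\). Their commutator is therefore decomposable with fibre
\[
 C(q):=\bigl[\sigma_q^R(x),\sigma_q^g(y)\bigr],
\]
and it suffices to find \(x,y\) for which this operator field is nonzero on a set of positive measure. Since \(J\subseteq R\cap g\), isotony places \(\mathcal A_{\rm CFT}(J)\) inside both \(\mathcal A_{\rm CFT}(R)\) and \(\mathcal A_{\rm CFT}(g)\). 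Hence \(x,y\in\mathcal A_{\rm CFT}(J)\) are legitimate inputs for both \(D_R\) and \(D_g\), and at \(q=0\) the fibre reduces to \(C(0)=[x,y]\).

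The first step is to choose \(x,y\in\mathcal A_{\rm CFT}(J)\) with \([x,y]\ne0\). This is possible because \(\mathcal A_{\rm CFT}(J)\) is a type-\(\mathrm{III}_1\) factor by \Cref{ass:regional-net}(i), hence noncommutative: a commutative factor would be \(\mathbb C\id\). Next, fix unit vectors \(\xi,\eta\in\mathcal H_{\rm CFT}\) with \(\langle\xi,[x,y]\eta\rangle\ne0\). The scalar function
\[
 F(q):=\langle\xi,C(q)\eta\rangle
\]
is continuous in \(q\). This is because \(q\mapsto\sigma_q^R(x)=e^{-iqK_R}xe^{iqK_R}\) and \(q\mapsto\sigma_q^g(y)\) are strongly continuous, norm-bounded operator families (and likewise their adjoints), and products of such families remain strongly continuous. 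Since \(F(0)\ne0\), there is \(\epsilon>0\) with \(|F(q)-F(0)|<|F(0)|/2\) for \(|q|<\epsilon\).

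Finally, take \(f\in L^2(\mathbb R)\) to be the normalized indicator of \((-\epsilon,\epsilon)\). Then
\[
 \bigl\langle \xi\otimes f,\,[D_R(x),D_g(y)]\,\eta\otimes f\bigr\rangle
 =\int_{\mathbb R}|f(q)|^2F(q)\,\ed q .
\]
The integrand is an average of values of \(F\) lying within \(|F(0)|/2\) of \(F(0)\), so the integral has modulus at least \(|F(0)|/2>0\). This gives \([D_R(x),D_g(y)]\ne0\), and in particular \(\hat{\mathcal M}(R)\) and \(\hat{\mathcal M}(g)\) do not commute.

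The only delicate point is passing from the pointwise statement \(C(0)\ne0\) to nonvanishing of the decomposable operator, since a single fibre has measure zero. The continuity argument above handles this, and it does not require the modular flows of \(R\) and \(g\) to agree or to act geometrically near \(J\). The hypothesis \(\overline J\subset R\cap g\) is stronger than this step needs; it presumably matters for the later compression by \(P\), not here.
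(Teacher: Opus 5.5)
Your proposal is correct and is essentially the paper's argument: you take non-commuting \(x,y\in\mathcal A_{\rm CFT}(J)\) from factoriality, write \([D_R(x),D_g(y)]\) as fibrewise multiplication by \([\sigma^R_q(x),\sigma^g_q(y)]\), note that this is nonzero at \(q=0\), and use continuity of its matrix elements in \(q\) to get nonvanishing on a set of positive measure; your test vector \(\xi\otimes f\) just makes the paper's last step explicit. Your closing guess is slightly off, though: \Cref{lem:hardy-compression} needs only \([m,n]\ne0\), so the compact-containment hypothesis \(\overline J\subset R\cap g\) is not used in the compression step either.
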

\begin{proof}
Since \(J\) is a nonempty proper open arc, \(\mathcal A_{\rm CFT}(J)\) is a type-\(\mathrm{III}_1\) factor and in particular non-abelian, so there are bounded \(x,y\in\mathcal A_{\rm CFT}(J)\) with \([x,y]\ne0\). Representing vectors as \(\xi(q)\in\mathcal H_{\rm CFT}\) and using that \(q\) acts on the clock factor while \(K_R,K_g\) act on the CFT factor, \(D_R(x)\) and \(D_g(y)\) are fibrewise multiplication operators,
\begin{align*}
  \bigl(D_R(x)\xi\bigr)(q)=\sigma^R_q(x)\,\xi(q),\qquad
  \bigl(D_g(y)\xi\bigr)(q)=\sigma^g_q(y)\,\xi(q),
\end{align*}
so their commutator is fibrewise multiplication by
\(c(q):=[\sigma_q^R(x),\sigma_q^g(y)]\).  At \(q=0\), \(c(0)=[x,y]\ne0\).
Strong continuity of both modular groups makes every matrix element of \(c(q)\) continuous in \(q\), so \(c(q)\ne0\) on a set of positive measure and the multiplication operator \([D_R(x),D_g(y)]\) is nonzero. Both flows preserve their respective algebras, so \(D_R(x)\) and \(D_g(y)\) are genuine elements of the two crossed products.
\end{proof}

\begin{lemma}[Compression preserves non-commutativity]
\label{lem:hardy-compression}
Let \(m,n\) be bounded fibrewise-multiplication operators on \(L^2(\mathbb R;\mathcal H_{\rm CFT})\) and let \(P=\mathbf1_{[0,\infty)}(h)\) with \(h=-i\partial_q\). If \(c:=[m,n]\ne0\), then
\begin{align}
  P\bigl(mPn-nPm\bigr)P\ne0 ,
  \label{eq:hardy-compression-nonzero}
\end{align}
as an operator on \(P\bigl(L^2(\mathbb R;\mathcal H_{\rm CFT})\bigr)\); in particular \([PmP,PnP]\ne0\).
\end{lemma}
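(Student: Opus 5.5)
The plan is to avoid any Hardy-space or Fourier analysis and use the dual-action translations \(Z(r)=e^{irq}\) in the same way as the (ii)\(\Rightarrow\)(i) step of \Cref{thm:net-compatibility-criterion}. First, the last clause reduces to the main claim. Expanding the product gives
\[
 [PmP,PnP]=PmPnP-PnPmP=P\bigl(mPn-nPm\bigr)P ,
\]
so it suffices to show \(X:=P(mPn-nPm)P\ne0\). Since \(X=PXP\), vanishing of \(X\) as an operator on \(P(L^2(\mathbb R;\mathcal H_{\rm CFT}))\) is the same as vanishing on the whole space.

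Second, I will use that \(m\) and \(n\) commute with \(Z(r)\). A bounded fibrewise multiplication operator \((m\xi)(q)=m(q)\xi(q)\) commutes with multiplication by any bounded scalar function of \(q\), in particular with \(e^{irq}\). From \([q,h]=i\) one has \(Z(r)PZ(r)^*=\mathbf 1_{[r,\infty)}(h)=:P_r\), as already noted in the proof of \Cref{thm:net-compatibility-criterion}. Conjugating \(X\) by \(Z(r)\) therefore gives
\[
 Z(r)XZ(r)^*=P_r\bigl(mP_rn-nP_rm\bigr)P_r .
\]
Suppose for contradiction that \(X=0\). Then the right-hand side vanishes for every \(r\in\mathbb R\).

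Third, let \(r\to-\infty\). Since \(P_r\to\id\) strongly and all operators involved are uniformly bounded, products of strongly convergent, uniformly bounded nets converge strongly to the product of the limits. Hence \(P_rmP_rnP_r\to mn\) and \(P_rnP_rmP_r\to nm\) strongly. The identically zero family then converges strongly to \(mn-nm=c\), so \(c=0\), contradicting the hypothesis. This proves \eqref{eq:hardy-compression-nonzero}, and with the first step it also gives \([PmP,PnP]\ne0\).

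The only point needing care is the joint strong continuity of triple products. It follows by writing, for example, \(P_rmP_rnP_r\xi-mn\xi\) as a telescoping sum of terms, each of the form (uniformly bounded operator)\(\times(P_r-\id)\times\)(fixed vector), or (uniformly bounded operator)\(\times(P_r-\id)\times\)(uniformly bounded operator)\(\times\xi\). Each term tends to zero because \(P_r-\id\) is applied to a fixed vector on the right, or to a vector that itself converges. I therefore expect no genuine obstacle. The argument uses only the common dual implementer and \(P_r\to\id\), not the specific form of \(m\) and \(n\).
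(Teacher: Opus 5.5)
Your argument is correct, and it is essentially the paper's proof written in the conjugated picture. The paper translates a bounded-frequency test vector \(\zeta\) by \(W_N=e^{iNq}=Z(N)\) into the range of \(P\) and estimates the error terms; since \(W_N^*PW_N=P_{-N}\), its statement \(\|CW_N\zeta\|\to\|c\zeta\|\) is exactly your strong convergence \(Z(r)XZ(r)^*\to c\) at \(r=-N\to-\infty\), evaluated on \(\zeta\), and your version avoids choosing \(\zeta\) by reusing the (ii)\(\Rightarrow\)(i) device of \Cref{thm:net-compatibility-criterion} directly.
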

\begin{proof}
Let \(W_N:=e^{iNq}\) for \(N>0\). From \([q,h]=i\) we get \(W_N^*hW_N=h+N\), hence \(f(h)W_N=W_Nf(h+N)\) for every Borel \(f\); in particular
\begin{align}
  (\id-P)W_N=W_N\,\mathbf 1_{(-\infty,-N)}(h).
  \label{eq:shift-identity}
\end{align}
Since \(m,n,c\) are multiplication operators in \(q\) with values in \(B(\mathcal H_{\rm CFT})\), while \(W_N\) is multiplication by the scalar function \(e^{iNq}\), they commute with \(W_N\).

The bounded-frequency subspace \(\bigcup_{L>0}\mathbf 1_{[-L,L]}(h)L^2(\mathbb R;\mathcal H_{\rm CFT})\) is dense, because spectral projections of \(h\) onto growing bounded intervals converge strongly to the identity. As \(c\ne0\), its kernel is a proper closed subspace and cannot contain a dense subspace, so we may choose \(L>0\) and \(\zeta\in\mathbf 1_{[-L,L]}(h)L^2(\mathbb R;\mathcal H_{\rm CFT})\) with \(c\zeta\ne0\); set \(\delta:=\|c\zeta\|>0\).

For \(N\ge L\) put \(\xi_N:=W_N\zeta\). Its spectral support under \(h\) lies in \([N-L,N+L]\subset[0,\infty)\), so \(P\xi_N=\xi_N\). Writing \(C:=P(mPn-nPm)P\) and using \(mPn=mn-m(\id-P)n\) together with the analogous identity for \(nPm\),
\begin{align}
  C\xi_N-c\xi_N
  = -(\id-P)c\,\xi_N-Pm(\id-P)n\,\xi_N+Pn(\id-P)m\,\xi_N .
  \label{eq:compression-identity}
\end{align}
For \(A\in\{m,n,c\}\), commutativity with \(W_N\) and \eqref{eq:shift-identity} give
\((\id-P)A\xi_N=W_N\mathbf 1_{(-\infty,-N)}(h)\,A\zeta\), so
\(\|(\id-P)A\xi_N\|=\|\mathbf 1_{(-\infty,-N)}(h)\,A\zeta\|\to0\) as
\(N\to\infty\), since spectral projections onto a half-line receding to \(-\infty\) converge strongly to zero on the fixed vector \(A\zeta\). Hence, from \eqref{eq:compression-identity} and \(\|P\|\le1\),
\begin{align*}
  \|C\xi_N-c\xi_N\|\le\|(\id-P)c\xi_N\|
  +\|m\|\,\|(\id-P)n\xi_N\|+\|n\|\,\|(\id-P)m\xi_N\|
  \ \longrightarrow\ 0 .
\end{align*}
Meanwhile \(c\xi_N=cW_N\zeta=W_Nc\zeta\), so \(\|c\xi_N\|=\delta\) for every \(N\) by unitarity of \(W_N\). Choosing \(N\) so large that \(\|C\xi_N-c\xi_N\|<\delta/2\) gives \(\|C\xi_N\|\ge\delta/2>0\), whence \(C\ne0\). Finally \([PmP,PnP]=P(mPn-nPm)P=C\).
\end{proof}

\begin{corollary}[Properness for an overlapping protected gap]
\label{cor:properness-a1}
Let \(a\) be admissible with nonempty fundamental complement, and let \(g\) be its unique protected gap. If \(R\cap g\) contains a nonempty open arc \(J\) with \(\overline J\subset R\cap g\), then
\begin{align}
  \mathcal A_{\rm model}(a)\subsetneq\mathcal M_{\rm ref} .
  \label{eq:cor-properness}
\end{align}
The overlap hypothesis is automatic if \(|g|>\pi\). If \(|g|=\pi\), it holds unless \(g=R^\perp\). In particular, for \(a_1=(0,\delta)\) with \(0<\delta<\pi\), the protected gap \(g=a_1^\perp=(\delta,2\pi)\) has length \(2\pi-\delta>\pi\), and therefore \(\mathcal A_{\rm model}(a_1)\subsetneq\mathcal M_{\rm ref}\). The result concerns the relative embeddings fixed in \Cref{lem:ambient-representation}; factoriality of the proper subalgebra is left open.
\end{corollary}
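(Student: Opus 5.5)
The plan is to chain the three preceding results. By \Cref{lem:exact-reduction}, properness \(\mathcal A_{\rm model}(a)\subsetneq\mathcal M_{\rm ref}\) is equivalent to exhibiting \(m\in\mathcal M_{\rm ref}\) and \(n\in\mathcal M(\widetilde A)\) with \([m,n]\ne0\). By \Cref{thm:gap-protection}, \(\widetilde A=D(g)\) for the unique protected gap \(g\), so \(\mathcal M(\widetilde A)=P\hat{\mathcal M}(g)P\) by \eqref{eq:diamond-algebra}, and \(\mathcal M_{\rm ref}=P\hat{\mathcal M}(R)P\).

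\Cref{prop:pre-compression-noncommutation}, applied with the given arc \(J\), supplies \(x,y\in\mathcal A_{\rm CFT}(J)\) such that the fibrewise multiplication operators \(D_R(x)\in\hat{\mathcal M}(R)\) and \(D_g(y)\in\hat{\mathcal M}(g)\) satisfy \([D_R(x),D_g(y)]\ne0\). Both are bounded fibrewise multiplication operators on \(L^2(\mathbb R;\mathcal H_{\rm CFT})\). \Cref{lem:hardy-compression} then gives \([PD_R(x)P,PD_g(y)P]\ne0\).

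To finish, we set \(m:=PD_R(x)P\in\mathcal M_{\rm ref}\) and \(n:=PD_g(y)P\in\mathcal M(\widetilde A)\). The only care needed is that these are genuine elements of the corners, which holds because \(P\) lies in each core by \Cref{lem:ambient-representation}. \Cref{lem:exact-reduction} then yields \eqref{eq:cor-properness}.

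The remaining work is geometric: verifying the overlap hypothesis.

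\emph{Case \(|g|>\pi\).} Since \(|R|=\pi\), we have \(|R|+|g|>2\pi\), so \(R\cap g\) has positive measure. Being an intersection of two open arcs, it is open and nonempty, so it contains an open arc. Shrinking that arc gives \(J\) with \(\overline J\subset R\cap g\).

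\emph{Case \(|g|=\pi\).} The intersection \(R\cap g\) is empty exactly when \(g\subseteq S^1\setminus R\). Since \(g\) is an open arc of length \(\pi\), this forces \(g=\intr(S^1\setminus R)=R^\perp\). In every other placement the intersection is a nonempty open set, and \(J\) is chosen as before.

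\emph{Example \(a_1=(0,\delta)\).} The complement has the single gap \((\delta,2\pi)\) of length \(2\pi-\delta>\pi\). It is protected by \Cref{thm:gap-protection}, and the first case applies.

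The main obstacle is the compression step, but it is already isolated in \Cref{lem:hardy-compression}. Apart from that, the only point to check is that the proposition's non-commutation is stated for operators lying in the uncompressed cores. Everything else is bookkeeping.
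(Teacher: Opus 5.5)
Your proposal is correct and follows the paper's proof essentially step for step. \Cref{lem:exact-reduction} reduces properness to non-commutation; \Cref{prop:pre-compression-noncommutation} and \Cref{lem:hardy-compression} then yield the non-commuting compressed pair \(PD_R(x)P\in\mathcal M_{\rm ref}\), \(PD_g(y)P\in\mathcal M(\widetilde A)\), using that \(P\) lies in both cores. Your measure count \(|R|+|g|>2\pi\) and your containment argument \(g\subseteq\intr(S^1\setminus R)=R^\perp\) are slightly more explicit versions of the paper's one-line geometric justifications.
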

\begin{proof}
By hypothesis choose \(J\) with \(\overline J\subset R\cap g\). \Cref{prop:pre-compression-noncommutation} supplies bounded \(m=D_R(x)\) and \(n=D_g(y)\) with \([m,n]\ne0\), and both are bounded fibrewise-multiplication operators, so \Cref{lem:hardy-compression} gives
\[
 [PmP,PnP]=P(mPn-nPm)P\ne0 .
\]
Here \(PmP\in\mathcal M_{\rm ref}\) and \(PnP\in\mathcal M(\widetilde A_1)\), since \(P\) belongs to both represented cores by \Cref{lem:ambient-representation}. \Cref{lem:exact-reduction} now gives \eqref{eq:cor-properness}.

It remains to justify the stated automatic cases. Since \(|R|=\pi\), an open arc \(g\) with \(|g|>\pi\) cannot be disjoint from \(R\); their intersection is then a nonempty open set and contains an arc with closure inside it. If \(|g|=\pi\), two open half-circles are disjoint only when they are regular-open complements, so the same conclusion holds unless \(g=R^\perp\). The displayed family \(a_1=(0,\delta)\) has \(|a_1^\perp|=2\pi-\delta>\pi\), proving the last statement.
\end{proof}

\subsection{Gap Protection in the Model}
\label{sec:algebraic-gap-protection}

\needspace{15\baselineskip}
\begin{proposition}[Gap protection in the model]
\label{prop:algebraic-gap-protection}
Let \(a\) be admissible with gaps \(b_k\) of angular lengths \(\beta_k\). Then
\begin{enumerate}[label=(\roman*)]
\item the protected algebra is
    \[
      \mathcal M(\widetilde A)=
      \begin{cases}
        \mathbb C\id,&\max_k\beta_k<\pi,\\[2pt]
        \mathcal M(D(b_*)),&\beta_*\geq\pi,
      \end{cases}
    \]
where \(b_*\) is the unique protected gap in the second case; and
\item if \(\widetilde A=\varnothing\), then \(\mathcal A_{\rm model}(a)=\mathcal M_{\rm ref}\).
\end{enumerate}
\end{proposition}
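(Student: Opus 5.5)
The plan is to read both items off \Cref{thm:gap-protection} together with the conventions fixed in \Cref{sec:diamond-algebras} and \Cref{sec:subregion-algebra-def}. No new analytic input should be needed; the work is checking that the geometric classification feeds correctly into the piecewise definition of \(\mathcal M(\cdot)\).

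For item (i), I will apply \Cref{thm:gap-protection} to \(a\), which gives \(\widetilde A=\bigcup_{\beta_k\ge\pi}D(b_k)\), a union with at most one term. I then split into two cases.

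\emph{Case \(\max_k\beta_k<\pi\).} The union is empty, so \(\widetilde A=\varnothing\), and the convention \(\mathcal M(\varnothing):=\mathbb C\id\) gives \(\mathcal M(\widetilde A)=\mathbb C\id\).

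\emph{Case \(\beta_*\ge\pi\).} The length count \eqref{eq:length-sum}, with \(|a|>0\), shows that \(b_*\) is the only gap of length at least \(\pi\), so \(\widetilde A=D(b_*)\). By \eqref{eq:spatial-trace-identity}, \(\widetilde a=b_*\) and \(\widetilde A=D(\widetilde a)\). This is exactly the situation in which \Cref{rem:spatial-trace-caution} allows the identification \(\mathcal M(\widetilde A)=\mathcal M(D(b_*))\), with \(\mathcal M(D(b_*))\) given by \eqref{eq:diamond-algebra}.

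One small check belongs here: \eqref{eq:diamond-algebra} requires \(b_*\) to be in the class of \Cref{lem:ambient-representation}. It is, because \(b_*\) is a proper open arc and \(b_*^\perp\supseteq a\) contains a nonempty open arc, so the vacuum is cyclic and separating for \(\mathcal A_{\rm CFT}(b_*)\). With that in place, \(\mathcal M(D(b_*))\) is a well-defined type-\(\mathrm{II}_1\) factor in the common ambient representation.

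For item (ii), if \(\widetilde A=\varnothing\), then item (i) gives \(\mathcal M(\widetilde A)=\mathbb C\id\). Its commutant is all of \(B(\mathcal H_{\rm CFT}\otimes L^2(\mathbb R))\), and intersecting with \(\mathcal M_{\rm ref}\) returns \(\mathcal M_{\rm ref}\). This is the second clause of \Cref{prop:model-status}, which I will cite directly.

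I do not expect a genuine obstacle. The only point needing care is that \(\mathcal M(\widetilde A)\) is defined through the spatial trace \(\widetilde a\). That step is legitimate only because \Cref{thm:gap-protection} has reduced \(\widetilde A\) to at most one arc domain \(D(b_*)\), as \Cref{rem:spatial-trace-caution} warns, so I will make that dependence explicit in the write-up.
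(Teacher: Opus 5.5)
Your proposal is correct and follows essentially the same route as the paper: item (i) is read off from \Cref{thm:gap-protection} together with the convention \(\mathcal M(\varnothing)=\mathbb C\id\) and the definition \eqref{eq:diamond-algebra}, and item (ii) is the identity \((\mathbb C\id)'\cap\mathcal M_{\rm ref}=\mathcal M_{\rm ref}\). Your extra check that \(b_*\) lies in the class of \Cref{lem:ambient-representation}, via \(a\subseteq b_*^\perp\), is a correct detail that the paper leaves implicit.
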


\begin{proof}
(i) combines \Cref{thm:gap-protection}, the convention \(\mathcal M(\varnothing)=\mathbb C\id\), and \eqref{eq:diamond-algebra}. (ii) is the elementary identity \((\mathbb C\id)'\cap\mathcal M_{\rm ref}=\mathcal M_{\rm ref}\), that is, the empty-complement branch of the definition \eqref{eq:definition-of-A}.
\end{proof}

\subsection{Activation: A Discontinuity in the Assigned Algebra}
\label{sec:activation}

\subsubsection{The Hologram of an Unprotected Region}
\label{sec:refined-wedge-empty}

In the framework of \cite{BP,BK}, accessibility conditions based on the smooth conditional max-entropy assign max- and min-holograms to a gravitating wedge \(A\); when they coincide we write \(E(A)\) for the common hologram.

\begin{definition}[Hologram trace]
\label{def:hologram-trace}
For a spacetime wedge \(A\), we define its \emph{hologram trace} on the Cauchy slice \(\Sigma_0\) by
\[
e_{\Sigma_0}(A):=E(A)\cap\Sigma_0.
\]
For a spatial region \(a\subset\Sigma_0\), we use the shorthand
\[
e(a):=e_{\Sigma_0}\bigl(D(a)\bigr)
     =E\bigl(D(a)\bigr)\cap\Sigma_0.
\]
Thus, if \(A=D(a)\), then \(e(a)=e_{\Sigma_0}(A)\).
\end{definition}

The following endpoint result will be used.

\begin{proposition}[Bousso--Kaya]
\label{prop:refined-wedge-empty}
Let \(a\subseteq S^1\) be admissible with \(\widetilde A=\varnothing\), and let \(A=D(a)\). Then the max- and min-holograms of \(A\) coincide and exhaust the spacetime,
\begin{align*}
  E_{\max}(A)=E_{\min}(A)=M ,
\end{align*}
so its hologram trace is the whole Cauchy slice, \(e(a)=S^1\).
\end{proposition}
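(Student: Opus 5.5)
The plan is to reduce the statement to the defining conditions of the max- and min-holograms in \cite{BP,BK}, and to check that with \(\widetilde A=\varnothing\) the whole spacetime \(M\) satisfies both. Recall the setup of \cite{BK}. A candidate hologram \(E\) must be a wedge with \(A\subseteq E\subseteq \widetilde A'\). The max-hologram is the largest candidate that is max-accessible: for every competing wedge \(F\) with \(A\subseteq F\subseteq E\), the smooth conditional max-entropy of \(E\setminus F\) given \(F\) must be bounded by the generalized-area difference \(\mathcal A(\partial F)-\mathcal A(\partial E)\), in units of \(4G\). The min-hologram is defined dually, through the smooth conditional min-entropy and the opposite inequality.

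The first step is the feasibility set. By \Cref{thm:gap-protection}, \(\widetilde A=\varnothing\), so \(\widetilde A'=\varnothing'=M\) by the convention \(\varnothing'=M\) already used in \Cref{prop:single-gap-completion}. The only outer constraint is therefore \(E\subseteq M\), and \(M\) itself is a wedge, since \(M''=\varnothing'=M\). The second step is the geometry of \(M\). Global \(\mathrm{dS}_2\) is closed, so \(\partial M\cap\Sigma_0=\varnothing\) and \(\mathcal A(\partial M)=0\). Moreover the global state on a Cauchy slice of \(M\) is pure.

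The third step, which I expect to carry the content, is the accessibility inequality for \(E=M\) against every competitor \(F\) with \(A\subseteq F\subsetneq M\). For a pure global state, the smooth conditional max-entropy \(H_{\max}(M\setminus F\,|\,F)\) equals, by duality, minus a smooth min-entropy of \(M\setminus F\) conditioned on a purifying reference. Since the reference is trivial, this is \(\le 0\) up to smoothing corrections of order \(\log(1/\epsilon)\). The area side is \(\mathcal A(\partial F)-0\ge 0\), and it is strictly positive and semiclassically large whenever \(\partial F\) has nonempty edge on \(\Sigma_0\); it vanishes only if \(F=M\). Hence \(M\) is max-accessible and, being the largest possible candidate, is the max-hologram. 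The same purity and duality argument, with the roles of \(H_{\max}\) and \(H_{\min}\) exchanged, should show that every proper wedge \(F\supseteq A\) fails the min-hologram inequality in the direction that would stop the growth. It follows that the min-hologram, defined as the smallest wedge containing \(A\) that cannot be shrunk, also cannot be proper, so \(E_{\min}(A)=M\).

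The main obstacle is the smoothing bookkeeping. One must check that the \(\epsilon\)-dependent terms and the possible negativity of conditional entropies in the pure state never compete with the strictly positive area of a proper edge. Here I would follow the leading-order-in-\(G\) treatment of \cite{BK} and cite their argument rather than redo it. Once \(E_{\max}=E_{\min}=M\), the hologram \(E(A)=M\) is defined, and \Cref{def:hologram-trace} gives \(e(a)=M\cap\Sigma_0=S^1\).
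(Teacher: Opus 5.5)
Your outline is sound, but it takes a different route from the paper. The paper's proof is only a citation: Theorem~23 of \cite{BK} gives \(E_{\max}(A)=E_{\min}(A)=M\) when the fundamental complement vanishes, and intersecting with \(\Sigma_0\) gives \(e(a)=S^1\).

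You instead reconstruct the mechanism behind that theorem. The emptiness of \(\widetilde A\) is the hypothesis, so \Cref{thm:gap-protection} is not needed; it makes the outer constraint \(E\subseteq\widetilde A'=M\) vacuous. \(M\) is an edgeless wedge, and purity of the global state together with duality of smooth conditional entropies lets \(M\) beat every proper wedge in both accessibility conditions.

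What your version buys is an explicit list of the inputs actually used: purity of the global state, positivity of edge areas, and the duality \(H^\epsilon_{\max}(X|Y)=-H^\epsilon_{\min}(X|Z)\) for a pure state on \(XYZ\). This is informative here, because in JT the edge ``area'' is a dilaton value. The paper adopts its nonnegativity only as an explicit assumption, in \Cref{rem:s1-entropy-crosscheck}.

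What the citation buys is that it does not commit to a paraphrase of the BK definitions. Those include strict versus non-strict inequalities, which part of an edge carries area, and the smoothing convention. Since you also defer the smoothing bookkeeping to \cite{BK}, your argument is no more self-contained than the paper's on the delicate point.

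If you keep your route, tighten two steps.

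For the min-hologram, carry out the computation instead of saying it ``should'' work. For a proper wedge \(E\supseteq A\), test it against \(F=M\). Duality with a trivial purifier gives \(H^\epsilon_{\min}(M\setminus E\,|\,E)=-H^\epsilon_{\max}(M\setminus E)\), which is bounded above by a term vanishing as \(\epsilon\to0\). The min condition would instead require it to exceed \(\mathcal A(\partial E)/4G\), which is strictly positive under your area convention, so no proper wedge qualifies. Note that the defining condition concerns enlargements \(F\supseteq E\). Describing \(E_{\min}\) as the smallest wedge that ``cannot be shrunk'' therefore misstates it.

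For the max-hologram, the smooth duality is exact, and the unconditional \(H^\epsilon_{\min}\) of a normalized state is nonnegative. Hence \(H^\epsilon_{\max}(M\setminus F\,|\,F)\le0\) with no \(\log(1/\epsilon)\) correction. The real subtlety is the UV (type-\(\mathrm{III}\)) divergence of these quantities in field theory, and that is where appealing to the semiclassical treatment of \cite{BK} is genuinely needed.
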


\begin{proof}
Theorem~23 of \cite{BK} gives \(E_{\max}(A)=E_{\min}(A)=M\) when the fundamental complement vanishes. Their wedge \(a\) and complement \(\tilde a\) are denoted here by \(A\) and \(\widetilde A\). Intersecting the common hologram with \(\Sigma_0\) gives the statement about its trace.
\end{proof}

\begin{remark}[Semiclassical cross-check]
\label{rem:s1-entropy-crosscheck}
In the semiclassical Bunch--Davies regime, the generalized entropy
\[
S_{\rm gen}(r)=\frac{\Phi(\partial r)}{4G_2}+S_{\rm CFT}(r)
\]
also favors \(r=S^1\), the hologram trace of \Cref{prop:refined-wedge-empty}. The whole circle has \(\Phi(\partial S^1)=0\), and since \(S^1\) has empty complement \(\ket{\rm BD}\) restricts to a vector state on \(B(\mathcal H_{\rm CFT})\), hence a pure one, so \(S_{\rm gen}(S^1)=0\) exactly. For \(r\subsetneq S^1\) with gaps parametrically larger than the cutoff, the single-interval vacuum entanglement-entropy formulas \cite{HolzheyWilczek,CalabreseCardy} give \(S_{\rm CFT}(r)>0\), and if \(\Phi/4G_2\ge0\), then \(S_{\rm gen}(r)>0\).

This check assumes the vacuum, a nonnegative dilaton contribution, and gaps larger than the cutoff. Multi-interval entropy depends on the full operator content of the CFT, so the argument is not an alternative derivation of \Cref{prop:refined-wedge-empty}.
\end{remark}

\subsubsection{Two-Arc Activation}
\label{sec:two-interval-activation}

Start with a single short arc (see \Cref{fig:activation}),
\begin{align*}
  a_1 = (0,\delta),\qquad 0<\delta\ll 1 .
\end{align*}
Its regular-open complement is a single protected gap of angular length
\(2\pi-\delta>\pi\), so
\begin{align*}
  \widetilde A_1=D\bigl(a_1^\perp\bigr),\qquad
  \mathcal A_{\rm model}(a_1)\subsetneq\mathcal M_{\rm ref},
\end{align*}
the second statement being \Cref{cor:properness-a1}. Now add a second, short arc on the opposite side of the circle,
\begin{align*}
  a_2 = (\pi,\pi+\varepsilon),\qquad 0<\varepsilon\ll 1 .
\end{align*}
The union \(a=a_1\cup a_2\) is admissible for \(0<\delta<\pi\) and \(0<\varepsilon<\pi\), and leaves two gaps,
\begin{align*}
  b_1=(\delta,\pi),\quad |b_1|=\pi-\delta<\pi,\qquad
  b_2=(\pi+\varepsilon,2\pi),\quad |b_2|=\pi-\varepsilon<\pi .
\end{align*}
Both are below the threshold, so \(\widetilde A=\varnothing\) and, by \Cref{prop:algebraic-gap-protection}(ii), \(\mathcal A_{\rm model}(a)=\mathcal M_{\rm ref}\): the addition of the short antipodal arc removes the model's commutation restriction.

\begin{figure}[t]
\centering
{\sffamily\sansmath
\begin{tikzpicture}[scale=0.95]
  \begin{scope}[xshift=-4.6cm]
    \draw[figred, line width=5pt] plot[domain=10:360,samples=200] (\x:2);
    \draw[figblue, line width=5pt] plot[domain=0:10,samples=10] (\x:2);
    \foreach \ang in {0,90,180,270}{\draw[gray] (\ang:2.18) -- (\ang:1.82);}
    \node[gray] at (0:2.5) {\scriptsize $0$};
    \node[gray] at (90:2.55) {\scriptsize $\pi/2$};
    \node[gray] at (180:2.55) {\scriptsize $\pi$};
    \node[gray] at (270:2.6) {\scriptsize $3\pi/2$};
    \node[figblue] at (5:1.35) {\scriptsize $a_1$};
    \node[figred] at (195:1.15) {\scriptsize $\qquad$protected gap};
    \node at (0,-3.2) {\small (a) $a_1$ alone:
      $\widetilde a_1=a_1^\perp\neq\varnothing$};
  \end{scope}
  \begin{scope}[xshift=4.6cm]
    \draw[gray!60, line width=5pt] plot[domain=10:180,samples=100] (\x:2);
    \draw[gray!60, line width=5pt] plot[domain=190:360,samples=100] (\x:2);
    \draw[figblue, line width=5pt] plot[domain=0:10,samples=10] (\x:2);
    \draw[figblue, line width=5pt] plot[domain=180:190,samples=10] (\x:2);
    \foreach \ang in {0,90,180,270}{\draw[gray] (\ang:2.18) -- (\ang:1.82);}
    \node[gray] at (0:2.5) {\scriptsize $0$};
    \node[gray] at (90:2.55) {\scriptsize $\pi/2$};
    \node[gray] at (180:2.55) {\scriptsize $\pi$};
    \node[gray] at (270:2.6) {\scriptsize $3\pi/2$};
    \node[figblue] at (5:1.35) {\scriptsize $a_1$};
    \node[figblue] at (189:1.42) {\scriptsize $a_2$};
    \node[gray!70!black] at (95:1.35) {\scriptsize $b_1$};
    \node[gray!70!black] at (275:1.35) {\scriptsize $b_2$};
    \node at (0,-3.2) {\small (b) $a_1\cup a_2$: $\widetilde a=\varnothing$};
  \end{scope}
\end{tikzpicture}
\caption{The activation effect. \textbf{(a)}~A short arc \(a_1=(0,\delta)\) leaves one protected gap \(a_1^\perp=(\delta,2\pi)\), and \(\mathcal A_{\rm model}(a_1)\) is a proper von Neumann subalgebra of \(\mathcal M_{\rm ref}\) by \Cref{cor:properness-a1}. \textbf{(b)}~A short antipodal arc \(a_2=(\pi,\pi+\varepsilon)\) splits that gap into the two open arcs \(b_1=(\delta,\pi)\) and \(b_2=(\pi+\varepsilon,2\pi)\), both of angular length below the threshold \(\pi\). The fundamental complement is then empty and the definition \eqref{eq:definition-of-A} assigns the whole reference algebra. Arc widths are exaggerated for visibility.}
}
\label{fig:activation}
\end{figure}

\begin{definition}[Activation]
\label{def:activation}
For the explicit family in \Cref{thm:activation}, \emph{activation} is the discontinuous change from the proper subalgebra assigned to \(a_1=(0,\delta)\) to the whole reference algebra assigned after an arbitrarily short antipodal arc is added. The first statement follows from \Cref{cor:properness-a1}, while the empty-complement endpoint is fixed by definition in \Cref{prop:algebraic-gap-protection}(ii).
\end{definition}

\begin{theorem}[Activation]
\label{thm:activation}
There exist admissible regions \(a_1\subsetneq a\subsetneq S^1\), with \(a\setminus a_1\) of arbitrarily small angular length, such that
\begin{align}
  \mathcal A_{\rm model}(a_1)\subsetneq\mathcal M_{\rm ref}
  \qquad\text{and}\qquad
  \mathcal A_{\rm model}(a)=\mathcal M_{\rm ref} .
  \label{eq:activation-unconditional}
\end{align}
\end{theorem}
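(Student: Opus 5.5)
The plan is to exhibit the explicit two-arc family of \Cref{sec:two-interval-activation} and check the two halves of \eqref{eq:activation-unconditional} separately, using \Cref{cor:properness-a1} for the first and \Cref{prop:algebraic-gap-protection}(ii) for the second.

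First I fix \(a_1=(0,\delta)\) and \(a=a_1\cup a_2\) with \(a_2=(\pi,\pi+\varepsilon)\), where \(0<\delta,\varepsilon<\pi\). Then I verify the region-theoretic hypotheses. Both \(a_1\) and \(a\) are admissible in the sense of \Cref{def:admissible-region}: the closures \([0,\delta]\) and \([\pi,\pi+\varepsilon]\) are disjoint because \(\delta<\pi\) and \(\pi+\varepsilon<2\pi\), and \(\overline a\ne S^1\). The inclusions \(a_1\subsetneq a\subsetneq S^1\) are immediate. The difference \(a\setminus a_1=a_2\) has angular length \(\varepsilon\), which can be made arbitrarily small.

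Next I prove the first half of \eqref{eq:activation-unconditional}. The only gap of \(a_1\) is \(a_1^\perp=(\delta,2\pi)\), of length \(2\pi-\delta>\pi\). By \Cref{thm:gap-protection}, \(\widetilde A_1=D(a_1^\perp)\ne\varnothing\). The final clause of \Cref{cor:properness-a1} treats exactly this family: the protected gap is longer than \(\pi\), so it overlaps \(R\) in an open set, which contains an arc \(J\) with \(\overline J\subset R\cap g\). Hence \(\mathcal A_{\rm model}(a_1)\subsetneq\mathcal M_{\rm ref}\). The real analytic content of this step sits in \Cref{prop:pre-compression-noncommutation} and \Cref{lem:hardy-compression}, which show that the commutator survives compression by \(P\). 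These results are taken as given here.

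For the second half I compute the gaps of \(a\). They are \(b_1=(\delta,\pi)\) and \(b_2=(\pi+\varepsilon,2\pi)\), of lengths \(\pi-\delta\) and \(\pi-\varepsilon\), both strictly below \(\pi\). \Cref{thm:gap-protection} then gives \(\widetilde A=\varnothing\), and \Cref{prop:algebraic-gap-protection}(ii), equivalently \Cref{prop:model-status}, yields \(\mathcal A_{\rm model}(a)=\mathcal M_{\rm ref}\).

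I do not expect any step to be a genuine obstacle, since all the substance is already in the cited results. The point needing the most care is confirming that the hypotheses of \Cref{cor:properness-a1} hold literally: the overlap arc \(J\) must have closure inside \(R\cap g\). Here \(R\cap g\) contains \((\delta,\pi/2)\cup(3\pi/2,2\pi)\), so for instance \(J=(\pi/4,\pi/3)\) works whenever \(\delta<\pi/4\). A second point is that both regional cores are realized in the common ambient representation of \Cref{lem:ambient-representation}. This requires \(g\) and \(g^\perp=a_1\) each to contain a nonempty arc, which they do.
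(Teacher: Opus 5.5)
Your proposal is correct and follows the paper's proof essentially verbatim: the same family $a_1=(0,\delta)$ and $a=a_1\cup(\pi,\pi+\varepsilon)$, with \Cref{cor:properness-a1} giving $\mathcal A_{\rm model}(a_1)\subsetneq\mathcal M_{\rm ref}$, and \Cref{thm:gap-protection} together with \Cref{prop:algebraic-gap-protection}(ii) giving $\mathcal A_{\rm model}(a)=\mathcal M_{\rm ref}$. Your extra checks (admissibility, the ambient-representation hypothesis, an explicit $J$) are sound, though taking $J$ inside $(3\pi/2,2\pi)$ instead of $(\pi/4,\pi/3)$ works for every $\delta\in(0,\pi)$ and removes the restriction $\delta<\pi/4$.
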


\begin{proof}
Take \(a_1=(0,\delta)\) with \(0<\delta<\pi\) and \(a=a_1\cup a_2\) with
\(a_2=(\pi,\pi+\varepsilon)\), \(0<\varepsilon<\pi\), so
\(a\setminus a_1=a_2\) has angular length \(\varepsilon\), which may be taken arbitrarily small. \Cref{cor:properness-a1} gives the first statement. The two gaps of \(a\), namely \(b_1=(\delta,\pi)\) and \(b_2=(\pi+\varepsilon,2\pi)\), have angular lengths \(\pi-\delta\) and \(\pi-\varepsilon\), both strictly less than \(\pi\) for \emph{any} \(\delta,\varepsilon>0\). Hence \(\widetilde A=\varnothing\) by \Cref{thm:gap-protection}, regardless of how small \(\varepsilon\) is, and \Cref{prop:algebraic-gap-protection}(ii) gives the second equality.
\end{proof}

\begin{remark}[Continuum limit]
\label{rem:uv-cutoff}
Haag--Kastler nets assign algebras to arbitrarily small regions, as in the crossed-product literature \cite{CLPW,Witten2022}. With a fixed UV cutoff, the same result holds for any \(\varepsilon\) between that cutoff and \(\pi\), since both new gaps remain strictly below the threshold.
\end{remark}

\section{Discussion}
\label{sec:discussion}

The gap theorem turns the fundamental-complement prescription of \cite{BP,BK} into a simple geometric rule on global \(\mathrm{dS}_2\). A complementary gap is protected exactly at and above angular size \(\pi\), the threshold at which it supports a timelike curve of infinite proper time in both directions. Since only the largest gap matters, a short antipodal component can split one protected gap into two unprotected ones. For the two-arc family of \Cref{thm:activation}, this geometric transition changes the assigned algebra from a proper subalgebra of \(\mathcal M_{\rm ref}\) to the full reference algebra.

The modular analysis identifies a separate constraint on regional constructions. The continuous cores built from the regions' vacuum modular flows are nested exactly when the larger modular flow restricts to the smaller one (\Cref{thm:net-compatibility-criterion}). Vacuum correlations violate this condition whenever the larger region adds a nonempty spacelike arc (\Cref{prop:net-compatibility-fails}). Regional dressings must therefore be coordinated under inclusion, rather than chosen independently. The split product state of \Cref{prop:split-repair} demonstrates such coordination on a fixed finite Boolean family, although comparing different atomic partitions requires additional data.

The construction extends the static-patch perspective of \cite{CLPW} to multi-arc regions without changing the algebraic types involved. The matter algebra \(\mathcal A_{\rm CFT}(R)\) is type \(\mathrm{III}_1\), and \(\mathcal M_{\rm ref}\) is its type-\(\mathrm{II}_1\) positive-clock-spectrum corner. Activation is a discontinuity of the region-to-algebra assignment built from these objects. This provides a concrete example of the kind of map considered in the algebraic generalized-entanglement-wedge proposal of \cite{SahuGEWAlgebras}. Establishing its expected isotony and subfactor structure would require a compatible dressed net and the commutant identification \eqref{eq:commutant-recovery}.

A semiclassical target for that identification already exists. If the gravitational modular Hamiltonian, with the same convention \(K_A=-\log\Delta_A\) as in \eqref{eq:modular-convention}, takes the usual fixed-phase form
\begin{align}
  K_A=\hat{\mathscr A}(\partial E_0)+K^{\rm EFT}_{E_0}+O(G_2)
  \label{eq:semiclassical-modular}
\end{align}
in a code subspace, with a state-independent edge operator \(\hat{\mathscr A}\), then the edge term cancels in relative-entropy differences and Petz sufficiency \cite{Petz86,Petz88,JLMS,DHW} supplies a reconstruction map from the boundary effective-field-theory algebra on \(E(A)\) into the physical gravitational algebra. Identifying its target with \(\mathcal A_{\rm model}(a)\) requires an isomorphism between that physical algebra and the present common-clock model:
\begin{align}
  \mathcal A_{\rm semi}\bigl(E(A)\bigr)
  \longrightarrow\mathcal A_{\rm model}(a)
  \label{eq:matching-condition}
\end{align}
and should be viewed as the desired matching condition.

The canonical analysis of \Cref{app:adm} supplies the gravity-plus-matter constraints, the endpoint balance \eqref{eq:boundary-term}, the classical relation \(Q^R_{\rm JT}=H^R_{\rm mat}\), and the York pair \(\bigl(K,a\Phi/(8\pi G_2)\bigr)\). The modular identity \(K_R=2\pi H_{\rm stat}^{\rm ch}\), by contrast, belongs to the chiral model. The JT constraint couples to the one-sided charge \(H^R_{\rm mat}\) in the full CFT, while its global generator is \(H_{\rm stat}^{\rm full}=H_{\rm mat}^{R}+H_{\rm mat}^{R^\perp}\). The common Killing field makes the normalizations comparable, but a quantum matching requires a reduced gravitational clock constraint. The York-time constructions of \cite{ParrikarSake,BlommaertChen} provide natural starting points for this step. In the asymptotically bounded setting, \cite{PeningtonWitten2023} gives the complementary canonical construction of the JT boundary algebra.

Two extensions appear especially natural. The first is a non-chiral, anomaly-free matter realization with regional embeddings coherent under inclusion. The second is the higher-dimensional problem suggested by \Cref{prop:dSd-threshold}: determining the causal completion of a general hemisphere-containing component and its operator-algebraic counterpart.

\section*{Acknowledgement}
The author would like to thank Dima Arinkin and Gary Shiu for fruitful discussions related to this work.

\textbf{Funding.} The author declares that no funding, grants, or other support were received during the preparation of this manuscript.

\textbf{Data Availability.} No datasets were generated or analysed during the current study; the schematic figures were produced directly from the closed-form expressions given in the text.

\appendix

\section{Conventions and Normalizations}
\label{app:normalization-ledger}

The conventions used in the algebraic construction are collected here.
\begin{align*}
 \Delta_u&:\ \text{Tomita modular operator of }
   (\mathcal A_{\rm CFT}(u),\Omega_{\rm BD}),\\
 K_u&=-\log\Delta_u,\\
 \sigma_t^u(x)&=\Delta_u^{it}x\Delta_u^{-it}
   =e^{-itK_u}x\,e^{itK_u},\\
 H_{\rm stat}^{\rm ch}&:\ \text{generator of the M\"obius subgroup preserving
   }R\text{ in the abstract chiral model},\\
 K_R&=2\pi H_{\rm stat}^{\rm ch}
   \quad\text{(\Cref{lem:static-modular-relation}, \Cref{rem:sign-check})},\\
 H_{\rm stat}^{\rm full}
   &=\int_{S^1}\ed\chi\,\cos\chi\,T_{\eta\eta}(0,\chi)
   \quad\text{(global charge of the separate full CFT)},\\
 H_{\rm mat}^R&=\int_R\ed\chi\,\cos\chi\,T_{\eta\eta}(0,\chi)
   \quad\text{(one-sided matter charge; requires
   \Cref{ass:stress-tensor})},\\
 H_{\rm stat}^{\rm full}&\ne H_{\rm mat}^R
   \quad\text{(\Cref{rem:hstat-neq-hmat})},\\
 Q^R_{\rm JT}&=-(\Phi_++\Phi_-)/(8\pi G_2),\qquad
 Q^R_{\rm JT}=H^R_{\rm mat}\ \text{classically on the constraint surface},\\
 q_{\rm clk}&:\ \text{dimensionless auxiliary modular-clock coordinate},\\
 p_{\rm clk}&=-i\partial_{q_{\rm clk}},\quad
 [q_{\rm clk},p_{\rm clk}]=i,
 \quad q:=q_{\rm clk},\quad h:=p_{\rm clk},\\
 D_u(x)&=\sigma_q^u(x),\qquad V(s)=e^{ish},\\
 \widehat\sigma_r(D_u(x))&=D_u(x),\qquad
 \widehat\sigma_r(V(s))=e^{-irs}V(s),\qquad Z(r)=e^{irq},\\
 P&=\mathbf1_{[0,\infty)}(h),\qquad
 P_r=Z(r)PZ(r)^*=\mathbf1_{[r,\infty)}(h),\\
 \rho_u^{is}&=V(s),\qquad \log\rho_u=h,\\
 \widehat\Tr_u(x)&=\widehat\varphi_u\bigl(
   \rho_u^{-1/2}x\rho_u^{-1/2}\bigr),\qquad
   x\in\hat{\mathcal M}(u)_+,\\
 \widehat\Tr_u\circ\widehat\sigma_r&=e^{-r}\widehat\Tr_u,\qquad
 \widehat\Tr_u(P)=1,\qquad
 \tau_{u,P}(x)=\widehat\Tr_u(x)/\widehat\Tr_u(P).
\end{align*}
The parameters \(t,q,r,s\) are dimensionless. Physical static time is \(\tau\), with \(\sigma_t^R=\alpha_{-2\pi t}^{\rm stat}\), so the fibre value \(q\) corresponds to \(\tau=-2\pi q\). The angular symbols \(\beta,\beta_k\) denote gap lengths only. The auxiliary \(h\) above is unrelated to the ADM spatial metric \(h\) of \Cref{app:adm}, to the formal endpoint charge \(Q_{\rm JT}^R\), and to the candidate York momentum \(a\Phi/(8\pi G_2)\). The value \(\widehat\Tr_u(P)=1\) uses the normalized regional vacuum state together with the Haar convention \(\ed\mu(s)=\ed s/(2\pi)\) and the Fourier pair fixed in the proof of \Cref{prop:finite-trace}. The symmetric trace formula is an extended-positive-cone identity when \(\rho_u\) is unbounded. An abstract semifinite trace may still be rescaled (\Cref{rem:trace-scale}), and the traces for different \(u\) are not asserted to be restrictions of one ambient trace (\Cref{rem:common-projection-traces}).

\section{Geometric Form of the Inclusion Obstruction}
\label{app:geometric-isotony-check}

For nested single arcs the obstruction of \Cref{prop:net-compatibility-fails} also has a direct geometric proof. Map \(v\) to \((0,\infty)\); by \Cref{lem:geometric-modular-flow} its vacuum modular group acts as \(x\mapsto e^{2\pi t}x\). If \(u=(a_0,b_0)\) is compactly contained in \(v\), then for sufficiently large \(t\) the arcs \(u\) and \(e^{2\pi t}u\) are disjoint, so \Cref{lem:statistical-independence} makes their algebras intersect in \(\mathbb C\id\); being type-\(\mathrm{III}_1\) factors, they cannot coincide. If \(u\) shares one endpoint with \(v\), a nonzero dilation instead produces two strictly nested arcs; their algebras are still distinct, since equality would place the nontrivial algebra of an arc in their set-theoretic difference both inside the smaller factor and, by locality, inside its commutant, contradicting factoriality. Hence \(\sigma_t^v(\mathcal A_{\rm CFT}(u))\ne\mathcal A_{\rm CFT}(u)\) for some \(t\), in agreement with the conditional-expectation proof.

\needspace{6\baselineskip}
\section{Canonical Dilaton-Gravity Analysis and the Auxiliary Clock}
\label{app:adm}

This appendix derives the canonical content of \eqref{eq:jt-action} and separates it from the auxiliary crossed-product construction. \Cref{ass:stress-tensor} is in force throughout. Write \(g:=8\pi G_2\) and \(a:=\sqrt h\), where \(h\) is the ADM spatial metric --- in this appendix only, \(h\) never denotes the auxiliary clock momentum. We work first on a general spacelike slice and retain \(L\).

\subsection{ADM Variables and Curvature Identity}
\label{app:adm-setup}

Write the metric in lapse-shift form,
\begin{align}
  \ed s^2 = -N^2\,\ed t^2 + a^2(\ed\chi+N^\chi\,\ed t)^2 ,
  \label{eq:adm-metric}
\end{align}
with future unit normal \(n=N^{-1}(\partial_t-N^\chi\partial_\chi)\). Fix the sign of the extrinsic curvature by
\begin{align}
  K = \frac{1}{Na}\Bigl(\dot a-(aN^\chi)'\Bigr).
  \label{eq:extrinsic-curvature}
\end{align}
For this convention, direct evaluation of the Ricci scalar gives
\begin{align}
  \sqrt{-g}\,R
  =-2\partial_\chi\!\left(\frac{N'}a\right)
   -2\partial_\chi(aKN^\chi)+2\partial_t(aK).
  \label{eq:gauss-codazzi}
\end{align}
The signs can be checked without a reference: for \(N=1,\ N^\chi=0,\ a=a(t)\), the metric \eqref{eq:adm-metric} has \(R=2\ddot a/a\) and \(\sqrt{-g}=a\), so \(\sqrt{-g}R=2\ddot a\), while \eqref{eq:gauss-codazzi} gives \(K=\dot a/a\) and \(2\partial_t(aK)=2\ddot a\). Formulae written with the opposite convention for \(K\) have the opposite time and shift divergences. Canonical formulations of two-dimensional dilaton gravity provide useful comparisons \cite{NavarroSalasTalavera,KucharRomanoVaradarajan}, but the calculation below follows directly from \eqref{eq:jt-action}.

\subsection{Bulk Lagrangian and Boundary Terms}
\label{app:adm-lagrangian}

The bulk coefficient in \eqref{eq:jt-action} is \(1/(2g)\). Substituting \eqref{eq:gauss-codazzi} and applying the product rule gives
\begin{align}
 \frac1g\Phi\partial_t(aK)
   &=\frac1g\partial_t(\Phi aK)-\frac1g aK\dot\Phi,
   \nonumber\\
 -\frac1g\Phi\partial_\chi(aKN^\chi)
   &=-\frac1g\partial_\chi(\Phi aKN^\chi)
     +\frac1g\Phi'aKN^\chi,
   \nonumber\\
 -\frac1g\Phi\partial_\chi\!\left(\frac{N'}a\right)
   &=\frac1g\partial_\chi\!\left(
       -\frac{\Phi N'}a+\frac{N\Phi'}a\right)
     -\frac1gN\partial_\chi\!\left(\frac{\Phi'}a\right).
 \label{eq:adm-ibp-lines}
\end{align}
The first total derivative is cancelled on spacelike time boundaries by the GHY term of \eqref{eq:jt-action} with the same normal convention. Before any additional spatial, null, or corner action is chosen, the spatial total derivative is
\begin{align}
 B_{\rm raw}
 =\frac1g\left[
 -\frac{\Phi N'}a+\frac{N\Phi'}a-\Phi aKN^\chi
 \right]_{\partial\Sigma}.
 \label{eq:adm-raw-boundary}
\end{align}
It vanishes after summing endpoints on a closed \(S^1\). On a cut-open static patch it must instead be combined with the boundary and joint terms appropriate to the chosen variational problem; in particular the horizons are null boundaries, and a timelike GHY term cannot simply be assigned to them.

The gravitational bulk Lagrangian density is
\begin{align}
  \mathcal L_{\rm grav}
  =\frac1g\left[
  -aK(\dot\Phi-N^\chi\Phi')
  -N\partial_\chi\!\left(\frac{\Phi'}a\right)
  -\frac{Na\Phi}{L^2}\right],
  \label{eq:adm-lagrangian}
\end{align}
equivalently
\begin{align}
  \mathcal L_{\rm grav}
  =\frac1g\left[
  -\frac1N\bigl(\dot a-(aN^\chi)'\bigr)
     (\dot\Phi-N^\chi\Phi')
  -N\partial_\chi\!\left(\frac{\Phi'}a\right)
  -\frac{Na\Phi}{L^2}\right].
  \label{eq:adm-lagrangian-expanded}
\end{align}

\subsection{Momenta and Primary Constraints}
\label{app:adm-momenta}

Write \(\mathsf A:=\dot a-(aN^\chi)'\) and \(\mathsf B:=\dot\Phi-N^\chi\Phi'\). The momenta are
\begin{align}
  P_\Phi&=-\frac{\mathsf A}{gN}=-\frac{aK}{g},
  \label{eq:p-phi}\\
  P_a&=-\frac{\mathsf B}{gN},\qquad
  P_N=0,\qquad P_{N^\chi}=0,
  \label{eq:p-sqrth}
\end{align}
the last two being primary constraints. The velocities are
\begin{align}
  \dot a=(aN^\chi)'-gNP_\Phi,\qquad
  \dot\Phi=N^\chi\Phi'-gNP_a .
  \label{eq:velocity-inversion}
\end{align}

\subsection{Gravity Plus Matter Constraints}
\label{app:adm-legendre}

Using \eqref{eq:velocity-inversion} in the Legendre transform and integrating \(P_a(aN^\chi)'\) once gives
\begin{align}
 I_{\rm grav}
 =\int \ed t\,\ed\chi\,
 \bigl(P_\Phi\dot\Phi+P_a\dot a
 -N\mathcal C_{\rm grav}-N^\chi\mathcal C_{\chi,\rm grav}\bigr)
+I_{\partial}
  \label{eq:hamiltonian-decomposition}
\end{align}
with
\begin{align}
 \mathcal C_{\rm grav}
 &=-gP_\Phi P_a+\frac1g\left[
   \partial_\chi\!\left(\frac{\Phi'}a\right)
   +\frac{a\Phi}{L^2}\right],
 \label{eq:hamiltonian-constraint-density}\\
 \mathcal C_{\chi,\rm grav}
 &=P_\Phi\Phi'-aP_a' .
 \label{eq:momentum-constraint-density}
\end{align}
Here \(I_\partial\) includes \eqref{eq:adm-raw-boundary}, the shift endpoint term, and whatever boundary or corner functional makes the chosen variational problem differentiable.

For canonical matter variables \((\varphi^{\mathsf A},\pi_{\mathsf A})\) define the ADM densities by
\begin{align}
 \mathcal L_{\rm mat}
 =\pi_{\mathsf A}\dot\varphi^{\mathsf A}-N\mathcal C_{\rm mat}
  -N^\chi\mathcal C_{\chi,\rm mat},\qquad
 \mathcal C_{\rm mat}=aT_{\mu\nu}n^\mu n^\nu .
 \label{eq:matter-adm}
\end{align}
This fixes the sign of \(\mathcal C_{\chi,\rm mat}\) through variation with respect to \(N^\chi\) and avoids a convention-dependent sign for the momentum flux. The complete canonical action also contains \(-\lambda_NP_N-\lambda_\chi P_{N^\chi}\); preservation of the two primary constraints gives the secondary constraints
\begin{align}
 \mathcal C_{\rm grav}+\mathcal C_{\rm mat}=0,\qquad
 \mathcal C_{\chi,\rm grav}+\mathcal C_{\chi,\rm mat}=0 .
 \label{eq:total-adm-constraints}
\end{align}
No property of the CFT state enters this derivation; the matter term follows from metric variation alone.

\subsection{Evaluation on the Time-Reflection-Symmetric Slice}
\label{app:adm-slice}

For the coordinates of \eqref{eq:global-ds-metric}, \(t=\eta\), the background slice has \(N_0=a_0=L\) and \(N^\chi=0\). Time reflection gives \(K=0\) and \(\dot\Phi=0\), hence \(P_\Phi=P_a=0\). Since \(n=L^{-1}\partial_\eta\),
\begin{align}
 \mathcal C_{\rm grav}\big|_{\rm sym}
 &=\frac1{gL}(\Phi''+\Phi),&
 \mathcal C_{\rm mat}\big|_{\rm sym}
 &=\frac1L T_{\eta\eta}.
  \label{eq:h-sym}
\end{align}
The total constraint is therefore exactly \eqref{eq:adm-constraint-density}. This also follows from the \(\eta\eta\) component of \eqref{eq:dilaton-eom}, which provides an independent sign check: at \(\eta=0\) all Christoffel symbols of \eqref{eq:global-ds-metric} vanish and \(e^{2\omega}=L^2\), so the left-hand side of \eqref{eq:dilaton-eom} reduces to \(-(\partial_\chi^2\Phi+\Phi)\), giving
\(\tfrac1g(\partial_\chi^2\Phi+\Phi)+T_{\eta\eta}=0\).

\subsection{Killing Smearing, Endpoints, and the Classical Charge Relation}
\label{app:adm-smearing}

The de Sitter boost that is future directed in \(D(R)\) has
\(\xi=\cos\chi\,\partial_\eta\) on \(\Sigma_0\), as in
\eqref{eq:killing-field}. Its canonical lapse is \(N_\xi=-n_\mu\xi^\mu=L\cos\chi\), not \(\cos\chi\) unless \(L=1\). Smearing \eqref{eq:total-adm-constraints} over \(R=(-\pi/2,\pi/2)\) gives
\begin{align}
 0&=\int_R \ed\chi\,N_\xi
  (\mathcal C_{\rm grav}+\mathcal C_{\rm mat})\big|_{\rm sym}
 \nonumber\\
 &=\frac1g\int_R \ed\chi\,\cos\chi\,(\Phi''+\Phi)
   +\int_R \ed\chi\,\cos\chi\,T_{\eta\eta} .
 \label{eq:smeared-total-constraint}
\end{align}
Integrating the first term by parts twice,
\begin{align}
 \int_R \ed\chi\,\cos\chi\,(\Phi''+\Phi)
 =\left[\cos\chi\,\Phi'+\sin\chi\,\Phi\right]_{-\pi/2}^{\pi/2}
 =\Phi_++\Phi_- ,
  \label{eq:app-boundary-term}
\end{align}
with \(\Phi_\pm=\Phi(0,\pm\pi/2)\). This proves \eqref{eq:boundary-term}. The cancellation is independent of \(L\): the background factor in \(N_\xi\) cancels the \(1/L\) in the constraint density.

If the endpoint values of \(\Phi\) are fixed, their variations vanish and \eqref{eq:boundary-term} is simply an integrated constraint. If they are allowed to vary, variation of the bulk smeared gravitational constraint contains
\begin{align}
 \frac1g\bigl[(N_\xi/L)\,\delta\Phi'
       -(N_\xi/L)'\,\delta\Phi\bigr]_{\partial R}
 =\frac1g\bigl(\delta\Phi_++\delta\Phi_-\bigr),
 \label{eq:endpoint-variation}
\end{align}
so a differentiable generator adds \(Q_{\rm JT}^R=-(\Phi_++\Phi_-)/g\), in which case \eqref{eq:boundary-term} reads \(Q_{\rm JT}^R=H_{\rm mat}^R\). This is \Cref{prop:classical-charge-relation}. The sign changes with the generator orientation, and a static-patch Hamiltonian additionally requires the appropriate null-boundary and corner variational principle.

The endpoints \((\eta,\chi)=(0,\pm\pi/2)\) are the two bifurcation points on the spatial slice. The chronological future and past tips of \(D(R)\) are instead \((\pm\pi/2,0)\) in the conformal completion. Sharing horizon generators does not identify these four points.

\subsection{Extrinsic Curvature as a Clock Candidate}
\label{app:adm-york}

Equation~\eqref{eq:p-phi} makes the extrinsic curvature a natural local clock candidate. Set
\begin{align}
 T_Y:=K,\qquad
 \Pi_Y:=\frac{a\Phi}{g},\qquad
 \Pi_a:=P_a+\frac{\Phi K}{g} .
 \label{eq:york-canonical-pair}
\end{align}
The symplectic potential verifies the canonical transformation explicitly:
\begin{align}
 P_\Phi\,\ed\Phi+P_a\,\ed a
 =\Pi_Y\,\ed T_Y+\Pi_a\,\ed a
   -\ed\!\left(\frac{a\Phi K}{g}\right).
 \label{eq:york-symplectic}
\end{align}
Thus \(\Pi_Y=a\Phi/g\) is conjugate to \(T_Y=K\), up to the displayed exact term and the simultaneous redefinition of the momentum of \(a\).

In these variables the gravitational lapse constraint is
\begin{align}
 \mathcal C_{\rm grav}
 =aT_Y\Pi_a-T_Y^2\Pi_Y
 +\frac1g\partial_\chi\!\left[
   \frac1a\partial_\chi\!\left(\frac{g\Pi_Y}{a}\right)\right]
 +\frac{\Pi_Y}{L^2}.
 \label{eq:york-unreduced-constraint}
\end{align}
It contains the scale momentum and spatial derivatives of \(\Pi_Y\). Deparametrization therefore requires gauge fixing, solution of the momentum constraint, boundary reduction, and a monotonicity analysis for \(T_Y\). On the symmetric slice \(T_Y=0\).

York-time quantizations of JT gravity exist in other settings \cite{ParrikarSake}, and a recent subregion construction derives a crossed-product constraint using extrinsic curvature as a physical clock \cite{BlommaertChen}. Adapting these constructions to global \(\mathrm{dS}_2\) requires its positive-curvature reduced phase space and regional boundary conditions.

\subsection{Implications for the Clock Construction}
\label{app:adm-scope}

The calculation gives the gravity-plus-matter constraints \eqref{eq:total-adm-constraints}, the candidate pair \((T_Y,\Pi_Y)\), the symmetric-slice balance \eqref{eq:boundary-term} and the classical relation \(Q^R_{\rm JT}=H^R_{\rm mat}\) of \Cref{prop:classical-charge-relation}. The auxiliary generator \(h\), the York momentum \(\Pi_Y\), the endpoint charge \(Q^R_{\rm JT}\), and the modular Hamiltonian \(K_R\) are distinct at this stage. Deriving a relation of the form \(h+K_u=0\) requires a reduced phase space, a monotone gravitational clock, quantization of the relational observables, and embeddings compatible with regional inclusion.

For JT gravity with an asymptotic boundary this programme has been carried out. Penington and Witten quantize the theory canonically and identify the algebra of boundary observables, which is generated by the ADM Hamiltonian in the pure theory and becomes a type-\(\mathrm{II}_\infty\) factor once bulk matter is coupled in \cite{PeningtonWitten2023}. Global \(\mathrm{dS}_2\) instead calls for a subregion assignment in a closed universe, together with the compatibility addressed by \Cref{prop:net-compatibility-fails}.

\bibliographystyle{unsrtnat}
\bibliography{refs}

@article{Maldacena,
  author        = {Maldacena, Juan M.},
  title         = {The Large {$N$} Limit of Superconformal Field Theories and
                   Supergravity},
  journal       = {Adv. Theor. Math. Phys.},
  volume        = {2},
  number        = {2},
  pages         = {231--252},
  year          = {1998},
  doi           = {10.4310/ATMP.1998.v2.n2.a1},
  eprint        = {hep-th/9711200},
  archivePrefix = {arXiv},
  primaryClass  = {hep-th},
  note          = {arXiv:hep-th/9711200}
}

@article{GKP,
  author        = {Gubser, S. S. and Klebanov, I. R. and Polyakov, A. M.},
  title         = {Gauge Theory Correlators from Noncritical String Theory},
  journal       = {Phys. Lett. B},
  volume        = {428},
  pages         = {105--114},
  year          = {1998},
  doi           = {10.1016/S0370-2693(98)00377-3},
  eprint        = {hep-th/9802109},
  archivePrefix = {arXiv},
  primaryClass  = {hep-th},
  note          = {arXiv:hep-th/9802109}
}

@article{Witten,
  author        = {Witten, Edward},
  title         = {Anti-{de Sitter} Space and Holography},
  journal       = {Adv. Theor. Math. Phys.},
  volume        = {2},
  number        = {2},
  pages         = {253--291},
  year          = {1998},
  doi           = {10.4310/ATMP.1998.v2.n2.a2},
  eprint        = {hep-th/9802150},
  archivePrefix = {arXiv},
  primaryClass  = {hep-th},
  note          = {arXiv:hep-th/9802150}
}

@article{RyuTakayanagi,
  author        = {Ryu, Shinsei and Takayanagi, Tadashi},
  title         = {Holographic Derivation of Entanglement Entropy from
                   {AdS/CFT}},
  journal       = {Phys. Rev. Lett.},
  volume        = {96},
  pages         = {181602},
  year          = {2006},
  doi           = {10.1103/PhysRevLett.96.181602},
  eprint        = {hep-th/0603001},
  archivePrefix = {arXiv},
  primaryClass  = {hep-th},
  note          = {arXiv:hep-th/0603001}
}

@article{HRT,
  author        = {Hubeny, Veronika E. and Rangamani, Mukund and
                   Takayanagi, Tadashi},
  title         = {A Covariant Holographic Entanglement Entropy Proposal},
  journal       = {JHEP},
  volume        = {07},
  pages         = {062},
  year          = {2007},
  doi           = {10.1088/1126-6708/2007/07/062},
  eprint        = {0705.0016},
  archivePrefix = {arXiv},
  primaryClass  = {hep-th},
  note          = {arXiv:0705.0016}
}

@article{EngelhardtWall,
  author        = {Engelhardt, Netta and Wall, Aron C.},
  title         = {Quantum Extremal Surfaces: Holographic Entanglement Entropy
                   Beyond the Classical Regime},
  journal       = {JHEP},
  volume        = {01},
  pages         = {073},
  year          = {2015},
  doi           = {10.1007/JHEP01(2015)073},
  eprint        = {1408.3203},
  archivePrefix = {arXiv},
  primaryClass  = {hep-th},
  note          = {arXiv:1408.3203}
}

@article{JLMS,
  author        = {Jafferis, Daniel L. and Lewkowycz, Aitor and
                   Maldacena, Juan and Suh, S. Josephine},
  title         = {Relative Entropy Equals Bulk Relative Entropy},
  journal       = {JHEP},
  volume        = {06},
  pages         = {004},
  year          = {2016},
  doi           = {10.1007/JHEP06(2016)004},
  eprint        = {1512.06431},
  archivePrefix = {arXiv},
  primaryClass  = {hep-th},
  note          = {arXiv:1512.06431}
}

@article{DHW,
  author        = {Dong, Xi and Harlow, Daniel and Wall, Aron C.},
  title         = {Reconstruction of Bulk Operators Within the Entanglement
                   Wedge in Gauge-Gravity Duality},
  journal       = {Phys. Rev. Lett.},
  volume        = {117},
  number        = {2},
  pages         = {021601},
  year          = {2016},
  doi           = {10.1103/PhysRevLett.117.021601},
  eprint        = {1601.05416},
  archivePrefix = {arXiv},
  primaryClass  = {hep-th},
  note          = {arXiv:1601.05416}
}

@article{ADH,
  author        = {Almheiri, Ahmed and Dong, Xi and Harlow, Daniel},
  title         = {Bulk Locality and Quantum Error Correction in {AdS/CFT}},
  journal       = {JHEP},
  volume        = {04},
  pages         = {163},
  year          = {2015},
  doi           = {10.1007/JHEP04(2015)163},
  eprint        = {1411.7041},
  archivePrefix = {arXiv},
  primaryClass  = {hep-th},
  note          = {arXiv:1411.7041}
}

@article{HarlowTASI,
  author        = {Harlow, Daniel},
  title         = {{TASI} Lectures on the Emergence of Bulk Physics in
                   {AdS/CFT}},
  journal       = {PoS},
  volume        = {TASI2017},
  pages         = {002},
  year          = {2018},
  doi           = {10.22323/1.305.0002},
  eprint        = {1802.01040},
  archivePrefix = {arXiv},
  primaryClass  = {hep-th},
  note          = {arXiv:1802.01040}
}

@article{DonnellyFreidel,
  author        = {Donnelly, William and Freidel, Laurent},
  title         = {Local Subsystems in Gauge Theory and Gravity},
  journal       = {JHEP},
  volume        = {09},
  pages         = {102},
  year          = {2016},
  doi           = {10.1007/JHEP09(2016)102},
  eprint        = {1601.04744},
  archivePrefix = {arXiv},
  primaryClass  = {hep-th},
  note          = {arXiv:1601.04744}
}

@article{BP,
  author        = {Bousso, Raphael and Penington, Geoff},
  title         = {Entanglement Wedges for Gravitating Regions},
  journal       = {Phys. Rev. D},
  volume        = {107},
  number        = {8},
  pages         = {086002},
  year          = {2023},
  doi           = {10.1103/PhysRevD.107.086002},
  eprint        = {2208.04993},
  archivePrefix = {arXiv},
  primaryClass  = {hep-th},
  note          = {arXiv:2208.04993}
}

@article{BK,
  author        = {Bousso, Raphael and Kaya, Sami},
  title         = {Fundamental Complement of a Gravitating Region},
  journal       = {Gen. Rel. Grav.},
  volume        = {57},
  number        = {8},
  pages         = {124},
  year          = {2025},
  doi           = {10.1007/s10714-025-03462-6},
  eprint        = {2505.15886},
  archivePrefix = {arXiv},
  primaryClass  = {hep-th},
  note          = {arXiv:2505.15886}
}

@article{HollowGrams,
  author        = {Kaya, Sami and Rath, Pratik and Ritchie, Kyle},
  title         = {{Hollow-grams}: Generalized Entanglement Wedges from the
                   Gravitational Path Integral},
  journal       = {JHEP},
  volume        = {09},
  pages         = {032},
  year          = {2025},
  doi           = {10.1007/JHEP09(2025)032},
  eprint        = {2506.10064},
  archivePrefix = {arXiv},
  primaryClass  = {hep-th},
  note          = {arXiv:2506.10064}
}

@misc{BalasubramanianCummings,
  author        = {Balasubramanian, Vijay and Cummings, Charlie},
  title         = {The Entropy of Finite Gravitating Regions},
  year          = {2023},
  eprint        = {2312.08434},
  archivePrefix = {arXiv},
  primaryClass  = {hep-th},
  howpublished  = {arXiv:2312.08434 [hep-th]},
  note          = {Preprint; no journal publication located}
}

@article{SahuGEWAlgebras,
  author        = {Sahu, Abhisek and {van der Heijden}, Jeremy and
                   {Van Raamsdonk}, Mark and Zibakhsh, Rana},
  title         = {Algebras for Generalized Entanglement Wedges},
  journal       = {JHEP},
  volume        = {07},
  pages         = {192},
  year          = {2026},
  doi           = {10.1007/JHEP07(2026)192},
  eprint        = {2511.21852},
  archivePrefix = {arXiv},
  primaryClass  = {hep-th},
  note          = {arXiv:2511.21852}
}

@unpublished{Y2H-Page_from_holograms,
  author        = {ElSayed, Hassan and Shiu, Gary},
  title         = {Page Curve of Black Hole Evaporation into a Gravitating
                   {de Sitter} Region from Holograms},
  year          = {2026},
  note          = {In preparation}
}

@article{CLPW,
  author        = {Chandrasekaran, Venkatesa and Longo, Roberto and
                   Penington, Geoff and Witten, Edward},
  title         = {An Algebra of Observables for {de Sitter} Space},
  journal       = {JHEP},
  volume        = {02},
  pages         = {082},
  year          = {2023},
  doi           = {10.1007/JHEP02(2023)082},
  eprint        = {2206.10780},
  archivePrefix = {arXiv},
  primaryClass  = {hep-th},
  note          = {arXiv:2206.10780}
}

@article{Witten2022,
  author        = {Witten, Edward},
  title         = {Gravity and the Crossed Product},
  journal       = {JHEP},
  volume        = {10},
  pages         = {008},
  year          = {2022},
  doi           = {10.1007/JHEP10(2022)008},
  eprint        = {2112.12828},
  archivePrefix = {arXiv},
  primaryClass  = {hep-th},
  note          = {arXiv:2112.12828}
}

@article{LeutheusserLiu2021,
  author        = {Leutheusser, Samuel and Liu, Hong},
  title         = {Causal Connectability Between Quantum Systems and the Black
                   Hole Interior in Holographic Duality},
  journal       = {Phys. Rev. D},
  volume        = {108},
  number        = {8},
  pages         = {086019},
  year          = {2023},
  doi           = {10.1103/PhysRevD.108.086019},
  eprint        = {2110.05497},
  archivePrefix = {arXiv},
  primaryClass  = {hep-th},
  note          = {arXiv:2110.05497}
}

@article{LeutheusserLiu2022,
  author        = {Leutheusser, Samuel and Liu, Hong},
  title         = {Emergent Times in Holographic Duality},
  journal       = {Phys. Rev. D},
  volume        = {108},
  number        = {8},
  pages         = {086020},
  year          = {2023},
  doi           = {10.1103/PhysRevD.108.086020},
  eprint        = {2112.12156},
  archivePrefix = {arXiv},
  primaryClass  = {hep-th},
  note          = {arXiv:2112.12156}
}

@misc{PeningtonWitten2023,
  author        = {Penington, Geoff and Witten, Edward},
  title         = {Algebras and States in {JT} Gravity},
  year          = {2023},
  eprint        = {2301.07257},
  archivePrefix = {arXiv},
  primaryClass  = {hep-th},
  howpublished  = {arXiv:2301.07257 [hep-th]},
  note          = {Preprint; no journal publication located}
}

@book{Takesaki,
  author        = {Takesaki, Masamichi},
  title         = {Theory of Operator Algebras {I}, {II}, {III}},
  publisher     = {Springer},
  address       = {Berlin},
  series        = {Encyclopaedia of Mathematical Sciences 124, 125, 127},
  year          = {2002--2003},
  note          = {Vol.~I first published 1979.  Crossed products and
                   Takesaki duality: Vol.~II, Ch.~X.  Structure of type-III
                   algebras and the continuous core: Vol.~II, Ch.~XII}
}

@article{Takesaki1972,
  author        = {Takesaki, Masamichi},
  title         = {Conditional Expectations in von {N}eumann Algebras},
  journal       = {J. Funct. Anal.},
  volume        = {9},
  number        = {3},
  pages         = {306--321},
  year          = {1972},
  doi           = {10.1016/0022-1236(72)90004-3}
}

@article{BrunettiGuidoLongo1993,
  author        = {Brunetti, Romeo and Guido, Daniele and Longo, Roberto},
  title         = {Modular Structure and Duality in Conformal Quantum Field
                   Theory},
  journal       = {Commun. Math. Phys.},
  volume        = {156},
  number        = {1},
  pages         = {201--219},
  year          = {1993},
  doi           = {10.1007/BF02096738},
  eprint        = {funct-an/9302008},
  archivePrefix = {arXiv},
  note          = {arXiv:funct-an/9302008}
}

@article{GabbianiFrohlich1993,
  author        = {Gabbiani, Fabrizio and Fr{\"o}hlich, J{\"u}rg},
  title         = {Operator Algebras and Conformal Field Theory},
  journal       = {Commun. Math. Phys.},
  volume        = {155},
  number        = {3},
  pages         = {569--640},
  year          = {1993},
  doi           = {10.1007/BF02096729}
}

@article{GuidoLongoWiesbrock1998,
  author        = {Guido, Daniele and Longo, Roberto and
                   Wiesbrock, Hans-Werner},
  title         = {Extensions of Conformal Nets and Superselection Structures},
  journal       = {Commun. Math. Phys.},
  volume        = {192},
  number        = {1},
  pages         = {217--244},
  year          = {1998},
  doi           = {10.1007/s002200050297},
  eprint        = {hep-th/9703129},
  archivePrefix = {arXiv},
  note          = {arXiv:hep-th/9703129}
}

@article{BuchholzDAntoniLongo2007,
  author        = {Buchholz, Detlev and D'Antoni, Claudio and
                   Longo, Roberto},
  title         = {Nuclearity and Thermal States in Conformal Field Theory},
  journal       = {Commun. Math. Phys.},
  volume        = {270},
  number        = {1},
  pages         = {267--293},
  year          = {2007},
  doi           = {10.1007/s00220-006-0127-9},
  eprint        = {math-ph/0603083},
  archivePrefix = {arXiv},
  note          = {arXiv:math-ph/0603083}
}

@article{KLM2001,
  author        = {Kawahigashi, Yasuyuki and Longo, Roberto and
                   M{\"u}ger, Michael},
  title         = {Multi-Interval Subfactors and Modularity of Representations
                   in Conformal Field Theory},
  journal       = {Commun. Math. Phys.},
  volume        = {219},
  number        = {3},
  pages         = {631--669},
  year          = {2001},
  doi           = {10.1007/PL00005565},
  eprint        = {math/9903104},
  archivePrefix = {arXiv},
  note          = {arXiv:math/9903104}
}

@unpublished{LongoLectureNotes,
  author        = {Longo, Roberto},
  title         = {Lectures on Conformal Nets, {P}art {II}: {N}ets of von
                   {N}eumann Algebras},
  year          = {2008},
  note          = {Unpublished draft, available at
                   \url{https://www.mat.uniroma2.it/longo/lecture-notes.html};
                   Part~I appeared as ``Real Hilbert subspaces, modular
                   theory, {$SL(2,\mathbb{R})$} and {CFT}'' in
                   \emph{Von Neumann Algebras in Sibiu}, Theta Ser. Adv.
                   Math.~10, Theta, Bucharest, pp.~33--91}
}

@article{Teitelboim,
  author        = {Teitelboim, Claudio},
  title         = {Gravitation and {H}amiltonian Structure in Two Space-Time
                   Dimensions},
  journal       = {Phys. Lett. B},
  volume        = {126},
  pages         = {41--45},
  year          = {1983},
  doi           = {10.1016/0370-2693(83)90012-6}
}

@article{Jackiw,
  author        = {Jackiw, Roman},
  title         = {Lower Dimensional Gravity},
  journal       = {Nucl. Phys. B},
  volume        = {252},
  pages         = {343--356},
  year          = {1985},
  doi           = {10.1016/0550-3213(85)90448-1}
}

@article{MTY,
  author        = {Maldacena, Juan and Turiaci, Gustavo J. and Yang, Zhenbin},
  title         = {Two Dimensional Nearly {de Sitter} Gravity},
  journal       = {JHEP},
  volume        = {01},
  pages         = {139},
  year          = {2021},
  doi           = {10.1007/JHEP01(2021)139},
  eprint        = {1904.01911},
  archivePrefix = {arXiv},
  primaryClass  = {hep-th},
  note          = {arXiv:1904.01911}
}

@article{CotlerJensenMaloney,
  author        = {Cotler, Jordan and Jensen, Kristan and Maloney, Alexander},
  title         = {Low-Dimensional {de Sitter} Quantum Gravity},
  journal       = {JHEP},
  volume        = {06},
  pages         = {048},
  year          = {2020},
  doi           = {10.1007/JHEP06(2020)048},
  eprint        = {1905.03780},
  archivePrefix = {arXiv},
  primaryClass  = {hep-th},
  note          = {arXiv:1905.03780}
}

@article{ParrikarSake,
  author        = {Parrikar, Onkar and Sake, Sunil Kumar},
  title         = {York Time in {JT} Gravity},
  journal       = {JHEP},
  volume        = {04},
  pages         = {010},
  year          = {2026},
  doi           = {10.1007/JHEP04(2026)010},
  eprint        = {2505.19231},
  archivePrefix = {arXiv},
  primaryClass  = {hep-th},
  note          = {arXiv:2505.19231}
}

@misc{BlommaertChen,
  author        = {Blommaert, Andreas and Chen, Chang-Han},
  title         = {Time in Gravitational Subregions and in Closed Universes},
  year          = {2026},
  eprint        = {2602.22153},
  archivePrefix = {arXiv},
  primaryClass  = {hep-th},
  howpublished  = {arXiv:2602.22153 [hep-th]},
  note          = {Preprint}
}

@article{NavarroSalasTalavera,
  author        = {Navarro-Salas, Jos{\'e} and Talavera, C. F.},
  title         = {Quantum Cosmological Approach to {2D} Dilaton Gravity},
  journal       = {Nucl. Phys. B},
  volume        = {423},
  number        = {2--3},
  pages         = {686--704},
  year          = {1994},
  doi           = {10.1016/0550-3213(94)90149-X},
  eprint        = {hep-th/9312030},
  archivePrefix = {arXiv},
  primaryClass  = {hep-th},
  note          = {arXiv:hep-th/9312030}
}

@article{KucharRomanoVaradarajan,
  author        = {Kucha{\v r}, Karel V. and Romano, Joseph D. and
                   Varadarajan, Madhavan},
  title         = {Dirac Constraint Quantization of a Dilatonic Model of
                   Gravitational Collapse},
  journal       = {Phys. Rev. D},
  volume        = {55},
  number        = {2},
  pages         = {795--808},
  year          = {1997},
  doi           = {10.1103/PhysRevD.55.795},
  eprint        = {gr-qc/9608011},
  archivePrefix = {arXiv},
  primaryClass  = {gr-qc},
  note          = {arXiv:gr-qc/9608011}
}

@article{BunchDavies,
  author        = {Bunch, T. S. and Davies, P. C. W.},
  title         = {Quantum Field Theory in {de Sitter} Space: Renormalization
                   by Point Splitting},
  journal       = {Proc. Roy. Soc. Lond. A},
  volume        = {360},
  number        = {1700},
  pages         = {117--134},
  year          = {1978},
  doi           = {10.1098/rspa.1978.0060}
}

@article{HolzheyWilczek,
  author        = {Holzhey, Christoph and Larsen, Finn and Wilczek, Frank},
  title         = {Geometric and Renormalized Entropy in Conformal Field
                   Theory},
  journal       = {Nucl. Phys. B},
  volume        = {424},
  number        = {3},
  pages         = {443--467},
  year          = {1994},
  doi           = {10.1016/0550-3213(94)90402-2},
  eprint        = {hep-th/9403108},
  archivePrefix = {arXiv},
  note          = {arXiv:hep-th/9403108}
}

@article{CalabreseCardy,
  author        = {Calabrese, Pasquale and Cardy, John L.},
  title         = {Entanglement Entropy and Quantum Field Theory},
  journal       = {J. Stat. Mech.},
  volume        = {0406},
  pages         = {P06002},
  year          = {2004},
  doi           = {10.1088/1742-5468/2004/06/P06002},
  eprint        = {hep-th/0405152},
  archivePrefix = {arXiv},
  note          = {arXiv:hep-th/0405152}
}

@article{AlvarezGaumeWitten1984,
  author        = {Alvarez-Gaum{\'e}, Luis and Witten, Edward},
  title         = {Gravitational Anomalies},
  journal       = {Nucl. Phys. B},
  volume        = {234},
  number        = {2},
  pages         = {269--330},
  year          = {1984},
  doi           = {10.1016/0550-3213(84)90066-X}
}

@article{Petz86,
  author        = {Petz, D{\'e}nes},
  title         = {Sufficient Subalgebras and the Relative Entropy of States
                   of a von {N}eumann Algebra},
  journal       = {Commun. Math. Phys.},
  volume        = {105},
  number        = {1},
  pages         = {123--131},
  year          = {1986},
  doi           = {10.1007/BF01212345}
}

@article{Petz88,
  author        = {Petz, D{\'e}nes},
  title         = {Sufficiency of Channels over von {N}eumann Algebras},
  journal       = {Quart. J. Math.},
  volume        = {39},
  number        = {1},
  pages         = {97--108},
  year          = {1988},
  doi           = {10.1093/qmath/39.1.97}
}

@book{WaldGR,
  author        = {Wald, Robert M.},
  title         = {General Relativity},
  publisher     = {University of Chicago Press},
  address       = {Chicago},
  year          = {1984},
  isbn          = {9780226870335}
}

@book{HawkingEllis,
  author        = {Hawking, S. W. and Ellis, G. F. R.},
  title         = {The Large Scale Structure of Space-Time},
  publisher     = {Cambridge University Press},
  address       = {Cambridge},
  year          = {1973},
  doi           = {10.1017/CBO9780511524646}
}

\end{document}